\definecolor{FG}{rgb}{0.0, 0.5, 0.0}
\newtheorem{theorem}{Theorem}
\numberwithin{theorem}{section}
\newtheorem{corollary}[theorem]{Corollary}
\newtheorem{lemma}[theorem]{Lemma}
\newtheorem{proposition}[theorem]{Proposition}
\newtheorem{remark}[theorem]{Remark}
\definecolor{G1}{rgb}{0.0, 0.5, 0.0}
\title{Adding a fecundity-survival trade-off to a discrete population model with maturation delay}
\newcommand{\equalcontrib}{All authors contributed equally to this study.}
\author{Christopher J. Greyson-Gaito\thanks{\equalcontrib} \footnote{McMaster University, ON L8S 4L8, Canada, ORCID: 0000-0001-8716-0290, christopher@greyson-gaito.com}, Sabrina H. Streipert\footnotemark[1] \footnote{University of Pittsburgh, PA 15260, USA, ORCID: 0000-0002-5380-8818, streipert@pitt.edu}, Gail S.K. Wolkowicz\footnotemark[1] \footnote{McMaster University, ON L8S 4L8, Canada, ORCID: 0000-0002-4501-2342, wolkowic@mcmaster.ca}}
\date{\today}
\begin{document}

\maketitle


\begin{abstract}
Although maturation delays are frequently included in population models, researchers rarely account for mortality between birth and maturity. Previous discrete population models have included mortality of immature individuals during the maturation delay finding that increasing the delay decreases the equilibrium population size, eventually leading to extinction. Since maturation delays beyond one breeding cycle are often found in nature, they must also have a benefit leading to a trade-off.  We derive a class of models to explore the trade-off between the benefit  of a longer maturation delay on fecundity due to  larger body sizes at maturity  and the down-side on survival. We examine two scenarios: density independent survival and cohort density dependent survival of immature individuals. For the mature and immature individuals, we consider two different, but popular, survival functions: the Beverton--Holt model and the Ricker model. Across all models, we identify a positive maturation delay that maximizes the population size that we refer to as  the ``optimal maturation delay'' and a critical delay threshold that results in extinction. We also find oscillatory dynamics with the Ricker survival function for certain ranges of maturation delay. Overall, our delay model sets up a useful phenomenological framework to test multiple combinations of trade-offs in parent survival, offspring survival, and reproductive investment.
\end{abstract}

\textbf{Keywords:} maturation delay, discrete delay population models, trade-off, Ricker, Beverton-Holt, global dynamics

\section{Introduction} 

Difference equations are an important and popular modeling framework to explore population dynamics. Reasons include the ease of modeling non-overlapping generations and processes that are discrete in nature such as breeding events, as well as, computational conveniences and the fitting of collected data that is discrete. The simplest difference population models that nevertheless can exhibit complex dynamics take the form $N_{t+1}=F(N_t)$, where $F\colon \mathbb{R} \to \mathbb{R}^+_0=[0,\infty)$ describes the net change in population size  of species $N$ over each breeding cycle. Examples of such models are the well-known Beverton--Holt and Ricker population models. In general, these models assume implicitly that newborn individuals contribute immediately to the next breeding cycle, therefore modeling individuals that reach maturity within one breeding cycle, where the time-intervals are chosen to replicate the length of a breeding cycle. Albeit mathematically convenient, this model assumption is rarely satisfied in nature \cite{buddUniversalPredictionVertebrate}.

One extension of these simple discrete population models is to make the growth and loss rates  age dependent. In that case, an age-structured difference equation model would take the form $\overrightarrow{N}_{t+1}=\overrightarrow{F}(\overrightarrow{N}_t)$, where $\overrightarrow{N}=(N^{(1)},\ldots,N^{(A)})^T\ \in \mathbb{R}^A$ and $N^{(i)}$ represents the $i$-th age class of  species $N$ (see for example \cite{wikanAgeStageStructure2012,cushingIntroduction1998}). However, such models require the collection of   age-dependent data  that is  not always available. Alternatively, as a compromise between the simple model and an age-structured population model,  delay recurrences of the form $N_{t+1}=F(N_t,N_{t-\tau})$ effectively blend the simple scalar form and the vector-valued form \cite{Clark1976, derisoHarvestingStrategiesParameter1980}. Commonly, the delay is either incorporated in the fitness function, resulting in the form  $N_{t+1}=N_tf(N_{t-\tau})$  \cite{liChaosDiscreteDelay2012,sakerOscillationAttractivityDiscrete2007}, or in the reproduction function $N_{t+1}=f_1(N_t)+f_2(N_{t-\tau})$ \cite{Clark1976, panStabilityNeimarkSacker2021,streipertAlternativeDelayedPopulation2021}. The latter assumes that the mature subpopulation at time $t+1$ depends on those individuals that were mature at time $t$ and survived to time $t+1$, modeled by $f_1(N_t)$, and those individuals that were born at time $t-\tau$ and reach maturity at time $t+1$, modeled by $f_2(N_{t-\tau})$. Most single species models, where the delay has been incorporated in the reproduction, have not accounted for the losses that occur between birth and maturity (e.g., \cite{Clark1976,el-morshedyGloballyAttractingFixed2006}).

To address the problem of not accounting for immature individual losses, the discrete population model with delay in the reproduction, introduced in \cite{StWodelay2}, included an immature survival probability. This model describes the mature individuals of a species that reach maturity after several breeding cycles (time units) and has the form:
\begin{equation}\label{eq:SWdelaymodel}
    N_{t+1}=p(N_t)N_t + g(N_{t-\tau})\widetilde{p}(\tau,N_{t-\tau})N_{t-\tau},
\end{equation}
where $p(N_t)$ describes the fraction of already mature individuals $N_t$ that survive to time $t+1$, $g(N_{t-\tau}){N_{t-\tau}}$ models the number of newborns produced right after time $t-\tau$, determined by the mature individuals at time $t-\tau$, and $\widetilde{p}(\tau,N_{t-\tau})$ denotes the fraction of immature individuals that were born at time $t-\tau$ and survived to time $t+1$, when they reach maturity for the first time. Later, in \cite{StWodelay3}, \eqref{eq:SWdelaymodel} was extended by considering a distributed delay with a maturation kernel, $K(s)$, that describes the fraction of individuals that mature $s+1$ time units after they are born. If $K(s)=1$ for $s=\tau\in \mathbb{N}$ and $K(s)=0$ for all $s\neq \tau$, then this model is consistent with \eqref{eq:SWdelaymodel}. Notably, both models, \cite{StWodelay2,StWodelay3} show that, under realistic assumptions, there exists a critical delay threshold, $\tau_c$, such that if the (earliest) maturation exceeds $\tau_c$, then the species goes extinct. Therefore, these studies suggest that natural selection should favour early maturation times, i.e., reducing $\tau$ to avoid extinction. 

However, maturation delays are common across species and can range from 0.06 years for the California vole (\textit{Microtus californicus}) to 25.52 years for the Aldabra giant tortoise (\textit{Alabrachelys gigantea}) \cite{buddUniversalPredictionVertebrate}. Furthermore, there are several examples of variation in maturation delay within a species including earlier maturation of the spotted skink (\textit{Carinascincus ocellatus}) in warmer versus colder sites \cite{augertPlasticityAgeMaturity1993}, and heritable variation of maturation delay in the Atlantic Salmon (\textit{Salmo salar}) \cite{raunsgardVariationPhenotypicPlasticity2024}. Considering that maturation delays are ubiquitous, if all else is equal, increasing a maturation delay leads to extinction, \eqref{eq:SWdelaymodel} must be missing another key aspect. Indeed, the maturation delay (sometimes called {\it age-at-maturity} or {\it size-at-maturity} in the biological literature) is a critical life history trait. In any life history trait, there is an energy allocation trade-off between maintenance, growth, survival, and reproduction, often leading to a benefit that is  traded-off with a negative consequence \cite{zeraPhysiologyLifeHistory2001a}. For example, a shorter larval period in \textit{Drosophila melanogaster} leads to reduced larval viability and diminished adult size \cite{chippindaleExperimentalEvolutionAccelerated1997}. 
In contrast, a longer larval period leads to increased accumulation of lipids and increased conservation of larval reserves in young adults \cite{chippindaleExperimentalEvolutionAccelerated1997}. Consequently, the models in \cite{StWodelay2, StWodelay3}, while incorporating the negative aspect of increased total loss of immature individuals prior to reaching maturity, do not include any benefits of  longer maturation delays.

One benefit of longer maturation delays can be higher fecundity. Generally, fecundity increases with size which is usually postively correlated with age \cite{roff2002life}. In animals specifically, there is usually a threshold of size that must be reached for reproduction to occur \cite{reesGrowthReproductionPopulation1989}. For example, \cite{winemillerPatternsLifehistoryDiversification1992} found a positive association between adult body size with delayed maturation and clutch size across 216 North American fish species. Another example is the green sea turtle (\textit{Chelonia mydas}) that completes most of its growth prior to maturity with a positive relationship between the average number of eggs in a nest and female body size \cite{Phillips2021, Frazer86, Carr89}. If we assume, as commonly done \cite{Spence2017, vonB2}, that size and age are linearly related, then we get the benefit of an increase in fecundity with longer maturation delays. Adding higher fecundity with longer maturation delays is then a promising addition to the above mentioned delayed population models.

Motivated by this relationship between  age, length, and increased fecundity, we start with an attempt at incorporating a trade-off for the maturation delay. Specifically, we extend the discrete population model in \cite{StWodelay2} to include a trade-off, where earlier maturation leads to lower fecundity. We compare the impact of Beverton--Holt and Ricker survival functions and consider, for each of these mature individual survival functions, two different scenarios of survival for immature individuals: i) density-independent survival, and ii) cohort-dependent survival. Overall, we ask how a longer maturation delay that increases total immature losses interacts with its incorporated trade-off of higher fecundity to affect the population dynamics. We find that the trade-off allows for a higher maturation delay compared to no trade-off. We also find within the Ricker framework that a longer maturation delay allows for the existence of oscillatory population levels, prior to extinction. 

\section{Trade-off Model Formulation} 

Our model formulation follows the set-up of \cite{StWodelay2}, where the survival of mature and immature individuals are separated, to allow for an introduction of the delay solely in the reproductive term. More precisely, we consider
\begin{equation}\label{eq:delay2}
    N_{t+1}=p(N_t)N_t + g(N_{t-\tau})\tilde{p}(\tau,N_{t-\tau})N_{t-\tau},
\end{equation}
where $p(N_t)$ describes the fraction of already mature individuals $N_t$ that survive to time $t+1$, $g(N_{t-\tau})$ is the growth rate at time $t-\tau$, determined by the mature individuals at time $t-\tau$, and $\tilde{p}(\tau,N_{t-\tau})$ is the fraction of immature individuals that were born at time $t-\tau$ and survive to time $t+1$, where they reach maturity for the first time. In \cite{StWodelay2}, the authors obtained general results for the dynamics of \eqref{eq:delay2} and formulated, as examples, a delayed Beverton--Holt and a delayed Ricker model. 
For these two specific examples, we consider a constant growth rate, $g(N)\equiv g$, which is consistent with the classical formulations without delay. Instead, in this work, to incorporate a trade-off that involves  a positive effect of a delayed maturation,  $g(N)$ in \eqref{eq:delay2} is replaced by $g(\tau, N)$.

We therefore consider a refinement of \eqref{eq:delay2} that incorporates the positive effect of delay on the growth rate (e.g., number of eggs produced), 

\begin{equation}\label{delay2star}
   N_{t+1}=p(N_t)N_t 
+ \overbrace{g(\tau, N_{t-\tau})\widetilde{p}(\tau,N_{t-\tau})}^{=:m(\tau, N_{t-\tau})}N_{t-\tau},   \quad N_i=N_i^0 \geq 0, \, \, i  \in \{-\tau, -\tau-1, \ldots, 0\}. 
\end{equation}


We assume that $p\in C^1(\mathbb{R}^+_0, [0,1])$,  $\widetilde{p}\in C^1(\mathbb{N}_0\times \mathbb{R}^+_0, [0,1])$, $g\in C^1(\mathbb{N}_0\times \mathbb{R}^+_0, \mathbb{R}^+)$, and $\tau\in \mathbb{N}_0$ represents the maturation delay.   We assume the function  $g(\tau, x)$ is increasing in $\tau$ to account for a benefit in delayed maturation as described below.

We will often require the following assumption
\begin{equation}\label{eq:H}(H) \qquad p(z) \mbox{ is strictly decreasing for } \, z> 0 \mbox{ and }  \lim_{z\to \infty}p(z)=0.
\end{equation}

We address some basic results for this general expression \eqref{delay2star}. The first, regarding the nonnegativity of solutions, immediately follows from the structure of the recurrence.

\begin{proposition}\label{Proppos}
Consider \eqref{delay2star}, for fixed $\tau\in \mathbb{N}$. If $N_i^0=0$ for all $i\in \{-\tau, \ldots, 0\}$, then $N_t=0$ for all $t\geq 0$. Furthermore, if there is at least one $j\in \{-\tau, \ldots, 0\}$ such that $N_j^0>0$, then $N_t>0$ for all $t>\tau$. 
\end{proposition}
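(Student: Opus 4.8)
The plan is to prove both assertions by strong induction on $t$, using only three facts: the non-negativity of $p$, the non-negativity of the initial data in \eqref{Ini}, and the strict positivity of the coefficient $m(\tau)$ in \eqref{EqT1C1}. I would first record that $m(\tau)>0$: from \eqref{eq:mtau}, since $K>0$ and $\tau\ge 1$ we have $be^{-K(\tau+1)}\le be^{-K}<a$ by the standing assumption $a>be^{-K}$, and $\overline p^{\,\tau+1}>0$ because $\overline p\in(0,1)$, so $m(\tau)=\bigl(a-be^{-K(\tau+1)}\bigr)\overline p^{\,\tau+1}>0$.

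For the first claim, assuming $N_i^0=0$ for all $i\in\{-\tau,\dots,0\}$, I would show $N_t=0$ for every $t\ge-\tau$ by strong induction: the indices $\{-\tau,\dots,0\}$ form the base case, and for $t\ge1$ the right-hand side of \eqref{EqT1C1} is $p(N_{t-1})N_{t-1}+m(\tau)N_{t-1-\tau}$ with both $t-1$ and $t-1-\tau$ strictly below $t$ and at least $-\tau$, so the hypothesis forces $N_t=0$. Running the identical induction with ``$=0$'' weakened to ``$\ge0$'' also yields $N_t\ge0$ for all $t\ge-\tau$ whenever the initial data are non-negative; I would flag this immediately, since it guarantees that each $p(N_{t-1})$ is evaluated on its domain $\mathbb{R}^+_0$ and that both summands in \eqref{EqT1C1} are non-negative, and I would use it freely in what follows.

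For the second claim, fix $j\in\{-\tau,\dots,0\}$ with $N_j^0>0$. I would first produce a positive value at some index $t_1\in\{0,\dots,\tau\}$: if $j=0$ take $t_1=0$; if $j<0$, evaluate \eqref{EqT1C1} at $t=j+\tau\ge0$ and drop the non-negative term $p(N_{j+\tau})N_{j+\tau}$ to get $N_{j+\tau+1}\ge m(\tau)N_j>0$, so $t_1:=j+\tau+1\in\{1,\dots,\tau\}$ works. Then I would propagate positivity forward one step at a time: for any $s\ge0$ with $N_s>0$, \eqref{EqT1C1} gives $N_{s+1}=p(N_s)N_s+m(\tau)N_{s-\tau}\ge p(N_s)N_s>0$, where I use $p(N_s)>0$ --- which holds for both survival forms \eqref{eqn:BHmatsurvival} and \eqref{eqn:Rmatsurvival} used here (and for any strictly positive survival fraction). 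Iterating this from $s=t_1$ gives $N_t>0$ for all $t\ge t_1$, and since $t_1\le\tau$ this covers every $t>\tau$.

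The first claim and the non-negativity bookkeeping are routine; the step I expect to be the main obstacle is obtaining strict positivity at \emph{every} $t>\tau$ rather than only at times congruent to $j$ modulo $\tau+1$. The delayed term $m(\tau)N_{t-\tau}$ alone transports positivity only in jumps of length $\tau+1$, so on its own it stays within a single residue class; the remaining residue classes are recovered precisely by the one-step forward propagation, and that is where strict positivity of the mature-survival function $p$ is essential. So the real content is the interplay between a single $(\tau+1)$-step jump --- used once, only to move a possibly negative starting index $j$ into $\{0,\dots,\tau\}$ --- and the one-step propagation coming from the $p(N_t)N_t$ term.
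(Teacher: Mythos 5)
Your proof is correct. The paper does not actually supply a proof of this proposition --- it only remarks that it ``follows immediately from the structure'' of the recurrence --- so your induction is filling a gap rather than paralleling an existing argument, and the argument you give (non-negativity bookkeeping, one $(\tau+1)$-step jump via the delayed term $m(\tau)N_{t-\tau}$ to land a positive value at some $t_1\in\{0,\dots,\tau\}$, then one-step propagation via $p(N_s)N_s$) is the natural one. The one substantive point you raise is genuinely worth flagging: the proposition is stated for general $p\in C^1(\mathbb{R}^+_0,[0,1])$, and for such $p$ the second claim is false as stated --- if $p\equiv 0$ the delayed term alone transports positivity only along the residue class of $j$ modulo $\tau+1$, so $N_t$ vanishes at infinitely many $t>\tau$. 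Strict positivity of $p$ is therefore a hidden hypothesis; it does hold under assumption (H) (a non-negative, strictly decreasing function tending to $0$ cannot vanish at any finite point) and for both concrete survival forms \eqref{eqn:BHmatsurvival} and \eqref{eqn:Rmatsurvival}, so nothing downstream in the paper is affected, but your proof is more careful than the paper's ``immediate'' claim in making this explicit.
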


\begin{proposition}\label{Propbound}
    Consider \eqref{delay2star}, for fixed $\tau\in \mathbb{N}$.  If $\underline{m}:=\mbox{inf}_{x\geq 0}m(\tau, x)>1$, then the solution is unbounded if $N_j^0>0$ for at least one $j\in \{-\tau, \ldots, 0\}$.
\end{proposition}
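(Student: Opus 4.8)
The plan is to exploit the fact that the recurrence \eqref{EqT1C1} has only nonnegative coefficients, so that each term on the right-hand side can be bounded below by dropping the other. Concretely, since $p(N_t)N_t \ge 0$ for all $t$ (by $p\ge 0$ and, once positivity is established, $N_t > 0$), we get the one-sided inequality
\begin{equation*}
N_{t+1} \ge m(\tau)\, N_{t-\tau}\qquad\text{for all }t\ge \tau .
\end{equation*}
By Proposition~\ref{Proppos}, the hypothesis $N_j^0>0$ for some $j$ forces $N_t>0$ for all $t>\tau$; in particular we may pick an index $t_0>\tau$ and have $N_{t_0},N_{t_0+1},\dots,N_{t_0+\tau}$ all strictly positive. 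Set $c:=\min\{N_{t_0},\dots,N_{t_0+\tau}\}>0$.

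First I would iterate the inequality along each of the $\tau+1$ residue classes modulo $\tau+1$ separately. Writing $t = t_0 + r + k(\tau+1)$ with $0\le r\le \tau$, a straightforward induction on $k$ gives
\begin{equation*}
N_{t_0 + r + k(\tau+1)} \ge m(\tau)^{\,k}\, N_{t_0+r} \ge c\, m(\tau)^{\,k}.
\end{equation*}
Since $m(\tau)>1$ by assumption, $m(\tau)^k \to \infty$ as $k\to\infty$, so along each residue class the subsequence tends to $+\infty$; hence the full sequence $(N_t)$ is unbounded. The only mild subtlety is bookkeeping the index shift so that the base cases $N_{t_0+r}$ are genuinely positive, which is exactly what Proposition~\ref{Proppos} supplies once we take $t_0>\tau$; I would state this explicitly to avoid an off-by-one issue at the boundary $t=\tau$ (where the lower bound $N_{t+1}\ge m(\tau)N_{t-\tau}$ still holds, but $N_0,\dots$ need not all be positive if only one initial datum is).

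The main (and only) obstacle is essentially cosmetic: making sure the "drop the $p(N_t)N_t$ term" step is justified, i.e. that $N_t\ge 0$ throughout — this is immediate from the nonnegativity of $p$, $m(\tau)>0$, and the nonnegative initial data, and is already recorded in Proposition~\ref{Proppos}. No growth condition on $p$ or fine structure of $m$ is needed beyond $m(\tau)>1$; the argument is robust and would work verbatim for the general model \eqref{EqmainT1} whenever $m(\tau,N)\ge m(\tau)>1$ uniformly.
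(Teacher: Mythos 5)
Your proof is correct and follows essentially the same route as the paper: drop the nonnegative mature-survival term to get $N_{t+1}\ge m(\tau)N_{t-\tau}$ and let $m(\tau)>1$ drive geometric growth along each residue class modulo $\tau+1$. Your version is in fact slightly more careful than the paper's one-line argument, since you explicitly handle the base-case positivity via Proposition~\ref{Proppos} and the induction on $k$.
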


\begin{proof}
Note that 
$$N_{t+1}=p(N_t)N_t + m(\tau, N_{t-\tau}) N_{t-\tau}>\underline{m} N_{t-\tau}.$$
This implies that $N_{t+1}>\underline{m}^{j\tau}N_{t-j\tau+1}$ so that  $N_t$ is unbounded.
\end{proof}


{
\begin{proposition}\label{Prop:bound}
     Consider \eqref{delay2star}, for fixed $\tau\in \mathbb{N}$. 
    Define $\overline{m}:=\max_{0\leq i\leq \tau}m(\tau,N_{-i})$. 
     Assume  that either ($p(0)+
     \overline{m}\leq 1$  and (H) defined in \eqref{eq:H} holds) or  ($p(0)+\overline{m} > 1$,  $\overline{m}<1$,  and  $\limsup_{x\to\infty} xp(x)$ is finite). 
     Then, all solutions with nonnegative initial  conditions  are bounded. 
 \end{proposition}

 \begin{proof}  
Let  $\overline{N}^0=\max_{0\leq i\leq \tau}\{N_{-i}\}$.
First, assume that  $p(0)+\overline{m}\leq 1$ and (H) holds. Then, $N_{t+1}=p(N_t)N_t+m(\tau, N_{t-\tau})N_{t-\tau}\leq(p(0)+m(\tau, N_{t-\tau}))\max\{N_t,N_{t-\tau}\} \leq \overline{N}^0$,     for all $t\in \mathbb{N}$,  and hence solutions are bounded by $\overline{N}^0$.  

 Next, assume that $p(0)+\overline{m}>1$ and $\limsup_{x\to\infty} xp(x)$ is finite. 
 Then, there exists $M>0$ such that $p(x)x \leq M$ for all $N \in \mathbb{N}$ and 
     $0\leq \overline{N}^0\leq  \frac{M}{1-\overline{m}}$. Then, 
     \begin{align*}
       N_1&= p(N_0)N_0 +m(\tau, N_{-\tau}) N_{-\tau}\leq p(N_0)N_0 +\overline{m}\overline{N}^0 
       \leq M +\overline{m}\frac{M}{1-\overline{m}}=
     \frac{M}{1-\overline{m}}.
 \end{align*}
Inductively, it follows that all solutions are   bounded above by $\frac{M}{1-\overline{m}}$ and by Proposition~\ref{Proppos} they are bounded below by zero.  
 \end{proof}





\subsection{Growth rate}
We   assume that fecundity increases with size, and once maturity is reached individuals do not grow any further, similar to \cite{stearnsEvolutionPhenotypicPlasticity1986}. 
First, we use the following length ($\ell$) and fecundity ($F$) relationship 
\begin{equation}\label{specform}
F(\ell)=A\cdot \, \ell - B,
\end{equation}
where $A, B>0$ are species-specific and typically obtained by fitting \eqref{specform} to available fecundity at length data \cite{Niu,biology12010131}.
Next, we link length to age of individuals using the von Bertalanffy curve \cite{vonB, LatA},
\begin{equation}\label{vonBerta}
\ell (a)=L_{\infty}\left(1-e^{-K(a-a_0)}\right),
\end{equation}
where $\ell(a)$ is the length at age $a$ and $L_\infty>0$ is the maximum length that can be reached by an (average) individual. Thus, the maximum number of eggs is produced if the individual reaches values close to $L_\infty$, but this may also require several years, dependent on the parameter $K, a_0>0$. 
Using the monotone relation to age in \eqref{vonBerta} and the power relation with egg production in \eqref{specform}, we link our maturation delay with the von Bertalanffy curve to obtain the function $g(\tau)$ that sets the average number of eggs a mature individual produces within a breeding cycle:
\begin{equation}\label{eq:growth}
g(\tau)=A L_\infty (1 -  e^{-K(\tau+1-a_0)})-B
=a - b e^{-K(\tau+1)}, \quad \mbox{for}\,\, a,b, K>0.
\end{equation}

To avoid negative or zero egg production even in the case with no maturation delay, we assume that $a>be^{-K}$. This allows for at least some offspring, albeit likely a small number, even in the case of no maturation delay. Note that, for fixed $\tau$, this construction of the growth rate $g$ is consistent with the assumption of a constant growth rate often found in population formulations such as the classical Beverton--Holt and Ricker population models. 
Thus, the models we consider are of the form
\begin{equation}\label{EqmainT1}
N_{t+1}=p(N_t)N_t+m(\tau,N_{t-\tau})N_{t-\tau}, \quad N_i=N_i^0 \geq 0, \, \, i  \in \{-\tau, -\tau-1, \ldots, 0\},
\end{equation}
where 
\begin{equation}\label{eq:mtaugen}
m(\tau,N_{t-\tau})=g(\tau)\widetilde{p}(\tau,N_{t-\tau})
\end{equation}
with $g(\tau)=a-be^{-K(\tau+1)}$ given in \eqref{eq:growth} and $a, b, K>0$, $p\in C^1(\mathbb{R}^+_0, [0,1])$, $\widetilde{p}\in C^1(\mathbb{N}_0\times \mathbb{R}^+_0, [0,1])$.

\subsection{Mature and immature survival}

For the survival function of already mature individuals, we use either the Beverton--Holt function,
\begin{equation} \label{eqn:BHmatsurvival}
    p(N_t)=\frac{1}{1+\alpha+\beta N_t},
\end{equation}
or the Ricker function,
\begin{equation} \label{eqn:Rmatsurvival}
    p(N_t)=e^{-\alpha-\beta N_t},
\end{equation}
with natural death parameter $\alpha>0$ and competition parameter $\beta>0$. The Beverton--Holt survival and Ricker survival are two commonly used forms in fishery sciences \cite{BH, Hilborn, Quinn, Ricker} and are well-studied discrete single species models (e.g.,  \cite{brauer2013mathematical, may2001stability,Keshet}).

For the survival of immature individuals, we consider two scenarios: i) constant survival of immature individuals, and ii) cohort density-dependent survival of immature individuals.

In scenario i), the constant survival of immature individuals,  we assume that the survival of immature individuals has only negligible density effects overall, so that the survival of an individual until it reaches maturity is given by
\begin{equation} \label{eqn:constantimmature}
    \widetilde{p}(\tau, N_{t-\tau})=\widetilde{p}(\tau)=\overline{p}^{\tau+1}
\end{equation}
for some $\overline{p}\in (0,1)$. For this scenario, we establish some general results, independent of the specific form of survival of mature individuals. However, to discuss the global dynamics, we consider two distinct population models of the form \eqref{EqmainT1}, one that considers the Beverton--Holt mature survival \eqref{eqn:BHmatsurvival} and the other one that considers the Ricker mature survival \eqref{eqn:Rmatsurvival}. We refer to these models as ``BH-Constant'' and ``Ricker-Constant''.

In our second scenario ii), we assume that the mature individuals have a negligible density-dependent effect on the survival of immature individuals (i.e. they occupy different ontogenetic niches), so that immature individuals are solely exposed to density-effects within their own cohort. Again, we distinguish between the popular forms of Beverton--Holt and Ricker survival. In the Beverton--Holt formulation of the cohort density-dependent survival of immature individuals, we consider
    $$\mbox{Prob}(\mbox{survival of immature individuals $w_t$ from $t$ to $t+1$}) =s(w_t)=\frac{1}{1+D+Cw_t}, \qquad D, C>0.$$

 Then, as in the derivation of earlier work \cite{StWodelay2}, we obtain the survival of immature individuals $\hat{w}$ over the entire maturation period of length $\tau+1$, as 
 $$\widetilde{p}(\tau, \hat{w})=\frac{D}{D(1+D)^{\tau+1}+((1+D)^{\tau+1}-1)C\hat{w}},$$
 where $\hat{w}$ are the immature individuals that are born at time $t-\tau$ that become mature, for the first time, at time $t+1$. Because $\hat{w}=g(\tau)N_{t-\tau}$, the immature survival function becomes
\begin{equation}\label{eq:BHimmsurvival}
\widetilde{p}(\tau, N_{t-\tau})=
\frac{D}{D(1+D)^{\tau+1}+((1+D)^{\tau+1}-1)Cg(\tau)N_{t-\tau}}.
\end{equation}

In the Ricker formulation of the cohort density-dependent survival of immature individuals, individuals of age $i$ at time $t$, that is of size $J_t^{(i)}$, have survival probability over one year of $e^{-D-CJ_t^{(i)}}$. Thus the survival probability of immature individuals over their maturation period is calculated from the full stage-structured model that takes the form
\begin{equation} \label{eqn:RickerRicker}
   \begin{bmatrix}
    N_{t+1}\\
    J_{t+1}^{(1)}\\
    J_{t+1}^{(2)}\\
    \ldots\\
    J_{t+1}^{(\tau)}\\
    \end{bmatrix} = \begin{bmatrix}
e^{-\alpha-\beta N_t} & 0 & 0 & \ldots & 0 & e^{-D-CJ_{t}^{(\tau)}}\\
 g(\tau)e^{-D-C g(\tau)N_{t}} & 0 & 0 & \ldots & 0 & 0\\
0 & e^{-D-CJ_{t}^{(1)}} & 0 & \ldots & 0 & 0\\
\ldots & \ldots & \ldots & \vdots\vdots\vdots & \ldots & \ldots\\
0 & 0 & 0 & \ldots & e^{-D-CJ_{t}^{(\tau-1)}} & 0\\
\end{bmatrix}
\begin{bmatrix}
    N_t\\
    J_t^{(1)}\\
    J_t^{(2)}\\
    \ldots\\
    J_t^{(\tau)}\\
\end{bmatrix},
\end{equation}
assuming that the natural death factor $D>0$ and density-dependence factor $C>0$ are age independent. These assumptions could however easily be relaxed. Note that in this set-up, the survival probability $\widetilde{p}(\tau, N_{t-\tau})$ has to be obtained computationally by evaluating the composition of survival functions. More precisely, for $f(x)=e^{-D-Cx}x$, $g(\tau)\widetilde{p}(\tau, N_{t-\tau})N_{t-\tau}=f\circ \ldots \circ f\circ f(g(\tau)N_{t-\tau})$. Therefore, the recurrence for the mature individuals at time $t+1$ is
\begin{equation*}
    N_{t+1}=e^{-\alpha-\beta N_t}N_t+f\circ \ldots \circ f\circ f(g(\tau)N_{t-\tau}),
\end{equation*}
for $\alpha, \beta, D, C>0$ and $g(\tau)$ given in \eqref{eq:growth} with $a,b>0$ and $a>be^{-K}$.

In our analysis of scenario ii), the modeling frameworks are always matched between mature and immature survival functions. That is, if we consider a Beverton--Holt survival for immature individuals, then the same applies for mature individuals. Similarly, if we assume that immature individuals follow a Ricker survival, then so do adults, see the first row in \eqref{eqn:RickerRicker}. We refer to the two models in scenario ii) as "BH-BH'' and "Ricker-Ricker''.

\section{Scenario i) \textemdash Constant survival of immature individuals}

Since we are assuming constant survival of immature individuals, in order for the population to avoid extinction and be  bounded, we make the following assumption in this section.
\begin{equation}\label{eq:A}
(A) \qquad \qquad \overline{N}^0:=\max_{i\in \{-\tau, \ldots, 0\}}\{N_i^0\}>0, \qquad \mbox{and}\qquad m(\tau)\in (0,1).
\end{equation}

We begin our analysis by considering \eqref{EqmainT1} in scenario i). That is, we assume that  over one time period the immature individuals have a constant survival probability, $\overline{p}\in (0,1)$ so that $\widetilde{p}(\tau, N_{t-\tau})=\overline{p}^{\tau+1}$. Substituting \eqref{eqn:constantimmature} into \eqref{EqmainT1}, results in a density-independent trade-off for immature individuals and yields
\begin{align}\label{EqT1C1}
& \qquad \qquad N_{t+1}=p(N_t)N_t + \left(a - b e^{-K(\tau+1)}\right)
\overline{p}^{\tau+1} N_{t-\tau}=p(N_t)N_t+m(\tau)N_{t-\tau}\nonumber \\
  & \mbox{with initial conditions}\\
 & \qquad \qquad N_i=N_i^0 \geq 0, \qquad \qquad i  \in \{-\tau, -\tau-1, \ldots, 0\}, \nonumber
\end{align}
where, by \eqref{eq:mtaugen}, 
\begin{equation}\label{eq:mtau}
m(\tau)=m(\tau, N_{t-\tau})=g(\tau)\widetilde{p}(\tau, N_{t-\tau})=\left(a - b e^{-K(\tau+1)}\right)
\overline{p}^{\tau+1}.
\end{equation}
All model parameters $a, b, K$, and  $\overline{p}\in (0,1)$ are positive.

Recall that we assume $a>be^{-K}$.

Note that $N^*_0=0$ is always an equilibrium of \eqref{EqT1C1}. A positive equilibrium $N^*_+$ exists if and only if 
$$1=p(N^*_+)+m(\tau) \qquad \iff \qquad 1-m(\tau)=p(N^*_+).$$
Since, by (H) given in \eqref{eq:H}, $p$ is strictly monotone decreasing and converges  to $0$ and $1-m(\tau)$ is fixed, for fixed $\tau\geq 0$, there exists at most one unique positive equilibrium $N^*_+$. This equilibrium $N^*_+$ only exists if $1-m(\tau)<p(0)$. 

By \cite{Clark1976}, an equilibrium of \eqref{EqT1C1}, $N^*$, is locally asymptotically stable if 
\begin{equation}\label{Clargen}
\left|p(N^*)+N^*p'(N^*)\right|+|m(\tau)|<1.
\end{equation}

A direct consequence is the following stability result. 
\begin{theorem}\label{thm:1}
    Consider \eqref{EqT1C1} for fixed $\tau\in \mathbb{N}$ and assume (H) given in \eqref{eq:H}. If 
    $p(0)<1-m(\tau)$, then $N^*_0=0$ is the only non-negative equilbrium and it is locally asymptotically stable. If $p(0)>1-m(\tau)$, then there exists a unique positive equilibrium $N^*_+$ and $N^*_0$ is unstable. 
\end{theorem}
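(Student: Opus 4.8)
The plan is to verify Theorem~\ref{thm:1} by separately treating the two regimes and, in each, checking existence/uniqueness of equilibria together with the stability criterion \eqref{Clargen} from \cite{Clark1976}. The two cases hinge on comparing $p(0)$ with $1-m(\tau)$, and essentially everything needed has already been assembled in the discussion preceding the theorem, so the proof is mainly a matter of assembling those pieces carefully.

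First I would handle the case $p(0)<1-m(\tau)$. Here I recall that a positive equilibrium $X^*_+$ exists iff $1-m(\tau)=p(X^*_+)$, and since by (H) $p$ is strictly decreasing on $(0,\infty)$ with $\lim_{z\to\infty}p(z)=0$, the range of $p$ on $[0,\infty)$ is $(0,p(0)]$. Since $1-m(\tau)>p(0)$ lies above this range, no positive equilibrium exists, so $X^*_0=0$ is the only non-negative equilibrium. For its stability, I apply \eqref{Clargen} at $X^*=0$: the condition becomes $|p(0)+0\cdot p'(0)|+m(\tau)=p(0)+m(\tau)<1$, which is exactly the hypothesis $p(0)<1-m(\tau)$ (using $m(\tau)\in(0,1)$ so that $|m(\tau)|=m(\tau)$ and $p(0)\ge 0$). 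Hence $X^*_0$ is locally asymptotically stable.

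Next I would handle $p(0)>1-m(\tau)$. Since $p$ is continuous, strictly decreasing, $p(0)>1-m(\tau)>0$, and $p(z)\to 0$, the intermediate value theorem gives a unique $X^*_+>0$ with $p(X^*_+)=1-m(\tau)$; this is precisely the analysis already recorded before the theorem statement, so I would just cite it. For instability of $X^*_0=0$, I would look at the linearization of \eqref{GenEq} at $0$. The characteristic equation associated with $X_{t+1}=p(0)X_t+m(\tau)X_{t-\tau}$ is $\lambda^{\tau+1}=p(0)\lambda^{\tau}+m(\tau)$, equivalently $h(\lambda):=\lambda^{\tau+1}-p(0)\lambda^{\tau}-m(\tau)=0$. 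Evaluating at $\lambda=1$ gives $h(1)=1-p(0)-m(\tau)<0$, while $h(\lambda)\to+\infty$ as $\lambda\to+\infty$, so there is a real root $\lambda_0>1$; this dominant root exceeding $1$ in modulus makes $X^*_0$ unstable. (Alternatively, one can invoke a standard comparison argument: from $N_{t+1}\ge m(\tau)N_{t-\tau}$ plus $p(N_t)\ge p(0)-\varepsilon$ for small $N$, small positive solutions grow, contradicting convergence to $0$.)

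The main obstacle — such as it is — is the instability claim, because \eqref{Clargen} is only a sufficient condition for stability and does not itself certify instability; so I cannot merely negate it. I would therefore need the characteristic-root argument above, or a direct dynamical comparison argument, to conclude that $X^*_0$ repels. Everything else (boundedness is already given by Proposition~\ref{Prop:bound}, and existence/uniqueness of $X^*_+$ is already derived) is routine. A minor point to be careful about is the borderline: the theorem omits the equality case $p(0)=1-m(\tau)$, so I need not address it, but I would make sure the strict inequalities are used consistently when applying \eqref{Clargen}.
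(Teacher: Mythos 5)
Your proposal is correct. The paper gives no separate proof of this theorem---it presents it as ``a direct consequence'' of the Clark criterion \eqref{Clargen} and the preceding discussion on existence and uniqueness of $X^*_+$---and your handling of the first case and of existence/uniqueness in the second case is exactly that argument: the range of $p$ on $[0,\infty)$ is $(0,p(0)]$ by (H), so $1-m(\tau)>p(0)$ precludes a positive equilibrium, and \eqref{Clargen} at $X^*=0$ reduces to $p(0)+m(\tau)<1$. Where you go beyond the paper is the instability of $X^*_0$ when $p(0)>1-m(\tau)$: you correctly observe that \eqref{Clargen} is only a sufficient condition for stability and cannot be negated to yield instability, and you supply the missing step via the characteristic equation $\lambda^{\tau+1}-p(0)\lambda^{\tau}-m(\tau)=0$ of the linearization at $0$, noting $h(1)=1-p(0)-m(\tau)<0$ and $h(\lambda)\to+\infty$ to produce a real root exceeding $1$. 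That is a genuine and worthwhile addition: it makes rigorous a claim the paper asserts without argument, at the cost of only a few lines.
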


In the case, where only the trivial equilibrium exists, we can strengthen this stability result.

\begin{theorem}\label{thm:2}
    Consider \eqref{EqT1C1} for fixed $\tau\in \mathbb{N}$ and assume (H) given in \eqref{eq:H}. If 
    $p(0)<1-m(\tau)$, then $N^*_0=0$ is the only non-negative equilibrium and it is globally asymptotically stable.
\end{theorem}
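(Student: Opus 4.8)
The plan is to show that under the hypothesis $p(0) < 1 - m(\tau)$, every solution converges to $0$. By Proposition~\ref{Prop:bound} (or by Proposition~\ref{Propbound} together with assumption~(A)) the solution $\{N_t\}$ is bounded, so set $L = \limsup_{t\to\infty} N_t$, which is finite and nonnegative; the goal is to prove $L = 0$. First I would record the elementary monotonicity fact that follows from (H): since $p$ is decreasing, for any $N \ge 0$ we have $p(N)N \le p(0)N$, and more usefully, if $N_t$ and $N_{t-\tau}$ are both at most some value $V$, then $N_{t+1} = p(N_t)N_t + m(\tau)N_{t-\tau} \le (p(V) + m(\tau))V$ only when we can bound $p(N_t) \le p(V)$, which needs $N_t \ge V$ — so the clean bound is instead $N_{t+1} \le p(0)N_t + m(\tau)N_{t-\tau} \le (p(0)+m(\tau))\max\{N_t, N_{t-\tau}\}$.

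With $q := p(0) + m(\tau) < 1$, this last inequality does the job directly: I would argue that $\max\{N_{t}, N_{t+1}, \dots, N_{t+\tau}\}$ is a nonincreasing function of $t$, because each new term $N_{t+\tau+1}$ is bounded by $q$ times the max of the previous block of $\tau+1$ terms, hence strictly less than that max (once the max is positive). More carefully, let $M_t = \max_{0 \le i \le \tau} N_{t+i}$. Then for each $j$ with $1 \le j \le \tau+1$ one has $N_{t+\tau+j} \le p(0)N_{t+\tau+j-1} + m(\tau)N_{t+j-1} \le q \cdot M_{t+j-1} \le q\, M_t$ provided we have already controlled the intermediate indices; a short induction on $j$ shows $N_{t+\tau+j} \le q\,M_t$ for all $j = 1, \dots, \tau+1$, hence $M_{t+\tau+1} \le q\, M_t$. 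Iterating, $M_{t+n(\tau+1)} \le q^n M_t \to 0$, so $N_t \to 0$. Since $0$ is the only equilibrium (Theorem~\ref{thm:1}) and it is locally asymptotically stable, this convergence upgrades to global asymptotic stability once attractivity is established; Lyapunov stability itself is already given by Theorem~\ref{thm:1}, so combining the two yields the claim.

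The main obstacle is purely bookkeeping: making the induction over the block of $\tau+1$ consecutive indices airtight, since $N_{t+\tau+j}$ depends on $N_{t+\tau+j-1}$ (inside the current block, already bounded by $q M_t \le M_t$) and on $N_{t+j-1}$ (in the previous block, bounded by $M_t$). One must check that $p(0)N_{t+\tau+j-1} + m(\tau)N_{t+j-1} \le p(0)(q M_t) + m(\tau) M_t \le (p(0)+m(\tau))M_t = q M_t$, which is valid because $q \le 1$; so the induction closes cleanly. An alternative, slightly slicker route avoids the block structure: define $L = \limsup N_t$, pass to a subsequence $N_{t_k+1} \to L$, and use $N_{t_k+1} \le p(0)N_{t_k} + m(\tau)N_{t_k-\tau} \le q \limsup N_t + o(1)$ to get $L \le qL$, forcing $L = 0$. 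I would likely present this $\limsup$ argument as the cleanest version, citing boundedness from Proposition~\ref{Prop:bound} and local stability from Theorem~\ref{thm:1}, and remark that the hypothesis $p(0) < 1 - m(\tau)$ is exactly what makes $q < 1$.
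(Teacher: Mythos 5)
Your proposal is correct, and its heart is identical to the paper's: both reduce the theorem to the single inequality $X_{t+1}\le\bigl(p(0)+m(\tau)\bigr)\max\{X_t,X_{t-\tau}\}$ with $q:=p(0)+m(\tau)<1$, which is exactly where the hypothesis $p(0)<1-m(\tau)$ enters. The difference is in how that inequality is converted into global asymptotic stability. The paper stops there and invokes Theorem~2 of \cite{LIZ2002}, a general result on difference inequalities of the form $x_{t+1}\le b\max\{x_t,x_{t-\tau}\}$ with $b<1$; your proof instead establishes the conclusion from scratch via the block maxima $M_t=\max_{0\le i\le\tau}N_{t+i}$, showing by a short induction over a window of $\tau+1$ indices that $M_{t+\tau+1}\le q\,M_t$, hence $M_{t+n(\tau+1)}\le q^nM_t\to 0$. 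Your induction closes correctly (the intermediate terms satisfy $N_{t+\tau+j-1}\le qM_t\le M_t$, so the bound $p(0)M_t+m(\tau)M_t=qM_t$ is preserved), and the geometric decay of $M_t$ yields both attractivity and Lyapunov stability, so the appeal to Theorem~\ref{thm:1} for local stability is not even needed. What your route buys is a self-contained, elementary proof that also delivers an explicit geometric convergence rate and does not presuppose boundedness; what the paper's route buys is brevity. Your alternative $\limsup$ argument is also sound, though it does genuinely require the boundedness from Proposition~\ref{Prop:bound} to guarantee $L<\infty$, so the block-maximum version is the cleaner of your two options.
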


\begin{proof}
 By assumption, $b:=p(0)+m(\tau)<1$. Since, 
\begin{align*}
    N_{t+1}&=F(N_t,N_{t-\tau})=p(N_t)N_t+m(\tau)N_{t-\tau}\stackrel{(H)}{\leq}p(0)N_t+m(\tau)N_{t-\tau}\\
    &\leq \left(p(0)+m(\tau)\right)\max\{N_t, N_{t-\tau}\}=b\max\{N_t, N_{t-\tau}\}.
\end{align*}

Theorems \ref{thm:1} and \cite[Theorem~2]{LIZ2002} imply the global asymptotic stability of $N^*_0=0$.  
\end{proof}

We now consider the special case, with $m(\tau)$ given by \eqref{eq:mtau}. In that case, 
$$m'(\tau)= \overline{p}^{\tau+1} \left(b K e^{-K (\tau+1)} +\log(\overline{p}) \left(a -be^{-K(\tau+1)}\right) \right).$$
For sufficiently large $\tau$, $b K e^{-K (\tau+1)} +\log(\overline{p}) \left(a -be^{-K(\tau+1)}\right)\approx \log(\overline{p})a<0$. Thus, for sufficiently large $\tau$, $m(\tau)$ is decreasing in $\tau$. Then, as $\tau\to \infty$, $m(\tau)\to 0$.    

\begin{lemma}\label{lem:1}
    Consider \eqref{EqT1C1} and assume (H) given in \eqref{eq:H}.  Let $m(\tau)$ be given  by \eqref{eq:mtau}. Then, there exists a critical delay $\tau_c\geq 0$ such that $N^*_0=0$ is globally asymptotically stable for all $\tau\geq \tau_c$. 
\end{lemma}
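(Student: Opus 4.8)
The plan is to combine the monotonicity analysis of $m(\tau)$ just carried out with the stability dichotomy of Theorem~\ref{thm:1}. Recall that Theorem~\ref{thm:1} shows $X^*_0=0$ is locally asymptotically stable precisely when $p(0)<1-m(\tau)$, i.e.\ when $m(\tau)<1-p(0)$. So it suffices to show that $m(\tau)<1-p(0)$ holds for all $\tau$ beyond some threshold $\tau_c$. Since $p(0)\in(0,1]$ by assumption, $1-p(0)\geq 0$; and from the computation preceding the lemma, $m(\tau)\to 0$ as $\tau\to\infty$. If $p(0)<1$ strictly, then $1-p(0)>0$, and convergence of $m(\tau)$ to $0$ immediately gives the existence of $\tau_c$ such that $m(\tau)<1-p(0)$ for all $\tau\geq\tau_c$, which by Theorem~\ref{thm:1} yields local asymptotic stability of $X^*_0=0$.

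First I would make precise the limit $m(\tau)\to 0$: from \eqref{eq:mtau}, $m(\tau)=(a-be^{-K(\tau+1)})\overline{p}^{\tau+1}$, and since $a-be^{-K(\tau+1)}\leq a$ is bounded and $\overline{p}^{\tau+1}\to 0$ because $\overline{p}\in(0,1)$, we get $0\leq m(\tau)\leq a\,\overline{p}^{\tau+1}\to 0$. (Here I am also using the standing assumption $a>be^{-K}\geq be^{-K(\tau+1)}$, which guarantees $m(\tau)>0$, so $m(\tau)$ lies in $(0,1)$ for large $\tau$ as required by the model's admissibility condition too.) Then I would set, for instance, $\tau_c=\min\{\tau\in\mathbb{N}:\ m(\tau)<1-p(0)\}$, which is well defined and finite by the limit, and invoke Theorem~\ref{thm:1}: for every $\tau\geq\tau_c$, monotonicity of $m$ is not even needed in the tail — what is needed is just that $m(\tau)<1-p(0)$, and this is why I would phrase $\tau_c$ as the first index past which the bound $a\,\overline{p}^{\tau+1}<1-p(0)$ holds, since $a\,\overline{p}^{\tau+1}$ is genuinely decreasing and this gives monotone validity of the inequality for all larger $\tau$ at once.

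The one subtlety to address is the degenerate case $p(0)=1$ (permitted since $p$ maps into $[0,1]$): then $1-p(0)=0$ and the condition $m(\tau)<1-p(0)=0$ can never hold because $m(\tau)>0$. In that case the relevant reading of Theorem~\ref{thm:1} is that $p(0)>1-m(\tau)$ always holds (as $m(\tau)>0$), so a positive equilibrium always exists and $X^*_0$ is unstable — so the lemma as stated would be false. I expect the intended hypothesis is $p(0)<1$ (true for both \eqref{eqn:BHmatsurvival} with $\alpha>0$, where $p(0)=1/(1+\alpha)<1$, and \eqref{eqn:Rmatsurvival} with $\alpha>0$, where $p(0)=e^{-\alpha}<1$), and I would either add this as an explicit assumption or note that assumption (H) together with $p$ not identically equal to $1$ forces $p(0)<1$ once we also know $\lim_{z\to\infty}p(z)=0$ and $p$ strictly decreasing — actually strict decrease from $z=0$ already gives $p(z)<p(0)$ for $z>0$ but not $p(0)<1$ by itself, so the cleanest fix is to state $p(0)<1$.

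The main obstacle is therefore not analytic depth — the argument is a two-line application of $m(\tau)\to 0$ plus Theorem~\ref{thm:1} — but rather correctly pinning down the hypothesis under which the statement is true, namely $p(0)<1$ (equivalently, strictly positive baseline adult mortality $\alpha>0$), and noting that $\tau_c$ can be taken to be the smallest $\tau$ with $a\,\overline{p}^{\tau+1}<1-p(0)$, so that the inequality — and hence local asymptotic stability of the extinction equilibrium — persists for all $\tau\geq\tau_c$.
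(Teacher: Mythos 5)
Your proof is correct and follows essentially the same route as the paper's: both deduce from $m(\tau)\to 0$ (the paper via eventual monotonicity of $m$, you via the cleaner bound $m(\tau)\le a\,\overline{p}^{\tau+1}$) that $p(0)<1-m(\tau)$ holds for all sufficiently large $\tau$, and then invoke Theorem~\ref{thm:1}. Your remark that the statement implicitly requires $p(0)<1$ is a fair observation that the paper's one-line proof also glosses over, but it does not change the substance of the argument, since $p(0)<1$ holds in both concrete survival models with $\alpha>0$.
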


\begin{proof}
By the above discussion on $m$, $m$ is eventually decreasing and $\lim_{\tau\to \infty}m(\tau)=0$. Thus, there exists $\tau_c$ such that $p(0)<1-m(\tau_c)$ and $p(0)<1-m(\tau)$ for all $\tau\geq \tau_c$. The claim then follows by Theorem~\ref{thm:2}
\end{proof}

\begin{theorem}\label{thm:NstarLAS1}
    Consider \eqref{EqT1C1} for fixed $\tau\in \mathbb{N}$ and assume (H) given in \eqref{eq:H} and  $p(0)>1-m(\tau)$. Then, there exists $\delta>0$ such that if $p(0)\in (1-m(\tau),1-m(\tau)+\delta)$, then the unique positive equilibrium $N^*_+$ is locally asymptotically stable. 
\end{theorem}

The proof is provided in Appendix~\ref{A:NstarLAS1}.

\subsection{Special Case: Beverton--Holt survival for mature individuals}

We now consider \eqref{EqT1C1} with $m(\tau)$ given in \eqref{eq:mtau} and assume that mature individuals follow the Beverton--Holt survival \eqref{eqn:BHmatsurvival}. That is, we consider 
\begin{equation}\label{EqT1BH}
N_{t+1}=F(N_t, N_{t-\tau})=\frac{1}{1+\alpha+\beta N_t}N_t + \left(a - b e^{-K(\tau+1)}\right)
\overline{p}^{\tau+1} N_{t-\tau}, \quad  N_i=N_i^0 \geq 0, \, i  \in \{-\tau, -\tau-1, \ldots, 0\}
\end{equation}
for $\alpha, \beta>0$. Because $p(N_t)$ given in \eqref{eqn:BHmatsurvival} satisfies (H) given in \eqref{eq:H} and the conditions for the application of Propositions~\ref{Proppos}, \ref{Propbound}, \ref{Prop:bound}, Lemma~\ref{lem:1}, and Theorems~\ref{thm:1}--\ref{thm:NstarLAS1} are satisfied, their corresponding results apply for this special case. Furthermore, $N^*_0=0$ is an equilibrium of \eqref{EqT1BH} and  a positive equilibrium $N^*_+$ satisfies
\begin{equation}\label{T1C1Nstar}
 N^*_+ =\frac{(1+\alpha) m(\tau)-\alpha}{\beta(1 -m(\tau))}.  
\end{equation}
 This, uniquely, determines the positive equilibrium that exists if and only if $(1+\alpha)m(\tau)>\alpha$, i.e. $p(0)>1-m(\tau)$.

Note that we can understand the positive equilibrium $N^*_+$ as a function of $\tau$, $N^*_+(\tau)$. Then, for given model parameters,  a positive equilibrium $N^*_+=N^*_+(\tau)$ only exists for those $\tau\in \mathbb{N}$, where $(1+\alpha)m(\tau)>\alpha$, i.e.,
\begin{equation}\label{condNstar}
\left(a - b e^{-K(\tau+1)}\right)
\overline{p}^{\tau+1}>\frac{\alpha}{1+\alpha}.
\end{equation}

Since the biological interpretation of the left-hand side is the fraction of offspring (per adult individual) that managed to survive to reach maturity, the above states that if the offspring (per adult individual) that reach maturity exceed the lower threshold of $\frac{\alpha}{1+\alpha}$, then there exists a unique positive equilibrium.

\begin{theorem}\label{Nstarex}
Consider \eqref{EqT1BH} and let $P:=\ln(1/\overline{p})>0$. 
If $$ \left(\frac{aP}{b(P+K)}\right)^{\frac{P}{K}}\frac{aK}{P+K}\leq \frac{\alpha}{1+\alpha},$$
then $N^*_+(\tau)$ is biologically irrelevant for any $\tau\in \mathbb{N}$. 
    Instead, if 
$$ \left(\frac{aP}{b(P+K)}\right)^{\frac{P}{K}}\frac{aK}{P+K}>\frac{\alpha}{1+\alpha}$$
then there exists $\tau_1<\tau_2$ such that $N^*_+(\tau)>0$ if and only if $\tau\in (\tau_1,\tau_2)$.  
Furthermore, if $m(0)<\frac{\alpha}{1+\alpha}$, then 
 $\tau_1>0$ and $N^*_+(\tau)>0$ if and only if $\tau\in (\tau_1,\tau_2)\subset \mathbb{R}^+$. Instead, if $m(0)\geq 0$, then $\tau_1\leq 0$ and $N^*_+(\tau)>0$ for $\tau\in (0,\tau_2)$. 
\end{theorem}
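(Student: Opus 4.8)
The plan is to reduce the statement to a one–variable analysis of the map $\tau\mapsto m(\tau)$ and exploit its unimodality. By \eqref{condNstar} (equivalently, by the sentence preceding the theorem), for fixed parameters the equilibrium $N^*_+(\tau)$ from \eqref{T1C1Nstar} is positive precisely when
\[
m(\tau)=\bigl(a-be^{-K(\tau+1)}\bigr)\overline{p}^{\,\tau+1}>\frac{\alpha}{1+\alpha},
\]
so the whole theorem is a statement about the super-level set $\{\tau:\ m(\tau)>\alpha/(1+\alpha)\}$. Writing $P=\ln(1/\overline p)>0$ and setting $s:=\tau+1$, we have $m(\tau)=\psi(s)$ with $\psi(s)=ae^{-Ps}-be^{-(P+K)s}$, which I will study on all of $\mathbb{R}$ (the biologically relevant range being $s\ge 1$).

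First I would establish unimodality of $\psi$. Differentiating, $\psi'(s)=e^{-(P+K)s}\bigl(b(P+K)-aPe^{Ks}\bigr)$, which changes sign exactly once, from $+$ to $-$, at $s^\ast=\tfrac1K\ln\tfrac{b(P+K)}{aP}$; hence $\psi$ is strictly increasing on $(-\infty,s^\ast)$, strictly decreasing on $(s^\ast,\infty)$, and attains its unique global maximum at $s^\ast$. Using $e^{-Ks^\ast}=\tfrac{aP}{b(P+K)}$ and $e^{-Ps^\ast}=(e^{-Ks^\ast})^{P/K}$, a short computation gives
\[
\psi(s^\ast)=\Bigl(\tfrac{aP}{b(P+K)}\Bigr)^{P/K}\tfrac{aK}{P+K}=:M,
\]
which is exactly the quantity appearing in the hypotheses. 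I would also record the boundary behaviour $\psi(s)\to 0^+$ as $s\to+\infty$ (the $ae^{-Ps}$ term dominates) and $\psi(s)\to-\infty$ as $s\to-\infty$ (the $-be^{-(P+K)s}$ term dominates), and note that the standing assumption $a>be^{-K}$ is equivalent to $\psi(1)=m(0)>0$.

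The conclusions then follow from the intermediate value theorem. If $M\le\alpha/(1+\alpha)$, then $\psi(s)\le M\le\alpha/(1+\alpha)$ for all $s$ (with equality only possibly at $s^\ast$), so $m(\tau)>\alpha/(1+\alpha)$ holds for no $\tau$, i.e.\ $N^*_+(\tau)$ is never biologically relevant. If $M>\alpha/(1+\alpha)$, then $\psi(s^\ast)>\alpha/(1+\alpha)$ while $\psi$ falls below $\alpha/(1+\alpha)$ on both sides; strict monotonicity on each side of $s^\ast$ yields unique points $s_1<s^\ast<s_2$ with $\psi(s_1)=\psi(s_2)=\alpha/(1+\alpha)$ and $\{\psi>\alpha/(1+\alpha)\}=(s_1,s_2)$. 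Setting $\tau_i:=s_i-1$ gives $\tau_1<\tau_2$ with $N^*_+(\tau)>0\iff\tau\in(\tau_1,\tau_2)$. For the endpoint refinement I would compare $m(0)=\psi(1)$ with the threshold: if $m(0)\ge\alpha/(1+\alpha)$ then $1\in[s_1,s_2]$, hence $\tau_1=s_1-1\le 0$ and $N^*_+(\tau)>0$ for $\tau\in(0,\tau_2)$; if $m(0)<\alpha/(1+\alpha)$ then $1\notin[s_1,s_2]$, and in the regime where the maximizer satisfies $\tau^\ast:=s^\ast-1\ge 0$ necessarily $1<s_1$, so $\tau_1>0$ and $(\tau_1,\tau_2)\subset\mathbb{R}^+$.

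The routine part is the calculus (monotonicity of $\psi$, evaluation of $\psi(s^\ast)$, the two limits). The one place that needs care is the endpoint-sign step: one must locate the maximizer $\tau^\ast=s^\ast-1$ relative to $0$ — equivalently compare $b(P+K)$ with $aPe^{K}$ — and track which of the two roots $s_1,s_2$ survives the restriction $\tau\ge 0$; this is where the hypothesis $a>be^{-K}$ (which precludes $m(0)\le 0$) and the position of the peak are used to pin down the sign of $\tau_1$, and I expect this bookkeeping to be the main obstacle.
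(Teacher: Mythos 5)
Your proof is correct and follows essentially the same route as the paper's: the paper substitutes $x=e^{-(\tau+1)}$ and studies $S(x)=ax^{P}-bx^{P+K}-\frac{\alpha}{1+\alpha}$, finds the unique critical point $x_c=\left(\frac{aP}{b(P+K)}\right)^{1/K}$, verifies it is a maximum with value $\left(\frac{aP}{b(P+K)}\right)^{P/K}\frac{aK}{P+K}-\frac{\alpha}{1+\alpha}$, and reads off the interval $(\tau_1,\tau_2)$ from unimodality together with the sign of $S(e^{-1})=m(0)-\frac{\alpha}{1+\alpha}$ — which is exactly your argument with $s=\tau+1$ replaced by $x=e^{-s}$. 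The one step you flag as delicate (when $m(0)<\frac{\alpha}{1+\alpha}$, ruling out that the entire super-level interval lies at negative $\tau$, which requires locating the maximizer relative to $\tau=0$) is glossed over in the paper's proof as well, so your version is, if anything, the more careful of the two on that point.
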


The proof of Theorem~\ref{Nstarex} is given in Appendix~\ref{Thm:Nstarex}. Theorem~\ref{Nstarex} suggests already that some delay may indeed be beneficial, especially when $m(0)<\frac{\alpha}{1+\alpha}$. Since, in that case, no positive equilibrium exists for $\tau\in [0,\tau_1]$ and only once the delay increases, a positive equilibrium population level exists.

Understanding the positive equilibrium, $N^*_+$, if it exists, as a function of $\tau$ also allows us to investigate the behavior (not only the existence) with respect to changing $\tau$. This is addressed in Theorem~\ref{ThmexNstar} with a proof provided in Appendix~\ref{proofThmexNstar}. 

\begin{theorem}\label{ThmexNstar}
Consider \eqref{EqT1BH} with model assumption (A) given in \eqref{eq:A}. Assume that the positive equilibrium $N^*_+(\tau)$ exists. Then, its value is increasing for $\tau<\widehat{\tau}$ and decreasing for $\tau>\widehat{\tau}$, where 
\begin{equation*}
\widehat{\tau}:=\frac{1}{K}\ln\left( \frac{bP+bK}{ae^{K}P}  \right)>0, \qquad \mbox{where} \, \, P:=\ln(1/\overline{p})>0.
\end{equation*}
\end{theorem}

Note that the case of $\tau=\widehat{\tau}$ is unlikely, since our recurrence in \eqref{EqT1BH} requires $\tau\in \mathbb{N}$ but $\widehat{\tau}\in \mathbb{R}_+$.

A graphical illustration of the behavior of $N^*_+(\tau)$ with respect to the maturation delay $\tau$, is depicted in Fig.~\ref{fig:Nstartau}. It clearly shows that there exist two critical values of delay. If the delay is too small, then the reproductive advantage is too small to secure the survival of the species. Instead, if the delay is too large, then too few individuals survive to maturity to secure the survival of the species. Only if the delay is between these two critical delay thresholds is the trade-off between reproductive strength and survival sufficiently balanced to secure the survival of the species. In fact, we see in Fig.~\ref{fig:Nstartau} that the positive equilibrium increases and then decreases, indicating an optimal delay for the trade-off problem. In case of an evolving species, one may predict that the species evolves towards the delay that results in the largest abundance level $N^*(\tau)$.

\begin{figure}[h!]
    \centering
    \includegraphics[width=0.5\linewidth]{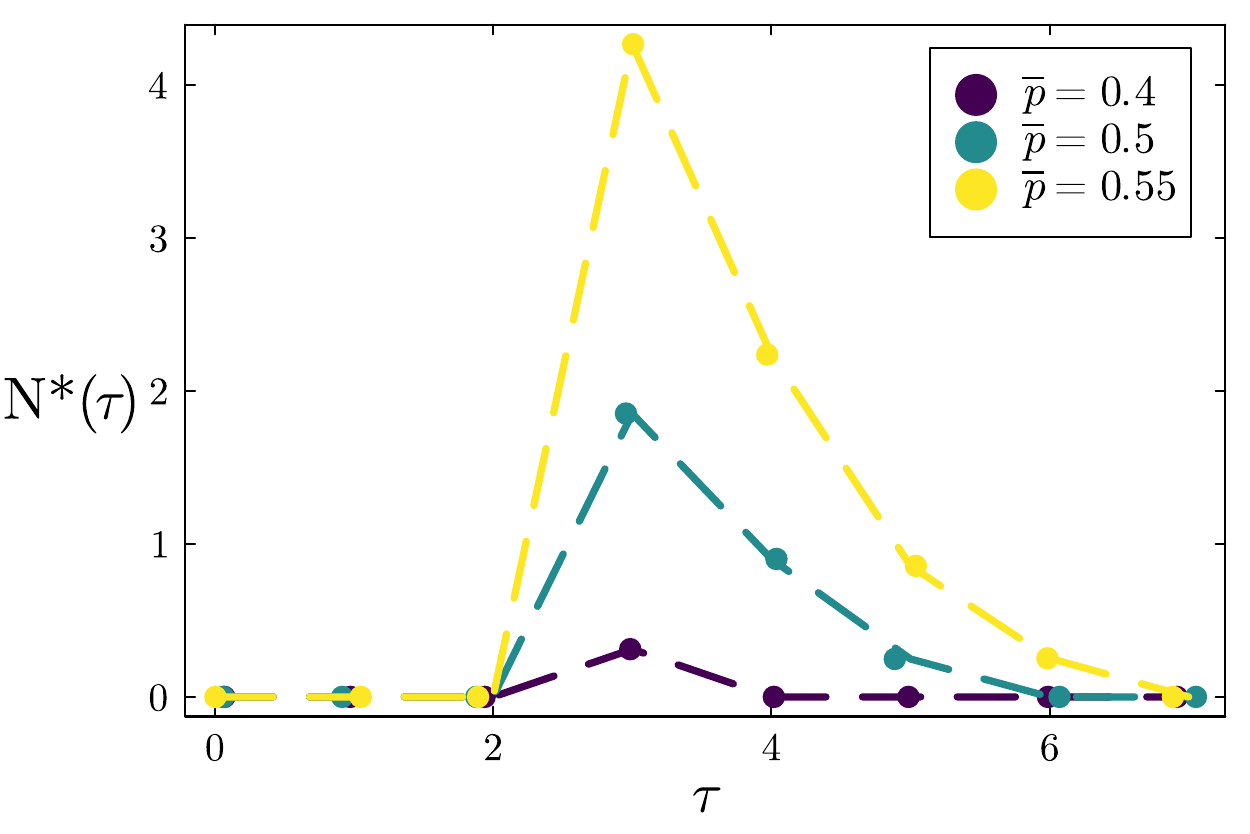}
    \caption{Dependence of $N^*$ on $\tau \in \mathbb{N}^+_0$ for different values of $\overline{p}\in (0,1)$. Parameter values are $K=1, b=200, a=10, \alpha=0.1, \beta=0.3$. $N^*(\tau)=\max\{N^*_0=0,N^*_+(\tau)\}$.}
    \label{fig:Nstartau}
\end{figure}

\begin{remark}
    The condition in Theorem~\ref{ThmexNstar} on $\tau$ could have been equivalently formulated with respect to the survival probability of immature individuals $\overline{p}$. That is, 
    \begin{equation*}
    \frac{\partial N_+^*}{\partial \tau}
\, \, \begin{cases} 
<0 & \overline{p}<\overline{p}_c(\tau)\\
=0 & \overline{p}=\overline{p}_c(\tau)\\
>0& \overline{p}>\overline{p}_c(\tau)
    \end{cases}\qquad \qquad \mbox{for}\qquad \overline{p}_c:=\exp\left\{\frac{-bK}{ae^{K(\tau+1)}-b}\right\}.
    \end{equation*}
Thus, the smaller the survival probability of immature individuals per unit-interval,  $\overline{p}$, the more likely it is that $\frac{\partial N_+^*}{\partial \tau}<0$, i.e., $N_+^*$ is decreasing as a function of $\tau$ with, by \eqref{T1C1Nstar}, $\lim_{\tau\to \infty}N_+^*(\tau)=\frac{-\alpha}{\beta}<0$ implying that it surpasses the trivial equilibrium, likely resulting in a transcritical bifurcation. 

\end{remark}

To study the local asymptotic stability, we recall that, by \cite{Clark1976}, an equilibrium $N^*\geq 0$ is  local asymptotically stability if 
\begin{equation}\label{LASClark}
1>\left|\frac{\partial F(N_t,N_{t-\tau})}{\partial N_t}\right| +\left|\frac{\partial F(N_t,N_{t-\tau})}{\partial N_{t-\tau}}\right|= Q(N^*)+m(\tau),
\end{equation}
where $m(\tau)$ is given by \eqref{eq:mtau} and
\begin{equation*}
Q(N^*)=\frac{1 + \alpha}{(1 + \alpha + \beta N^*)^2}.
\end{equation*}
This immediately implies that, since $Q(0)=\frac{1}{1+\alpha}$, that if $m(0)+\frac{1}{1+\alpha}<1$ (i.e., $m(0)<\frac{\alpha}{1+\alpha}$), then $N^*_0$ is locally asymptotically stable. In fact, we can strengthen this result.

\begin{theorem}\label{thm:N0GAS}
    Consider \eqref{EqT1BH}, for fixed $\tau\in \mathbb{N}_0$.   
 If $(1+\alpha)m(\tau)<\alpha$, then the only nonnegative equilibrium is $N^*_0$ and it is globally asymptotically stable for non-negative initial conditions. 
\end{theorem}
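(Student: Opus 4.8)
The plan is to observe that Theorem~\ref{thm:N0GAS} is nothing but the Beverton--Holt instance of the general result Theorem~\ref{thm:2}, so the proof reduces to checking that the hypotheses line up and that no positive equilibrium can exist under the stated condition.

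First I would rewrite the threshold. For the Beverton--Holt mature survival \eqref{eqn:BHmatsurvival} one has $p(0)=\tfrac{1}{1+\alpha}$, and hence $(1+\alpha)m(\tau)<\alpha$ is equivalent to $m(\tau)<\tfrac{\alpha}{1+\alpha}$, i.e. to $p(0)+m(\tau)<1$, i.e. to $p(0)<1-m(\tau)$ --- precisely the hypothesis of Theorem~\ref{thm:2}. Next, for the equilibrium count: by \eqref{T1C1Nstar} a positive equilibrium $N^*_+$ exists if and only if $(1+\alpha)m(\tau)>\alpha$; since we are assuming the strict reverse inequality, $N^*_0=0$ is the only nonnegative equilibrium.

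It then remains to upgrade local to global attractivity. The function $p(N)=\tfrac{1}{1+\alpha+\beta N}$ is strictly decreasing on $(0,\infty)$ with $\lim_{N\to\infty}p(N)=0$, so assumption (H) holds and \eqref{EqT1BH} is a special case of \eqref{GenEq} with $m(\tau)$ as in \eqref{eq:mtau}; thus Theorem~\ref{thm:2} applies directly and yields the global asymptotic stability of $N^*_0$. If one prefers a self-contained argument, it suffices to mimic the proof of Theorem~\ref{thm:2}: from $p(N_t)\le p(0)$ and $m(\tau)>0$ one gets $N_{t+1}\le\bigl(p(0)+m(\tau)\bigr)\max\{N_t,N_{t-\tau}\}$ with contraction factor $p(0)+m(\tau)<1$ (this inequality is uniform and covers also the degenerate case $\tau=0$), and then Theorem~2 in \cite{LIZ2002} gives $N_t\to 0$ for every admissible initial condition \eqref{Ini}.

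I do not anticipate a genuine obstacle: the content is essentially bookkeeping. The only two points needing a line of care are the algebraic equivalence of the two forms of the threshold ($(1+\alpha)m(\tau)<\alpha$ versus $p(0)+m(\tau)<1$) and the verification that the Beverton--Holt $p$ satisfies (H), which is what licenses invoking Theorem~\ref{thm:2} (and, underneath it, the comparison result from \cite{LIZ2002}); non-negativity and boundedness of solutions have already been secured by Propositions~\ref{Proppos}--\ref{Prop:bound}.
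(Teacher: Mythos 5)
Your proposal is correct and follows essentially the same route as the paper: the paper's own proof of Theorem~\ref{thm:N0GAS} simply repeats the contraction argument of Theorem~\ref{thm:2} with $p(0)=\tfrac{1}{1+\alpha}$, bounding $N_{t+1}\le\bigl(p(0)+m(\tau)\bigr)\max\{N_t,N_{t-\tau}\}$ and invoking Theorem~2 of \cite{LIZ2002}, which is exactly your self-contained variant. Your observation that one may instead cite Theorem~\ref{thm:2} directly after verifying (H) and the threshold equivalence is just a cleaner packaging of the identical argument.
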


\begin{proof}
By assumption, $b:=Q(0)+m(\tau)=\frac
{1}{1+\alpha}+m(\tau)<\frac{1}{1+\alpha}+\frac{\alpha}{1+\alpha}=1$. Since, 
\begin{align*}
    N_{t+1}&=F(N_t,N_{t-\tau})=Q(N_t)N_t+m(\tau)N_{t-\tau}\stackrel{(H)}{\leq} Q(0)N_t+m(\tau)n_{t-\tau}\\
    &\leq \left(Q(0)+m(\tau)\right)\max\{N_t, N_{t-\tau}\}=b\max\{N_t, N_{t-\tau}\}.
\end{align*}

Equation \eqref{LASClark} and \cite[Theorem~2]{LIZ2002} imply the global asymptotic stability of $N^*_0=0$.  
    
\end{proof}

\begin{theorem}\label{thm:NstarLAS}
 Consider \eqref{EqT1BH}, for fixed $\tau\in \mathbb{N}_0$, with model assumption (A) given in \eqref{eq:A}. 
 If $(1+\alpha)m(\tau)>\alpha$, then $N^*_+>0$ given in \eqref{T1C1Nstar} exists and is locally asymptotically stable, while $N^*_0$ is unstable. 
\end{theorem}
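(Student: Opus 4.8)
The plan is to dispatch the three claims in Theorem~\ref{thm:NstarLAS} separately: existence of $N^*_+$, instability of $N^*_0$, and local asymptotic stability of $N^*_+$. The first two are essentially immediate from results already in hand, and the third reduces to one short algebraic identity.

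\emph{Existence.} The Beverton--Holt mature survival $p(N)=\frac{1}{1+\alpha+\beta N}$ is strictly decreasing on $[0,\infty)$ with $\lim_{N\to\infty}p(N)=0$, so assumption (H) holds for \eqref{EqT1BH} and the discussion preceding Theorem~\ref{thm:1} applies. A positive equilibrium exists if and only if $1-m(\tau)<p(0)=\frac{1}{1+\alpha}$, which rearranges to $(1+\alpha)m(\tau)>\alpha$; in that case it is unique and equals the expression in \eqref{T1C1Nstar}. Since $m(\tau)\in(0,1)$ by (A), the denominator $\beta(1-m(\tau))$ is positive, and the numerator $(1+\alpha)m(\tau)-\alpha$ is positive precisely under the same inequality, so indeed $N^*_+>0$.

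\emph{Instability of $N^*_0$.} This is the second conclusion of Theorem~\ref{thm:1} applied to \eqref{EqT1BH}, since $p(0)=\frac{1}{1+\alpha}>1-m(\tau)$ is equivalent to $(1+\alpha)m(\tau)>\alpha$. (If one prefers to argue directly and to cover $\tau=0$: the linearization at $0$ has characteristic polynomial $h(\lambda)=\lambda^{\tau+1}-\frac{1}{1+\alpha}\lambda^{\tau}-m(\tau)$, with $h(1)=\frac{\alpha}{1+\alpha}-m(\tau)<0$ and $h(\lambda)\to+\infty$ as $\lambda\to\infty$, so $h$ has a real root greater than $1$.)

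\emph{Local asymptotic stability of $N^*_+$.} I would invoke the Clark criterion \eqref{LASClark}, which for \eqref{EqT1BH} asks $Q(N^*_+)+m(\tau)<1$ with $Q$ as in \eqref{eq:Q}; note that both partial derivatives of $F$ are nonnegative, so the absolute values in \eqref{LASClark} drop out. The key step is to eliminate $N^*_+$ using the equilibrium relation $p(N^*_+)=1-m(\tau)$, which gives $1+\alpha+\beta N^*_+=\frac{1}{1-m(\tau)}$ and hence $Q(N^*_+)=(1+\alpha)(1-m(\tau))^2$. The stability inequality then reads $(1+\alpha)(1-m(\tau))^2+m(\tau)<1$; moving $m(\tau)$ to the right and dividing by $1-m(\tau)>0$, this is equivalent to $(1+\alpha)(1-m(\tau))<1$, i.e. $(1+\alpha)m(\tau)>\alpha$ — exactly the hypothesis. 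Therefore $N^*_+$ is LAS.

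There is no genuine obstacle here; the only points requiring care are that the Clark condition is merely \emph{sufficient} for LAS (so only one direction need be checked) and that the algebra is carried out in the correct direction, so that the final inequality coincides with the hypothesis rather than being strictly stronger. The substantive content is the identity $Q(N^*_+)=(1+\alpha)(1-m(\tau))^2$, after which the conclusion is immediate.
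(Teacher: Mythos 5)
Your proposal is correct and follows essentially the same route as the paper: both verify the Clark sufficient condition \eqref{LASClark} at $N^*_+$ and reduce it to the hypothesis $(1+\alpha)m(\tau)>\alpha$. The only differences are cosmetic and in your favor — you eliminate $N^*_+$ via the equilibrium relation $1+\alpha+\beta N^*_+=\frac{1}{1-m(\tau)}$ to obtain $Q(N^*_+)=(1+\alpha)(1-m(\tau))^2$, which is cleaner than the paper's substitution of the closed form \eqref{T1C1Nstar} followed by square-root manipulations, and your direct characteristic-polynomial argument for the instability of $N^*_0$ (a real root exceeding $1$) is more careful than the paper's appeal to the failure of the merely sufficient condition \eqref{condN0}.
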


The proof of this theorem is provided in Appendix~\ref{A:NstarLAS}. We can strengthen this result of local asymptotic stability and prove that, if the positive equilibrium exists, then it is indeed globally asymptotically stable as formulated in Theorem~\ref{thm:NstarGAS} (see proof  in Appendix~\ref{A:NstarGAS}). This proof is based on \cite[Theorem~1.15]{Ladas_2004} that we stated, for convenience of the reader, in Appendix~\ref{A:ThmLadas}. 

\begin{theorem}\label{thm:NstarGAS}
    Consider \eqref{EqT1BH}, for fixed $\tau\in \mathbb{N}_0$   and model assumption (A) given in \eqref{eq:A}.    
 If $(1+\alpha)m(\tau)>\alpha$, then there exists a unique positive equilibrium $N^*_+$ that is globally asymptotically stable. 
\end{theorem}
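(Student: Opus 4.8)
The plan is to prove global asymptotic stability of the positive equilibrium $N^*_+$ for \eqref{EqT1BH} by combining the boundedness and local stability facts already in hand with a monotonicity/sandwich argument of the type packaged in \cite[Theorem~1.15]{Ladas_2004}. First I would recall that Theorem~\ref{thm:NstarLAS} gives local asymptotic stability of $N^*_+$ under the hypothesis $(1+\alpha)m(\tau)>\alpha$, and Proposition~\ref{Prop:bound} (applicable since $p(z)=1/(1+\alpha+\beta z)$ satisfies (H)) gives that every solution with initial data \eqref{Ini} is bounded; Proposition~\ref{Proppos} gives eventual positivity. So the only thing left is to promote local to global stability, and the natural route is to check the structural hypotheses of the Ladas-type theorem for the map $F(u,v)=\frac{u}{1+\alpha+\beta u}+m(\tau)v$.

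The key observations that make such a theorem applicable: $F(u,v)$ is nondecreasing in $v$ (since $m(\tau)>0$), and in $u$ the partial derivative is $Q(u)=\frac{1+\alpha}{(1+\alpha+\beta u)^2}>0$, so $F$ is also nondecreasing in $u$; hence $F$ is coordinatewise nondecreasing on $\mathbb{R}^+_0\times\mathbb{R}^+_0$. Moreover $u\mapsto F(u,v)$ is concave (a Beverton--Holt term plus a linear term), and $F(u,v)\le \frac{u}{1+\alpha}+m(\tau)v$, so there is no finite prime period-two solution and the equilibrium equation $x=F(x,x)$ has the unique positive root $N^*_+$ given in \eqref{T1C1Nstar} once $(1+\alpha)m(\tau)>\alpha$. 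The standard argument then runs: let $I=\liminf N_t$, $S=\limsup N_t$ (both finite and positive by boundedness and eventual positivity plus the fact that $\liminf>0$ — this last point needs a short argument using that $0$ is unstable, or directly that $N_{t+1}\ge m(\tau)N_{t-\tau}$ forces persistence). Passing to the limit along suitable subsequences in $N_{t+1}=F(N_t,N_{t-\tau})$ and using monotonicity of $F$ in both arguments yields $S\le F(S,S)$ and $I\ge F(I,I)$; combined with uniqueness of the fixed point of $x\mapsto F(x,x)$ on $(0,\infty)$ and the sign structure of $x-F(x,x)$ (which changes sign exactly once, at $N^*_+$), this forces $I=S=N^*_+$.

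Concretely, the steps in order: (1) verify the hypotheses of \cite[Theorem~1.15]{Ladas_2004} as recalled in Appendix~\ref{A:ThmLadas} — coordinatewise monotonicity of $F$, the relevant concavity/one-signed-difference condition, and $m(\tau)+Q(0)$ giving instability of $0$ exactly when $N^*_+$ exists; (2) invoke Proposition~\ref{Prop:bound} for boundedness and Proposition~\ref{Proppos} for eventual positivity; (3) establish uniform persistence, i.e. $\liminf_{t\to\infty} N_t>0$ — using $N_{t+1}>m(\tau)N_{t-\tau}$ together with the boundedness of $p$, or more cleanly the instability of $N^*_0$ from Theorem~\ref{thm:NstarLAS}, so that the solution cannot have $\liminf=0$; (4) apply the theorem to conclude $N_t\to N^*_+$, which together with Theorem~\ref{thm:NstarLAS} gives global asymptotic stability.

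The main obstacle I anticipate is step (1): checking that $F$ satisfies the precise technical hypotheses of the cited global-attractivity theorem — in particular whatever condition rules out period-two cycles and whatever sign/concavity condition it demands on $g(u,v):=F(u,v)$ — rather than merely ``$F$ is monotone in each variable.'' This is where one must actually use the specific Beverton--Holt form: the concavity of $u\mapsto u/(1+\alpha+\beta u)$ and the linearity in $v$ are what guarantee that $x-F(x,x)=\beta x\big(x - N^*_+\big)\big/\big((1+\alpha+\beta x)(1-m(\tau))\big)$-type expression has a single sign change, hence no spurious two-cycles and a unique positive fixed point. A secondary, more routine obstacle is the persistence claim in step (3); it is not automatic that $\liminf>0$, but it follows quickly because if $N_{t_k}\to 0$ along a subsequence then, using $N_{t+1}\ge m(\tau)N_{t-\tau}$ and instability of the origin, one derives a contradiction with the solution being eventually positive and the origin being a repeller in the relevant sense. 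Once these two points are nailed down, the limsup/liminf sandwich is immediate.
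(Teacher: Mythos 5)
Your proposal is correct and follows essentially the same route as the paper: both verify that $F(u,v)=\frac{u}{1+\alpha+\beta u}+m(\tau)v$ is coordinatewise increasing, note that $x-F(x,x)$ changes sign exactly once on $(0,\infty)$ at $N^*_+$, and invoke \cite[Theorem~1.15]{Ladas_2004} (your limsup/liminf sandwich is precisely the mechanism behind that theorem), combining the result with Theorem~\ref{thm:NstarLAS} for full asymptotic stability. The only cosmetic difference is that the paper handles your persistence step (3) by explicitly constructing a positively invariant interval $[a_r,b_r]$ with $0<a_r<N^*_+<b_r$ that excludes the origin (so the $r=R$ hypothesis holds trivially), and no concavity or period-two exclusion is needed in the all-increasing case.
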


Theorems~\ref{thm:N0GAS} and \ref{thm:NstarGAS} complete the global dynamics for \eqref{EqT1BH}.



\begin{remark}
In light of Theorem~\ref{Nstarex} and Theorem~\ref{thm:NstarLAS}, we can conclude that there exist (at most) two transcritical bifurcations at $\tau_1$ and $\tau_2$ (as defined in Theorem~\ref{Nstarex}), determined by the solutions to $m(\tau)(1+\alpha)=\alpha$, where the  unique positive equilibrium $N^*_+$ and the extinction equilibrium $N^*_0$ interchange stability. If $m(0)<\frac{\alpha}{1+\alpha}$, then $\tau_1>0$ and there exist two transcritical bifurcations at $\tau_1$ and at $\tau_2$. As $\tau$ increases beyond $\tau_1$, the trivial equilibrium loses stability and the positive equilibrium $N^*_+$ becomes stable. At $\tau=\tau_2$, this is reversed and the positive equilibrium loses its stability to the trivial equilibrium. 
\end{remark}

We can also apply \eqref{LASClark} to study the stability of the extinction equilibrium for increasing $\tau$, as formulated in Theorem~\ref{thm:N0LAS} and proven in Appendix~\ref{A:thmN0LAS}. 
\begin{theorem}\label{thm:N0LAS}
Consider \eqref{EqT1BH} with model assumption (A) given in \eqref{eq:A}. Then, there exists $\tau_c>\frac{1}{K}\ln\left(\frac{b(P+K)}{aP}\right) -1$, where $P=\ln(1/\overline{p})$, such that for $\tau\geq \tau_c$, $N^*_0$ is locally asymptotically stable.
\end{theorem}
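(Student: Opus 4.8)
The plan is to apply Clark's local stability criterion \eqref{LASClark} at the extinction equilibrium $N^*_0=0$ and then to use the sign analysis of $m'(\tau)$ already carried out before Lemma~\ref{lem:1} to locate a delay threshold beyond which the criterion keeps holding.

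First I would specialize \eqref{LASClark} to $N^*=0$. By \eqref{eq:Q} one has $Q(0)=\tfrac{1}{1+\alpha}$, and by model assumption (A) the partial derivative $\partial F/\partial N_{t-\tau}\equiv m(\tau)$ lies in $(0,1)$, so the criterion collapses to the scalar inequality $\tfrac{1}{1+\alpha}+m(\tau)<1$, i.e. $m(\tau)<\tfrac{\alpha}{1+\alpha}$. Hence it suffices to produce some $\tau_c>\tfrac{1}{K}\ln\!\big(\tfrac{b(P+K)}{aP}\big)-1$ such that $m(\tau)<\tfrac{\alpha}{1+\alpha}$ for every integer $\tau\ge\tau_c$.

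Next I would use the expression for $m'(\tau)$ from the discussion preceding Lemma~\ref{lem:1}. Writing $\log\overline{p}=-P$ and factoring gives $m'(\tau)=\overline{p}^{\,\tau+1}\big(b(K+P)e^{-K(\tau+1)}-aP\big)$, so the sign of $m'(\tau)$ is that of $b(K+P)e^{-K(\tau+1)}-aP$. Solving $b(K+P)e^{-K(\tau+1)}<aP$ yields $\tau>\tau^*:=\tfrac{1}{K}\ln\!\big(\tfrac{b(P+K)}{aP}\big)-1$, so $m$ is strictly decreasing on $(\tau^*,\infty)$. Since the factor $a-be^{-K(\tau+1)}$ stays bounded while $\overline{p}^{\,\tau+1}\to 0$, we also have $\lim_{\tau\to\infty}m(\tau)=0<\tfrac{\alpha}{1+\alpha}$. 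Therefore I can choose $\tau_c>\tau^*$ large enough that $m(\tau_c)<\tfrac{\alpha}{1+\alpha}$; monotonicity of $m$ on $(\tau^*,\infty)$ then gives $m(\tau)\le m(\tau_c)<\tfrac{\alpha}{1+\alpha}$ for all $\tau\ge\tau_c$, and by the first step $N^*_0$ is locally asymptotically stable for all such $\tau$. Since $\tau_c>\tau^*$, this is exactly the asserted lower bound.

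The computations here are routine once the structure is visible; the one point that genuinely requires care is the monotonicity argument. Knowing merely that $m(\tau)$ eventually drops below $\tfrac{\alpha}{1+\alpha}$ would not by itself give a clean threshold $\tau_c$ for stability (the stability set in $\tau$ could a priori be disconnected); it is the strict decrease of $m$ on $(\tau^*,\infty)$ that upgrades ``below at $\tau_c$'' to ``below for all $\tau\ge\tau_c$.'' So the sign analysis of $m'$ is the load-bearing step, and the explicit value $\tau^*=\tfrac{1}{K}\ln\!\big(\tfrac{b(P+K)}{aP}\big)-1$ is precisely where that sign turns negative.
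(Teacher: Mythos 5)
Your proposal is correct and follows essentially the same route as the paper's proof in Appendix~\ref{A:thmN0LAS}: specialize Clark's criterion \eqref{LASClark} at $N^*_0$ to get $m(\tau)<\frac{\alpha}{1+\alpha}$, compute $m'(\tau)$ to show $m$ is strictly decreasing exactly for $\tau>\frac{1}{K}\ln\bigl(\frac{b(P+K)}{aP}\bigr)-1$, and combine this with $\lim_{\tau\to\infty}m(\tau)=0$ to extract $\tau_c$. Your explicit remark that the monotonicity of $m$ past $\tau^*$ is what makes the stability set an unbounded interval (rather than merely nonempty) is a point the paper leaves implicit, but the argument is the same.
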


\subsection{Special Case: Ricker survival for mature individuals}

In this section, we consider that the survival of immature individuals is still constant. Therefore, $\overline{p}\in (0,1)$ for one time interval. However, now mature individuals have a Ricker survival function. We consider 
\eqref{EqT1C1} with $m(\tau)$ given in \eqref{eq:mtau} and assume that mature individuals follow  \eqref{eqn:Rmatsurvival} which yields the recursion
\begin{equation}\label{EqT1R}
N_{t+1}=e^{-\alpha-\beta N_t}N_t + \left(a - b e^{-K(\tau+1)}\right)
\overline{p}^{\tau+1} N_{t-\tau}, , \quad  N_i=N_i^0 \geq 0, \, i  \in \{-\tau, -\tau-1, \ldots, 0\}
\end{equation}
for $\alpha, \beta, a,b,K>0$ and $\overline{p}\in (0,1)$. As before, see \eqref{eq:growth}, we assume $a>be^{-K}$ and $m(\tau)\in (0,1)$. Again, the conditions for the application of Propositions~\ref{Proppos}, \ref{Propbound}, \ref{Prop:bound}, Lemma~\ref{lem:1}, and Theorems~\ref{thm:1}--\ref{thm:NstarLAS1} are satisfied so that these results apply for this special case.

Since $p(0)=e^{-\alpha}$, Theorem~\ref{thm:1} implies that for a positive equilibrium $N^*_+$ to exist, $1-e^{-\alpha}<m(\tau)$ must be satisfied. More precisely, if $1-e^{-\alpha}<m(\tau)$, then the positive equilibrium $N^*_+$ is given by  

\begin{equation}\label{eq:q}
N^*_+=\frac{-\ln(1-m(\tau))-\alpha}{\beta}>0.
\end{equation}

Thus, we can immediately formulate the following corollary as a direct consequence of Theorem~\ref{thm:1}. By Theorem~\ref{thm:2}, we can strengthen the stability result for the trivial equilibrium.

\begin{corollary}\label{cor:N0GASR}
Consider \eqref{EqT1R} for fixed $\tau\in \mathbb{N}$ with initial conditions in \eqref{eq:mtau}. 
If $1-e^{-\alpha}< m(\tau)$, then there exists a unique positive equilibrium $N^*_+$. Instead, if $1-e^{-\alpha}> m(\tau)$, then the only nonnegative equilibrium is the trivial equilibrium $N^*_0=0$, that is globally asymptotically stable.
\end{corollary}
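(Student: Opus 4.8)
The plan is to observe that the Ricker-Constant recurrence \eqref{EqT1R} is a special instance of the general delay recurrence \eqref{GenEq} with $p(z)=e^{-\alpha-\beta z}$ and $m(\tau)$ as in \eqref{eq:mtau}, and then simply to read off the conclusion from Theorems~\ref{thm:1} and \ref{thm:2}. The first step is to verify that this $p$ satisfies hypothesis (H): since $p'(z)=-\beta e^{-\alpha-\beta z}<0$, the map $p$ is strictly decreasing on $z>0$, and $\lim_{z\to\infty}p(z)=0$; together with $m(\tau)\in(0,1)$, which is part of model assumption (A), every standing hypothesis of Section~3 is met, so Theorems~\ref{thm:1} and \ref{thm:2} apply verbatim.

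The second step is to rewrite the dichotomy, which in Theorem~\ref{thm:1} is stated in terms of $p(0)$ versus $1-m(\tau)$, using $p(0)=e^{-\alpha}$: the inequality $p(0)>1-m(\tau)$ is precisely $1-e^{-\alpha}<m(\tau)$, and $p(0)<1-m(\tau)$ is precisely $1-e^{-\alpha}>m(\tau)$. In the first case, Theorem~\ref{thm:1} gives a unique positive equilibrium $N^*_+$; solving the equilibrium condition $1=p(N^*_+)+m(\tau)$, i.e.\ $e^{-\alpha-\beta N^*_+}=1-m(\tau)$, recovers the explicit form $N^*_+=\bigl(-\ln(1-m(\tau))-\alpha\bigr)/\beta$ already displayed in \eqref{eq:q}, and this is automatically positive because $1-e^{-\alpha}<m(\tau)$ forces $\ln(1-m(\tau))<-\alpha$.

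In the complementary case $1-e^{-\alpha}>m(\tau)$, Theorem~\ref{thm:1} already identifies $N^*_0=0$ as the only nonnegative equilibrium and yields its local asymptotic stability, and Theorem~\ref{thm:2}---whose hypothesis $p(0)<1-m(\tau)$ now holds---strengthens this to global asymptotic stability, completing the corollary. There is essentially no obstacle here: the argument is just a translation of the earlier general results into the Ricker setting. The only points worth a sentence of care are that the knife-edge case $1-e^{-\alpha}=m(\tau)$ is deliberately excluded (it is the transcritical value where $N^*_+$ merges with $N^*_0$), and that Proposition~\ref{Prop:bound} guarantees bounded orbits, so the phrase ``globally asymptotically stable'' is meaningful for all initial data satisfying \eqref{Ini}.
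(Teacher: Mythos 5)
Your proposal is correct and matches the paper's own reasoning: the corollary is stated there as an immediate consequence of Theorems~\ref{thm:1} and \ref{thm:2} applied to \eqref{EqT1R} with $p(0)=e^{-\alpha}$, exactly the translation you carry out, including reading off $N^*_+$ from \eqref{eq:q}. No gaps.
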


By Lemma~\ref{lem:1}, the following result holds. 
\begin{corollary}
    Consider \eqref{EqT1R} for fixed $\tau\in \mathbb{N}$ and model assumption (A) given in \eqref{eq:A}. Then, there exists $\tau_c\geq 0$ such that $1-e^{-\alpha}\geq m(\tau)$ for all $\tau\geq \tau_c$. 
\end{corollary}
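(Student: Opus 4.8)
The plan is to reduce this directly to Lemma~\ref{lem:1}. First I would observe that \eqref{EqT1R} is precisely the recurrence \eqref{GenEq} with mature survival function $p(N)=e^{-\alpha-\beta N}$ and with $m(\tau)$ given by \eqref{eq:mtau}; since $\alpha,\beta>0$, this $p$ is strictly decreasing on $(0,\infty)$ with $\lim_{z\to\infty}p(z)=0$, so assumption (H) holds. Hence Lemma~\ref{lem:1} applies verbatim to \eqref{EqT1R}.

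Next I would unpack what Lemma~\ref{lem:1} actually delivers: a threshold $\tau_c\geq 0$ such that $p(0)<1-m(\tau)$ for all $\tau\geq\tau_c$ (this being the condition under which $X^*_0=0$ is locally asymptotically stable, cf. Theorem~\ref{thm:1}). Here $p(0)=e^{-\alpha}$, so the inequality $p(0)<1-m(\tau)$ reads $e^{-\alpha}<1-m(\tau)$, equivalently $m(\tau)<1-e^{-\alpha}$, which in particular yields $1-e^{-\alpha}\geq m(\tau)$ for every $\tau\geq\tau_c$, as claimed.

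For completeness I would re-derive the underlying fact in one line, since it is short: from \eqref{eq:mtau}, $m(\tau)=(a-be^{-K(\tau+1)})\overline{p}^{\tau+1}\leq a\,\overline{p}^{\tau+1}\to 0$ as $\tau\to\infty$ because $\overline{p}\in(0,1)$; and since $\alpha>0$ we have $1-e^{-\alpha}>0$, so there exists $\tau_c$ beyond which $m(\tau)<1-e^{-\alpha}$. I do not anticipate any genuine obstacle — the only point to verify carefully is that $p(N)=e^{-\alpha-\beta N}$ meets the hypotheses of Lemma~\ref{lem:1}, which is immediate, and one notes that the full monotonicity discussion preceding Lemma~\ref{lem:1} is not even needed: the crude bound $m(\tau)\leq a\,\overline{p}^{\tau+1}$ already forces $m(\tau)\to 0$, which together with $1-e^{-\alpha}>0$ gives the conclusion.
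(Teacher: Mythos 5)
Your proposal is correct and matches the paper's own route: the paper states this corollary follows immediately from Lemma~\ref{lem:1}, which is exactly the reduction you carry out (after the easy check that $p(N)=e^{-\alpha-\beta N}$ satisfies (H) and that $p(0)=e^{-\alpha}$ turns the condition $p(0)<1-m(\tau)$ into $m(\tau)<1-e^{-\alpha}$). Your added one-line bound $m(\tau)\leq a\,\overline{p}^{\tau+1}\to 0$ is a harmless and correct shortcut but not a different argument.
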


\begin{theorem}\label{thm:GASNstar}
     Consider \eqref{EqT1R} for fixed $\tau\in \mathbb{N}$ and model assumption (A) given in \eqref{eq:A}. If $1-e^{-\alpha}< m(\tau)<1-e^{-\alpha-2}$, then $N^*_+>0$ is locally asymptotically stable. Instead, if $m(\tau)>1-e^{-\alpha-2}$, then $N^*_+$ is unstable.
\end{theorem}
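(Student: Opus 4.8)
The plan is to linearize \eqref{EqT1R} about the positive equilibrium, express the two relevant partial derivatives purely in terms of $m(\tau)$ and $\alpha$, and then treat the two halves of the statement separately: the local asymptotic stability via the Clark criterion \eqref{LASClark}, and the instability via a root‑location argument for the delay characteristic polynomial. The key preliminary step is to record two identities valid at $N^*_+$: the equilibrium relation for \eqref{EqT1R} reads $e^{-\alpha-\beta N^*_+}+m(\tau)=1$, so $e^{-\alpha-\beta N^*_+}=1-m(\tau)$, and by \eqref{eq:q}, $\beta N^*_+=-\ln(1-m(\tau))-\alpha$. Writing $F(N_t,N_{t-\tau})=e^{-\alpha-\beta N_t}N_t+m(\tau)N_{t-\tau}$, these give
$$A:=\left.\frac{\partial F}{\partial N_t}\right|_{N^*_+}=e^{-\alpha-\beta N^*_+}\bigl(1-\beta N^*_+\bigr)=(1-m(\tau))\bigl(1+\alpha+\ln(1-m(\tau))\bigr),$$
while the derivative of $F$ in its second argument is the constant $B:=m(\tau)$. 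Observe that $m(\tau)=1-e^{-\alpha-2}$ is exactly the value at which $\beta N^*_+=2$, and there $A=-(1-m(\tau))$, so $|A|+B=1$; this is what makes that threshold the natural stability boundary.

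For the stability claim I would verify the Clark condition \eqref{LASClark}, i.e.\ $|A|+B<1$. Set $c:=1+\alpha+\ln(1-m(\tau))$; the hypothesis $m(\tau)>1-e^{-\alpha}$ forces $\ln(1-m(\tau))<-\alpha$, hence $c<1$, while model assumption (A) gives $1-m(\tau)>0$. If $c\ge 0$, then $|A|+B=(1-m(\tau))c+m(\tau)$ and, dividing $|A|+B<1$ by $1-m(\tau)$, the inequality reduces to $c<1$, which always holds. If $c<0$, then $|A|=-(1-m(\tau))c$ and $|A|+B<1$ reduces to $c>-1$, that is $1-m(\tau)>e^{-\alpha-2}$, i.e.\ $m(\tau)<1-e^{-\alpha-2}$. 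The sign switch $c=0$ happens at $m(\tau)=1-e^{-\alpha-1}$, which lies strictly between $1-e^{-\alpha}$ and $1-e^{-\alpha-2}$, so in both sub‑cases $|A|+B<1$ holds exactly when $1-e^{-\alpha}<m(\tau)<1-e^{-\alpha-2}$, and \eqref{LASClark} (from \cite{Clark1976}) then gives local asymptotic stability of $N^*_+$.

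For the instability claim, when $m(\tau)>1-e^{-\alpha-2}$ we have $\beta N^*_+>2$, hence $c<-1$, so $A<-(1-m(\tau))<0$; consequently $B-A=m(\tau)+|A|>1$, while still $A+B<1$ since $A+B-1=(1-m(\tau))(\alpha+\ln(1-m(\tau)))<0$ by $m(\tau)>1-e^{-\alpha}$. I would then examine the characteristic polynomial $P(\lambda)=\lambda^{\tau+1}-A\lambda^{\tau}-B$: as $P(0)=-B<0$ and $P(1)=1-A-B>0$, its positive real root lies in $(0,1)$ and is harmless, so instability must come from elsewhere. Comparing the sign of $P(-1)$ with the dominant behaviour $P(\lambda)\to\pm\infty$ as $\lambda\to-\infty$ produces a real characteristic root outside $[-1,1]$, which destroys stability.

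I expect this last step to be the main obstacle. For $\tau$ odd one gets $P(-1)=1+A-B<0$ (from $B-A>1$) together with $P(\lambda)\to+\infty$ as $\lambda\to-\infty$, hence a root in $(-\infty,-1)$; but for $\tau$ even the characteristic value $\lambda=-1$ sits on a different boundary ($A+B=-1$), $P(-1)$ need not have the convenient sign near the threshold, and the destabilizing root may be complex. Making the instability assertion rigorous uniformly in $\tau$ therefore seems to require either a careful argument‑principle / Rouch\'e count of the zeros of $P$ in $|\lambda|\le 1$ when $|A|+B>1$ with $A<0<B$, or invoking a sharp (Kuruklis‑type) stability criterion for the linear delay recurrence $x_{t+1}=Ax_t+Bx_{t-\tau}$; the remaining parts of the argument reduce to the elementary algebra organized around the two equilibrium identities above.
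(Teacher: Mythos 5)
Your stability half is correct and is essentially the paper's own argument: both of you apply the Clark criterion \eqref{LASClark} with $B=m(\tau)$ and $A=e^{-\alpha-\beta N^*_+}(1-\beta N^*_+)$, use the equilibrium identity $e^{-\alpha-\beta N^*_+}=1-m(\tau)$, and split according to the sign of $1-\beta N^*_+$ (your case split on the sign of $c=1+\alpha+\ln(1-m(\tau))$ is the same dichotomy, since $c\ge 0$ iff $\beta N^*_+\le 1$); the resulting algebra, $|A|+B=1-e^{-\alpha-\beta N^*_+}\beta N^*_+$ in one case and $1+(\beta N^*_+-2)e^{-\alpha-\beta N^*_+}$ in the other, is identical to the paper's. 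The one place where you diverge is the instability claim, and there your caution is well placed: violating Clark's \emph{sufficient} condition $|A|+|B|<1$ does not by itself imply instability, so something more is needed. Your sign analysis of $P(\lambda)=\lambda^{\tau+1}-A\lambda^{\tau}-B$ does produce a real root in $(-\infty,-1)$ for odd $\tau$, but, as you note, it fails for even $\tau$ near the threshold (there $A+B>-1$, so $P(-1)<0$ and $P\to-\infty$, and the destabilizing root is complex). The paper does not close this gap internally either: it simply asserts instability and points to the proof of Theorem~3.4 in \cite{streipertAlternativeDelayedPopulation2021}, which supplies exactly the sharp root-location result for the linear recurrence $x_{t+1}=Ax_t+Bx_{t-\tau}$ that you anticipate needing (a Kuruklis-type criterion). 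So to finish your version you should either carry out the Rouch\'e/argument-principle count you sketch, or invoke that reference as the paper does; everything else in your proposal is sound.
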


The proof is provided in Appendix~\ref{A:GASNstarR}. We conjecture that where $N^*_+$ is locally asymptotically stable, it is also globally asymptotically stable as suggested by Fig. \ref{fig:RC_stable} and Fig. \ref{fig:RC_unstable} in Appendix~\ref{A:GASNstarR_figs}.

\begin{remark}
Since $m(\tau)$, given in \eqref{eq:mtau}, is linear in the egg production potential ($a$) and proportional to the survival rate of immature individuals ($\overline{p}$), the conditions of Theorem~\ref{thm:GASNstar} 
can be equivalently formulated as 
$$\frac{1-e^{-\alpha}}{\overline{p}^{\tau+1}}+be^{-K(\tau+1)}<a<\frac{1-e^{-(\alpha+2)}}{\overline{p}^{\tau+1}}+be^{-K(\tau+1)}$$
or, in terms of $\overline{p}$, 
$$\frac{1-e^{-\alpha}}{a-be^{-K(\tau+1)}}<\overline{p}^{\tau+1}<\frac{1-e^{-(\alpha+2)}}{a-be^{-K(\tau+1)}}.$$
While the left inequalities guarantee the existence of a positive equilibrium, the right inequalities guarantee its local asymptotic stability. 
Note that the right inequality, in terms of $\overline{p}^{\tau+1}$, becomes more likely as $\tau$ increases since $\overline{p}\in (0,1)$. However, with increasing $\tau$, $\tau$ approaches the critical extinction threshold $\tau_c$, in which case $N^*_+$ is no longer positive. This clearly shows a sensitivity that albeit some delay is needed to guarantee the existence of the positive equilibrium, there might only be a small range, in which we can guarantee its local asymptotic stability. We hint on this sensitivity in Theorem~\ref{thm:PeriodD}, where we show the existence of a bifurcation that changes the stability of the positive equilibrium.  
\end{remark}

The condition $m(\tau)<1-e^{-\alpha}$ can also be expressed in terms of $\tau$, by noticing that 

\begin{equation*}
m'(\tau) \, \begin{cases} >0 & \tau>\tau_r\\
=0 & \tau=\tau_r\\
<0 & \tau>\tau_r,
\end{cases}\qquad \qquad \tau_r:=\frac{1}{K}\ln \left( \frac{b(P+K)}{aP}\right)-1, \qquad \qquad P:=\ln(\overline{p}^{-1})>0.
\end{equation*}

Thus, Corollary~\ref{cor:N0GASR} and Theorem~\ref{thm:GASNstar} can be rewritten as follows.

\begin{theorem}
    Consider \eqref{EqT1R} with model assumption (A) given in \eqref{eq:A}. Assume that $\tau_r>0$ and $\beta N^*_+<2$. 
    If $m(\tau_r)<1-e^{-\alpha}$, then $N^*_0$ is globally asymptotically stable for all $\tau\geq 0$. If $m(0)\leq 1-e^{-\alpha}< m(\tau_r)$, then 
        $N^*_0$ is globally asymptotically stable for $\tau\in [0,\tau_r)\cup (\tau_c,\infty)$. For $\tau\in (\tau_r, \tau_c)$, there exists a unique positive equilibrium $N^*_+$ that is locally asymptotically stable for $\tau\in (\tau_r,\tau_c)$. 
\end{theorem}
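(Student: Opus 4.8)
The plan is to assemble this theorem purely as a bookkeeping consequence of the three ingredients already available: Corollary~\ref{cor:N0GASR}, Theorem~\ref{thm:GASNstar}, and the monotonicity description of $m(\tau)$ in \eqref{eq:qprop}. The function $\tau \mapsto m(\tau)$ on $[0,\infty)$ increases on $[0,\tau_r]$ and decreases on $[\tau_r,\infty)$, so it attains its global maximum at $\tau_r$ and, since $m(\tau)\to 0$ as $\tau\to\infty$, it is eventually below any fixed positive level. I would first record this shape explicitly and then translate the threshold conditions from the cited corollary/theorem into statements about which interval of $\tau$ each falls into.

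First I would handle the first alternative. If $m(\tau_r) < 1 - e^{-\alpha}$, then because $m(\tau_r)$ is the maximum value of $m$ over all $\tau \geq 0$, we get $m(\tau) \leq m(\tau_r) < 1 - e^{-\alpha}$ for every $\tau \geq 0$. Strictly, one should note $m(\tau) < 1-e^{-\alpha}$ holds for all $\tau$ (with equality impossible since $m(\tau)\le m(\tau_r)<1-e^{-\alpha}$), so Corollary~\ref{cor:N0GASR} applies verbatim for each $\tau$: the trivial equilibrium $N^*_0$ is the only nonnegative equilibrium and is globally asymptotically stable. That disposes of the first claim.

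Next I would do the second alternative, $m(0) \leq 1 - e^{-\alpha} < m(\tau_r)$, which is the substantive case. Since $m$ is continuous, strictly increasing on $[0,\tau_r]$ with $m(0)\le 1-e^{-\alpha}<m(\tau_r)$, there is a unique $\tau_r' \in [0,\tau_r)$ with $m(\tau_r') = 1-e^{-\alpha}$ — but the theorem as stated calls this crossing point $\tau_r$ itself, so I would reconcile notation: actually one should read the hypothesis as placing the left transcritical crossing inside the increasing branch, so $m(\tau) < 1-e^{-\alpha}$ for $\tau$ below that crossing and $m(\tau) > 1-e^{-\alpha}$ just above it. On the decreasing branch, since $m(\tau)\to 0$, there is a unique $\tau_c > \tau_r$ with $m(\tau_c) = 1-e^{-\alpha}$, and $m(\tau) < 1-e^{-\alpha}$ for all $\tau > \tau_c$. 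Combining: $m(\tau) > 1-e^{-\alpha}$ precisely for $\tau$ in the open interval between the two crossings, i.e. $(\tau_r,\tau_c)$ in the theorem's labelling, and $m(\tau) < 1-e^{-\alpha}$ on $[0,\tau_r)\cup(\tau_c,\infty)$. Applying Corollary~\ref{cor:N0GASR} on the latter set gives global asymptotic stability of $N^*_0$ there; applying Theorem~\ref{thm:GASNstar} on $(\tau_r,\tau_c)$ gives existence of the unique positive equilibrium $N^*_+$. For local asymptotic stability of $N^*_+$ I would invoke the stability half of Theorem~\ref{thm:GASNstar}, namely $1-e^{-\alpha}<m(\tau)<1-e^{-\alpha-2}$; here the hypothesis $\beta N^*_+ < 2$ is exactly the reformulation of $m(\tau)<1-e^{-\alpha-2}$ via \eqref{eq:q} (since $\beta N^*_+ = -\ln(1-m(\tau))-\alpha < 2 \iff 1-m(\tau) > e^{-\alpha-2} \iff m(\tau) < 1-e^{-\alpha-2}$), so that condition holds throughout $(\tau_r,\tau_c)$ and LAS follows.

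The main obstacle is not mathematical depth but getting the notation and the precise interval endpoints consistent with how $\tau_r$ and $\tau_c$ were introduced earlier, together with the mild care about strict versus non-strict inequalities at the crossing points (the boundary values $\tau=\tau_r$ and $\tau=\tau_c$, where $m(\tau)=1-e^{-\alpha}$ and the two equilibria coalesce, sit on the GAS side by Corollary~\ref{cor:N0GASR} read with equality handled as in the transcritical-bifurcation remark). I would also explicitly note that $\tau_r > 0$ in the hypothesis is what guarantees the increasing branch is nondegenerate so that the first crossing is genuinely interior, and that $m(0)\le 1-e^{-\alpha}$ is what places that first crossing at a nonnegative $\tau$ rather than off the domain. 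Everything else is a direct citation.
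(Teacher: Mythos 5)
Your proposal is correct and is essentially the argument the paper intends: the theorem is stated there as a direct rewriting of Corollary~\ref{cor:N0GASR} and Theorem~\ref{thm:GASNstar} using the unimodal shape of $m(\tau)$ from \eqref{eq:qprop}, with no separate proof given. You also rightly flag (and repair) the notational inconsistency that the left endpoint of the stability interval should be the crossing point where $m(\tau)=1-e^{-\alpha}$ on the increasing branch rather than the maximizer $\tau_r$ itself, and correctly identify $\beta N^*_+<2$ as the reformulation of $m(\tau)<1-e^{-\alpha-2}$ via \eqref{eq:q}.
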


Theorem~\ref{thm:GASNstar} already suggests that $N^*_+$ may not be stable whenever it exists, in contrast to the case in Section 3.1 considering a Beverton--Holt survival for mature individuals. We take a closer look at the possible bifurcations in the next Theorems (see proof in Appendix~\ref{A:thmPeriodD} and Appendix~\ref{A:thmPeriodNS}).

\begin{theorem}\label{thm:PeriodD}
     Consider \eqref{EqT1R} with model assumption (A) given in \eqref{eq:A}. If $\tau$ is odd, then there exists a unique $a^*>0$ such that the system undergoes a period-doubling bifurcation at $a=a^*$. If $\tau$ is even, there is no period-doubling bifurcation for any values of $a>0$ such that $m(\tau)<1$. 
\end{theorem}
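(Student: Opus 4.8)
\emph{Proof proposal.} The plan is to pin down the parameter value at which $-1$ becomes a characteristic root of the linearization of \eqref{EqT1R} about the positive equilibrium $N^*_+$, and then to verify the standard flip-bifurcation hypotheses there. Linearizing $N_{t+1}=p(N_t)N_t+m(\tau)N_{t-\tau}$ with $p(N)=e^{-\alpha-\beta N}$ about $N^*_+$ gives $n_{t+1}=Q\,n_t+m(\tau)\,n_{t-\tau}$, where
$$Q:=p(N^*_+)+N^*_+\,p'(N^*_+)=\bigl(1-m(\tau)\bigr)\bigl(1-\beta N^*_+\bigr),$$
using $p(N^*_+)=1-m(\tau)$ and, from \eqref{eq:q}, $\beta N^*_+=-\ln(1-m(\tau))-\alpha$. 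The characteristic polynomial is $P(\lambda)=\lambda^{\tau+1}-Q\lambda^{\tau}-m(\tau)$, and since a period-doubling bifurcation of $N^*_+$ can occur only at parameters for which $-1$ is a characteristic root, the first task is to solve $P(-1)=0$. (The trivial equilibrium $N^*_0$ cannot period-double under model assumption (A): its characteristic polynomial $\lambda^{\tau+1}-e^{-\alpha}\lambda^{\tau}-m(\tau)$ has $-1$ as a root only if $m(\tau)=1+e^{-\alpha}>1$ when $\tau$ is odd, and never when $\tau$ is even.)

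Substituting $\lambda=-1$ and separating on parity of $\tau$: for $\tau$ odd, $(-1)^{\tau+1}=1$ and $(-1)^{\tau}=-1$, so $P(-1)=0$ is equivalent to $Q=m(\tau)-1$; combined with $Q=(1-m(\tau))(1-\beta N^*_+)$ this forces $\beta N^*_+=2$, i.e. $m(\tau)=1-e^{-\alpha-2}$. For $\tau$ even, $(-1)^{\tau+1}=-1$ and $(-1)^{\tau}=1$, so $P(-1)=0$ is equivalent to $Q=-(1+m(\tau))$, which forces $\beta N^*_+=2/(1-m(\tau))$; setting $u:=1-m(\tau)\in(0,1)$ (the admissible range $m(\tau)\in(0,1)$) and equating with $\beta N^*_+=-\ln u-\alpha$ gives $\alpha+\ln u+2/u=0$. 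Since $\frac{d}{du}\bigl(\alpha+\ln u+2/u\bigr)=(u-2)/u^2<0$ on $(0,1)$ and the value at $u=1$ is $\alpha+2>0$, the left side exceeds $\alpha+2>0$ throughout $(0,1)$, so $P(-1)=0$ has no admissible solution, and hence no period-doubling bifurcation exists when $\tau$ is even. When $\tau$ is odd, $m(\tau)=(a-be^{-K(\tau+1)})\overline{p}^{\tau+1}$ is affine and strictly increasing in $a$ (see \eqref{eq:mtau}), so $m(\tau)=1-e^{-\alpha-2}$ has the unique root $a^\ast=\overline{p}^{-(\tau+1)}(1-e^{-\alpha-2})+be^{-K(\tau+1)}$; since $1-e^{-\alpha}<1-e^{-\alpha-2}<1$, the positive equilibrium $N^*_+$ exists at $a=a^\ast$ and model assumption (A) holds there.

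It remains, for $\tau$ odd, to verify the flip-bifurcation hypotheses at $a=a^\ast$, where $Q=Q^\ast:=-e^{-\alpha-2}$. (i) \emph{Simplicity:} $P'(-1)=(\tau+1)(-1)^{\tau}-Q^\ast\tau(-1)^{\tau-1}=-(\tau+1)+\tau e^{-\alpha-2}<0$, so $-1$ is a simple root. (ii) \emph{No other unit-modulus root at $a^\ast$:} if $e^{i\theta}$ with $\theta\in(0,\pi)$ solved $P=0$, then $e^{i\tau\theta}\bigl(e^{i\theta}-Q^\ast\bigr)=m(\tau)$, and taking moduli gives $1-2Q^\ast\cos\theta+(Q^\ast)^2=m(\tau)^2$; substituting $Q^\ast=m(\tau)-1$ reduces this to $(1-m(\tau))(1+\cos\theta)=0$, impossible for $\theta\in(0,\pi)$, while $P(1)=2e^{-\alpha-2}\neq0$ rules out $\lambda=1$. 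Combined with Theorem~\ref{thm:GASNstar} (all characteristic roots lie strictly inside the unit disk for $m(\tau)<1-e^{-\alpha-2}$, i.e. for $a<a^\ast$) and the continuous dependence of the roots on $a$, we conclude that at $a=a^\ast$ exactly the root $-1$ lies on the unit circle and all others lie strictly inside. (iii) \emph{Transversality:} viewing $Q$ as a function of $v:=1-m(\tau)$, we have $Q=v(1+\alpha+\ln v)$, so $dQ/dv=2+\alpha+\ln v$, which vanishes precisely at $v=e^{-\alpha-2}$, i.e. at $a=a^\ast$; hence $\partial_a P(-1)\big|_{a^\ast}=Q'(a^\ast)-m'(a^\ast)=-\overline{p}^{\tau+1}\neq0$, and implicit differentiation gives $d\lambda/da\big|_{a^\ast}=-\partial_aP(-1)/P'(-1)\neq0$, a nonzero real crossing speed.

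With (i)--(iii) in hand, a center-manifold reduction of \eqref{EqT1R} (viewed as a map on $\mathbb{R}^{\tau+1}$) to a one-dimensional map at $a=a^\ast$ yields, by the flip-bifurcation theorem, a period-doubling bifurcation at $a=a^\ast$, provided the reduced map's cubic nondegeneracy coefficient is nonzero --- a routine but laborious computation. I expect the main obstacle to be step (ii): ruling out any further unit-modulus root at $a^\ast$ rests on the short modulus identity above together with importing the one-sided interior-root information from Theorem~\ref{thm:GASNstar}, and one must track the parity factors $(-1)^{\tau+1}$, $(-1)^{\tau}$, $(-1)^{\tau-1}$ carefully throughout; the nondegeneracy check, if performed in full, is the remaining bookkeeping.
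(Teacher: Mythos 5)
Your proposal is correct and follows essentially the same route as the paper: form the characteristic polynomial $\lambda^{\tau+1}-Q\lambda^{\tau}-m(\tau)$ at $N^*_+$ with $Q=(1-m(\tau))(1-\beta N^*_+)$, substitute $\lambda=-1$, and split on the parity of $\tau$, obtaining the unique critical value $m(\tau)=1-e^{-\alpha-2}$ (hence a unique $a^*$) for odd $\tau$ and an unsolvable equation for even $\tau$ --- your monotonicity argument for $\alpha+\ln u+2/u$ on $(0,1)$ is an equivalent repackaging of the paper's maximization of $\Psi(m)=(m-1)(1+\alpha+\ln(1-m))$. You in fact go further than the paper by verifying simplicity of the root $-1$, excluding other unit-modulus roots, and checking transversality; like you, the paper stops short of computing the cubic nondegeneracy coefficient.
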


\begin{theorem}\label{thm:PeriodNS}
       Consider \eqref{EqT1R} with model assumption (A) given in \eqref{eq:A}. If $\tau$ is even, then there exists a 
       unique $a^*>0$ such that  the system undergoes a Neimark--Sacker bifurcation at $a=a^*$ for $N^*>0$. 
\end{theorem}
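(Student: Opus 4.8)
The plan is to linearize \eqref{EqT1R} about the positive equilibrium $N^*_+$ and track the modulus of the dominant eigenvalue(s) of the associated characteristic equation as the parameter $a$ varies, isolating the case $\tau$ even. Writing $N_t = N^*_+ + u_t$, the linearization has the form $u_{t+1} = \lambda_0 u_t + m(\tau) u_{t-\tau}$, where $\lambda_0 := p(N^*_+) + N^*_+ p'(N^*_+) = (1-\beta N^*_+)\,e^{-\alpha-\beta N^*_+} = (1-\beta N^*_+)(1-m(\tau))$ using the equilibrium identity $e^{-\alpha-\beta N^*_+} = 1-m(\tau)$. The characteristic equation is therefore
\begin{equation*}
\chi(z) = z^{\tau+1} - \lambda_0 z^{\tau} - m(\tau) = 0 .
\end{equation*}
A Neimark--Sacker bifurcation occurs when a simple pair of complex-conjugate roots $z = e^{\pm i\theta}$, $\theta \in (0,\pi)$, crosses the unit circle transversally while all other roots stay strictly inside. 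First I would substitute $z = e^{i\theta}$ into $\chi$, separate real and imaginary parts, and obtain two real equations relating $\theta$, $\lambda_0$, and $m(\tau)$; eliminating one variable gives a single scalar condition that, together with $m(\tau)<1$ and $1-e^{-\alpha}<m(\tau)$, pins down the bifurcation locus. Because everything depends on $a$ only through $m(\tau)$, which by \eqref{eq:mtau} is a strictly monotone (for fixed $\tau$, once $a$ is large enough) affine-times-constant function of $a$, uniqueness of $a^*$ will reduce to monotonicity of the crossing condition in $m(\tau)$.

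The key steps, in order, are: (i) derive $\chi(z)=z^{\tau+1}-\lambda_0 z^{\tau}-m(\tau)$ and record $\lambda_0 = (1-\beta N^*_+)(1-m(\tau))$; (ii) rule out the real crossings — by Theorem~\ref{thm:PeriodD} a root at $z=-1$ (period-doubling) is impossible when $\tau$ is even, and a root at $z=+1$ would force $1 - \lambda_0 - m(\tau)=0$, i.e. $(1-\beta N^*_+)(1-m(\tau)) = 1-m(\tau)$, hence $\beta N^*_+ = 0$, contradicting $N^*_+>0$; so the only way a root leaves the unit disk is through a complex-conjugate pair, which is exactly the Neimark--Sacker scenario; (iii) show that for $m(\tau)$ just above the existence threshold $1-e^{-\alpha}$ (equivalently $a$ just above the value making $N^*_+=0$) the equilibrium is LAS by Theorem~\ref{thm:GASNstar}, so all roots start inside the unit circle, while for $m(\tau)$ large enough $N^*_+$ is unstable, again by Theorem~\ref{thm:GASNstar}; hence by continuity of roots in $a$ there is some $a$ at which a pair sits on the unit circle; (iv) prove this crossing value is \emph{unique} by showing the relevant scalar function of $m(\tau)$ (obtained in step (i)) is strictly monotone on the admissible interval, and that $\beta N^*_+<2$-type bounds keep the crossing pair simple and genuinely complex (not $\pm1$); (v) verify the transversality condition $\frac{d}{da}|z(a)| \neq 0$ at $a=a^*$ by implicit differentiation of $\chi(z(a),a)=0$, and check the standard non-degeneracy (non-resonance) conditions $z^k \neq 1$ for $k=1,2,3,4$, which hold because step (ii) excludes $z = \pm 1$ and the remaining low-order resonances are excluded on an open dense set of parameters or by a direct computation with the explicit $\chi$.

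The main obstacle I anticipate is step (iv)--(v): establishing that the complex pair crossing is \emph{unique} in $a$ and \emph{transversal}, and that no other root simultaneously touches or has already exited the unit circle (so that the bifurcation is a bona fide Neimark--Sacker and not something more degenerate). Controlling all $\tau+1$ roots of $\chi(z)=z^{\tau+1}-\lambda_0 z^{\tau}-m(\tau)$ uniformly in $\tau$ is delicate; I would handle it by an argument-principle / Rouché estimate on $|z|=1$ showing exactly two roots lie outside once $m(\tau)$ passes the critical value, combined with the observation that $\chi$ has at most one sign change in its coefficient pattern governing real positive roots, and that the dependence on $a$ enters monotonically through $m(\tau)$ and through $N^*_+(\tau)$ (whose monotonicity in the relevant regime can be read off from $N^*_+ = (-\ln(1-m(\tau))-\alpha)/\beta$). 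The parity hypothesis ``$\tau$ even'' is what forecloses the competing period-doubling route and thereby forces the first instability to be Neimark--Sacker; I would make sure this is used exactly where $z=-1$ is excluded.
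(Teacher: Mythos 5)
Your overall skeleton coincides with the paper's: linearize at $N^*_+$ to obtain the characteristic polynomial $\lambda^{\tau+1}-(1-m(\tau))(1-\beta N^*_+)\lambda^{\tau}-m(\tau)=0$, use the parity of $\tau$ together with Theorem~\ref{thm:PeriodD} to exclude a root at $-1$, observe that a root at $+1$ forces $\beta N^*_+=0$ (equivalently $m(\tau)=1-e^{-\alpha}$, the existence threshold for $N^*_+$), and conclude that any loss of stability of $N^*_+$ must occur through a complex-conjugate pair leaving the unit circle. Up to that point you and the paper agree, and your exclusion of $z=+1$ via $\beta N^*_+=0$ is a clean equivalent of the paper's computation.

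The gap is that the part of the argument that actually produces $a^*$ --- your steps (iii)--(v) --- is left as a plan rather than carried out, and that is where the content of the theorem lies. You propose substituting $z=e^{i\theta}$, eliminating variables, invoking Rouch\'e and argument-principle estimates uniformly in $\tau$, and verifying transversality and non-resonance; none of this is executed, and you yourself flag it as the anticipated obstacle. The paper avoids this machinery by passing to the Jury stability criteria for the polynomial with coefficients $a_1=-(1-m)(1+\alpha+\ln(1-m))$ and $a_{\tau+1}=-m$: it computes $b_n=1-m^2$ and $b_1=-m(1-m)(1+\alpha+\ln(1-m))$ and exhibits the explicit value $m^*=1-e^{-1/(1-e^{-\alpha})-2-\alpha}$ at which the stability requirement $|b_n|>|b_1|$ fails; since the real crossings at $\pm1$ have been excluded, the violated Jury condition must correspond to a complex pair on the unit circle, i.e.\ a Neimark--Sacker bifurcation, and $a^*$ is then read off from \eqref{eq:mtau} because $m(\tau)$ is affine and strictly increasing in $a$ with slope $\overline{p}^{\tau+1}$ for fixed $\tau$. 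If you want to complete your version, the most economical route is to substitute this Jury computation for your steps (iv)--(v): it reduces the question of when a root exits the unit disk to a one-variable inequality in $m(\tau)$ on the admissible interval $(1-e^{-\alpha},1)$, which is what underwrites the uniqueness claim. Also note a small point in your step (iii): Theorem~\ref{thm:GASNstar} gives local asymptotic stability for $m(\tau)<1-e^{-\alpha-2}$ and instability for $m(\tau)>1-e^{-\alpha-2}$, so your continuity argument does yield existence of a crossing, but by itself it does not show the crossing is simple, genuinely complex, or unique --- the explicit computation (or a monotonicity argument you have not supplied) is still required.
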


Theorems~\ref{thm:PeriodD} and \ref{thm:PeriodNS} state that for even $\tau$, the Ricker-Constant model exhibits a transcritical and a Neimark--Sacker bifurcation when changing the model parameter $a$ that describes the egg productivity. These bifurcation patterns are illustrated in Fig.~\ref{fig:RCtaueven_bifurcation}. 
However, when $\tau$ is odd, the Ricker-Constant model exhibits a transcritical bifurcation, one or two period doubling bifurcations, and a Neimark--Sacker bifurcation as the model parameter $a$ is being changed, see Figs.~\ref{fig:RCtau3_bifurcation} \& \ref{fig:RCtau5_bifurcation} for an illustration. Although the differences in bifurcation patterns, dependent on whether $\tau$ is even or odd, are hard to explain biologically, it is worth noticing that a period-doubling bifurcation and a Neimark--Sacker bifurcation may exhibit similarities in their appearances. A Neimark--Sacker bifurcation results in the existence of a  closed invariant curve but, since the time is discrete, the solution ``jumps'' along this closed curve and may do so in such large distances that it may first appear as a two-cycle. The latter is the result of a period-doubling bifurcation. 

Using XPPAUT, we analyze the effect of $\overline{p}$ and $a$ with a two parameter bifurcation plot. Because $\overline{p}$ is the survival of immature individuals and $a$ is a measure of the egg production potential, $\overline{p}$ and $a$ should, through different underlying mechanisms, affect the existence of a stable interior equilibrium and potentially whether a period-doubling or a Neimark--Sacker bifurcation occurs. Fig. \ref{fig:RC_apbifurcation}a shows the transcritical bifurcation and either first a period-doubling (odd $\tau$) or a Neimark-Sacker (even $\tau$) bifurcation for different values of $\tau$, $a$, and $\overline{p}$. Increasing either $\overline{p}$ or $a$, while keeping the other parameter constant, has the same general pattern of moving the dynamics through first a transcritical bifurcation and then inducing oscillations. Increasing $\tau$ shifts these bifurcation curves up and to the left, with the effect of increasing the minimum $\overline{p}$ and decreasing the minimum $a$ that is needed for the existence of an interior equilibrium (or inducing oscillations). Because of this interaction, depending on the parameters, increasing $\tau$ can change the equilibrium from zero to positive (A), or induce oscillations where no oscillations were occurring (B), or remove oscillations when oscillations were occurring (C). Finally, for sufficiently large $\tau$ values, continuing to increase $\tau$ changes the equilibrium from positive to zero (D), causing the species to go extinct as already established in Lemma~\ref{lem:1} (Fig.~\ref{fig:RC_apbifurcation}a). We show this effect with two orbit diagrams, where $a=13.0$ \& $\overline{p}=0.57205$ and $a=13.0$ \& $\overline{p}=0.35$ (Fig.~\ref{fig:RC_apbifurcation}b\&c).

\begin{figure}[h!]
    \centering
        \includegraphics[width=\textwidth]{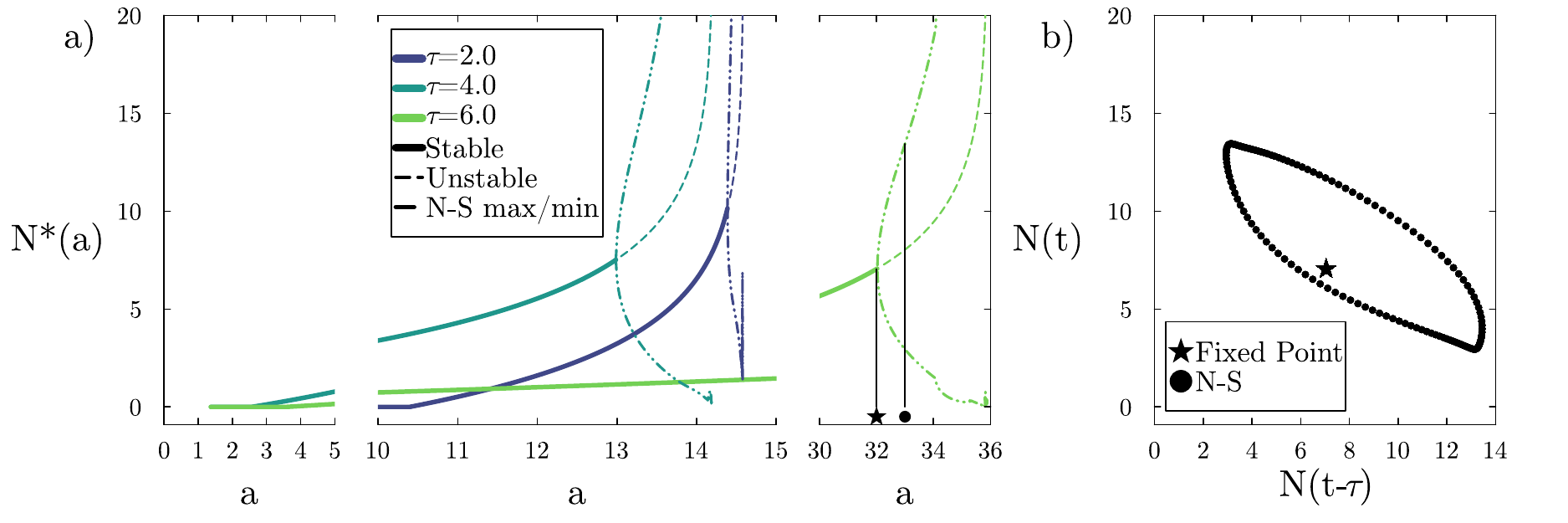}
    \caption{a) Bifurcation diagram for the Ricker-Constant model with $\tau=2,4,6$. Vertical lines with a star and circle denote the $a$ values used in the time embedding plot in b). b) Time embedding for the Ricker-Constant model with $\tau=6$ and $a=32.0$ (Star) or $a=33.0$ (Circle). In both plots, $\alpha=0.1$, $\beta=0.3$, $K=1.0$, and $b=200.0$}
    \label{fig:RCtaueven_bifurcation}
\end{figure}

\begin{figure}[h!]
    \centering
        \includegraphics[width=\textwidth]{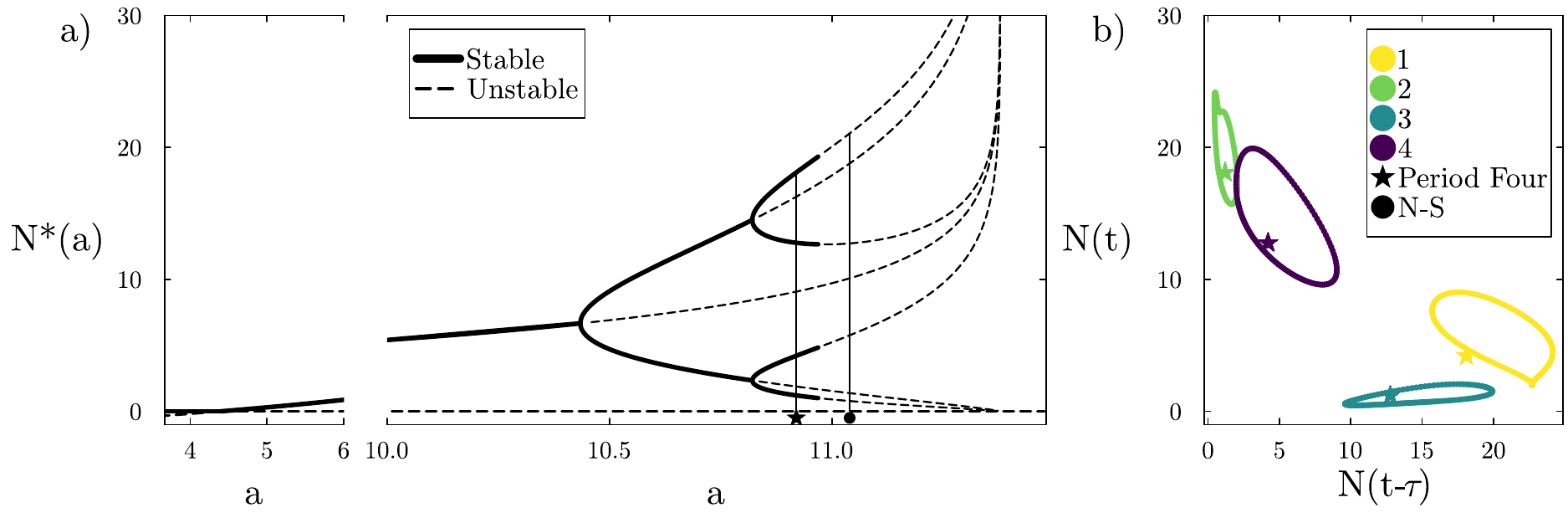}
    \caption{a) Bifurcation diagram for the Ricker-Constant model with $\tau=3$. Vertical lines with star and circle denote the $a$ values used in the time embedding plot in b). b) Time embedding for the Ricker-Constant model with $\tau=3$ and $a=10.92$ (Star) or $a=11.04$ (Circle). Colour (yellow, green, blue, purple) refers to the ordering of $N(t)$ in the sequence. The time embedding plots were from the last 1000 time units from a 1,000,000 long simulation. In both plots, $\alpha=0.1$, $\beta=0.3$, $K=1.0$, and $b=200.0$}
    \label{fig:RCtau3_bifurcation}
\end{figure}

\begin{figure}[h!]
    \centering
            \includegraphics[width=\textwidth]{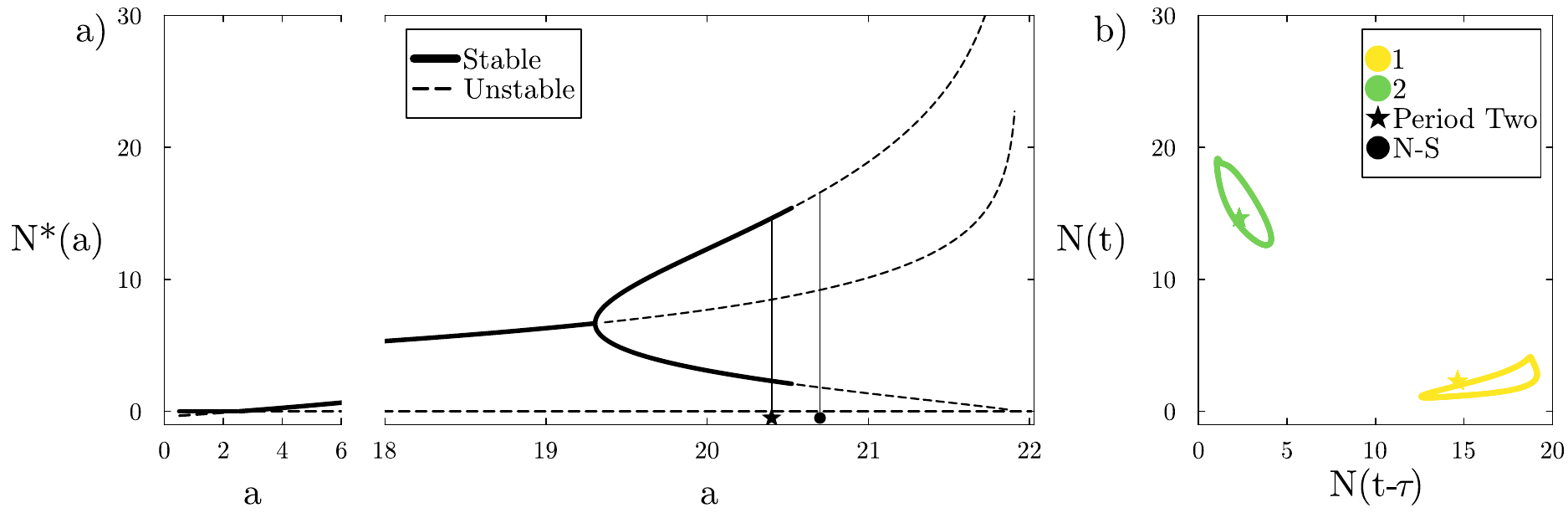}
    \caption{a) Bifurcation diagram for the Ricker-Constant model with $\tau=5$. Vertical lines with star and circle denote the $a$ values for the time embedding plot in b). b) Time embedding for the Ricker-Constant model with $\tau=5$ and $a=20.2$ (Star) or $a=20.7$ (Circle). Colour (yellow, green) refers to the ordering of $N(t)$ in the sequence. The time embedding plots were from the last 1000 time units from a 1,000,000 long simulation. In both plots, $\alpha=0.1$, $\beta=0.3$, $K=1.0$, $b=200.0$}
    \label{fig:RCtau5_bifurcation}
\end{figure}

\newpage

\begin{figure}[h!]
    \centering
            \includegraphics[width=\textwidth]{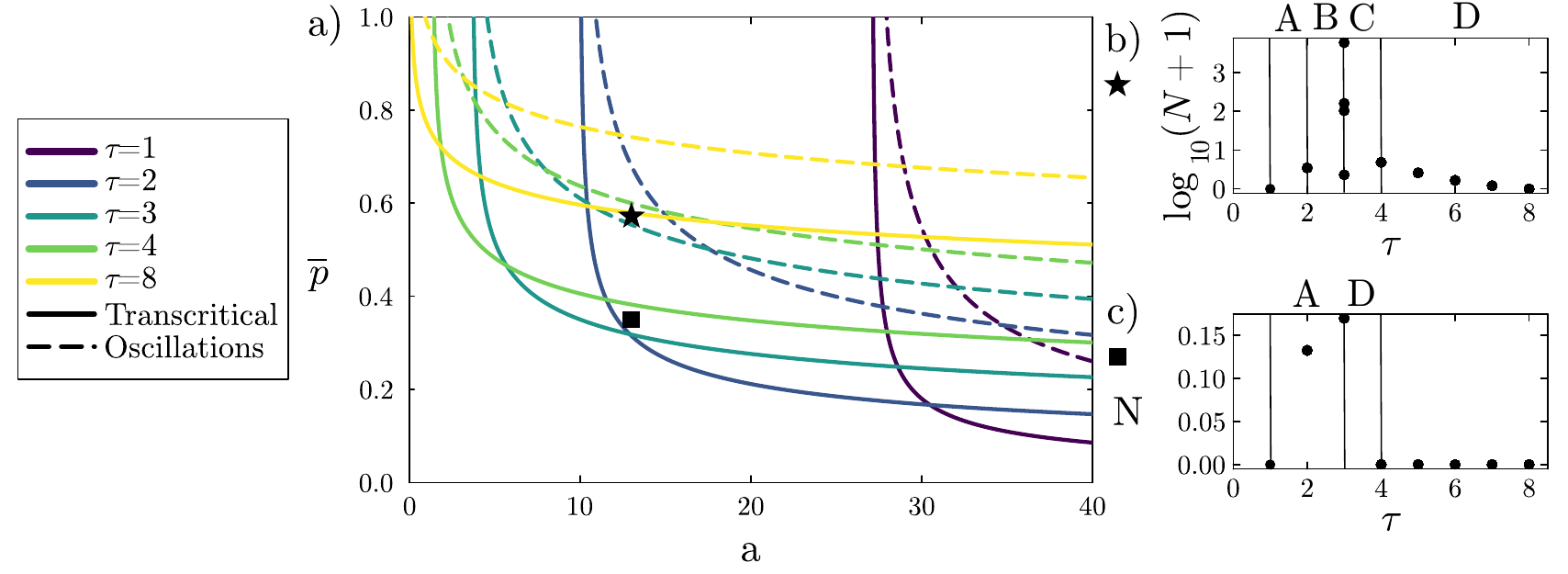}
    \caption{a) Two parameter ($a$ \& $\overline{p}$) bifurcation diagram for the Ricker-Constant model. Note that the oscillations are either the Neimark--Sacker bifurcation for even $\tau$ or the first period doubling bifurcation for odd $\tau$. The star refers to $a=13.0$ \& $\overline{p}=0.57205$. The square refers to $a=13.0$ \& $\overline{p}=0.35$. b) \& c) Orbit diagrams with changing $\tau$ corresponding to the star and square parameter combinations respectively. In b) (star), $\overline{p}=0.57205$, and in c) (square), $\overline{p}=0.35$. Orbit diagrams were created from the last 50 time units of a 1,000,000 long simulation. For all plots, $\alpha=0.1$, $\beta=0.3$, $K=1.0$, $b=200.0$. A, B, C, D refer to the different scenarios when increasing $\tau$: (A) equilibrium changes from zero to positive, (B) dynamics become oscillatory, (C)  dynamics returns equilibrium, and (D) decreasing to extinction.}
    \label{fig:RC_apbifurcation}
\end{figure}

\newpage

\section{Scenario ii) \textemdash Cohort-density-dependent survival of immature individuals }

Thus far, we considered the special case where the survival of immature individuals, over one time period $(t,t+1)$ is constant, determined by $\overline{p}\in (0,1)$. In that case, the survival of the entire immature period from $t-\tau$ to $t+1$ is $\overline{p}^{\tau+1}$. This assumed that any density-dependence is negligible. We now extend this approach by considering a within cohort density-dependence effect. Here, we assume that individuals are competing over resources only within their cohort and thus do not compete or interact with individuals in other cohorts. This is a reasonable first approximation of ontogenetic niche differentiation. Although inter-cohort density-dependence regularly occurs in species (e.g.,  \cite{gamelonDensityDependenceAgestructured2016}), within-cohort density-dependence is particularly important. For example, within-cohort density-dependence was found to impact the population patterns of Atlantic cod (\textit{Gadus morhua} L.) \cite{stensethDynamicsCoastalCod1999}. We leave future work to explore more complicated within- and inter-cohort density-dependence effects and consider henceforth that within cohort density-dependence effects are much stronger than inter-cohort density-dependence effects.

As in the previous section, we again consider a Beverton--Holt and a Ricker survival for the immature individuals. We remind the reader that, in this case, we do consider the same survival for the mature and immature individuals. That is, if the immature individuals follow a Beverton--Holt (Ricker) survival function over one time interval $(t,t+1)$, then  so do the mature individuals.

\subsection{Special Case: Beverton--Holt survival for immature individuals}

We consider the general delay population model \eqref{EqmainT1}, where the survival of mature individuals follows a Beverton--Holt survival, as given in \eqref{eqn:BHmatsurvival}. As discussed in \cite{StWodelay2}, if the immature individuals also have a Beverton--Holt survival over one time interval, then  the survival fraction of newly born individuals at time $t-\tau$,  $g(\tau)N_{t-\tau}$, is derived in \cite{StWodelay2} as \eqref{eq:BHimmsurvival}. 
Then, \eqref{EqmainT1} is of the form  
 \begin{equation}\label{EqT2BH}
N_{t+1}=F_2(N_t,N_{t-\tau})=\frac{1}{1+\alpha+\beta N_t}N_t + 
m(\tau, N_{t-\tau})N_{t-\tau}, , \quad  N_i=N_i^0 \geq 0, \, i  \in \{-\tau, -\tau-1, \ldots, 0\}
\end{equation}
with $m(\tau, N_{t-\tau})$ given in \eqref{eq:mtaugen}, that is, 
\begin{equation}\label{eq:mtaudenBH}
m(\tau, N_{t-\tau})= g(\tau)\widetilde{p}(\tau,N_{t-\tau})=   \frac{D}{D(1+D)^{\tau+1}+((1+D)^{\tau+1}-1)Cg(\tau)N_{t-\tau}} g(\tau),
\end{equation}
for $g(\tau)$ given in \eqref{eq:growth}, assuming $a>be^{-K}$. As before, we consider \eqref{EqT2BH} with 
\begin{equation}\label{IniC2}
    N_{-r}\geq 0, \quad \mbox{for} \quad r\in \{0, \ldots, \tau\}, \quad \max\{N_{0}, \ldots, N_{-\tau}\}>0. 
\end{equation}
As before, if all initial conditions are zero, that is, $N_{-r}=0$ for all $r\in \{0, \ldots, \tau\}$, then $N_t=0$ for all $t\geq 0$. Furthermore, the extinction $N^*_0=0$ is still an equilibrium. Since a positive equilibrium $N^*_+$ must satisfy the equation, for $x=N^*_+$, 
$$ 1-\frac{1}{1+\alpha+\beta x}= 1-p(x)=m(\tau, x)=
\frac{D}{D(1+D)^{\tau+1}+((1+D)^{\tau+1}-1)Cg(\tau)x} g(\tau),$$  and $1-p(x)$ is increasing in $x$ and $m(\tau, x)$ is decreasing in $x$ for fixed $\tau\in \mathbb{N}_0$, there can be at most one intersection. We therefore have immediately the following lemma regarding the existence of non-negative equilibria. 

\begin{lemma}\label{lem:BHBHinequlity}
Consider \eqref{EqT2BH} with \eqref{eq:mtaudenBH} and assume \eqref{IniC2}. If 
\begin{equation}\label{eq:condN0only}
    1-\frac{1}{1+\alpha}\geq \frac{g(\tau)}{(1+D)^{\tau+1}},
\end{equation}
where $g(\tau)$ is given in \eqref{eq:growth}, then 
the only non-negative equilibrium is the trivial equilibrium $N^*_0=0$. If, instead, 
\begin{equation}\label{eq:condNstar}
    1-\frac{1}{1+\alpha}< \frac{g(\tau)}{(1+D)^{\tau+1}},
\end{equation}
then there exists a unique positive equilibrium $N^*_+$.
\end{lemma}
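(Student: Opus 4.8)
The plan is to reduce the equilibrium equation to a one‑variable monotone comparison. A nonzero equilibrium $x=N^*_+$ of \eqref{EqT2BH} must satisfy $x = p(x)x + m(\tau,x)x$ with $p(x)=\tfrac{1}{1+\alpha+\beta x}$; dividing by $x>0$, this is equivalent to $1-p(x)=m(\tau,x)$, i.e.\ to the vanishing on $(0,\infty)$ of $\phi(x):=\bigl(1-p(x)\bigr)-m(\tau,x)$. So I would set $h_1(x):=1-p(x)$ and $h_2(x):=m(\tau,x)$ and show that these two curves meet on $(0,\infty)$ exactly when \eqref{eq:condNstar} holds. (That $N^*_0=0$ is always an equilibrium is immediate.)

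Next I would record the elementary properties of $h_1$ and $h_2$. The function $h_1$ is continuous and strictly increasing on $[0,\infty)$ (since $p$ is strictly decreasing there), with $h_1(0)=1-\tfrac{1}{1+\alpha}=\tfrac{\alpha}{1+\alpha}$ and $h_1(x)\to 1$ as $x\to\infty$. For $h_2$: since $g(\tau)=a-be^{-K(\tau+1)}\geq a-be^{-K}>0$ and $(1+D)^{\tau+1}-1>0$ for every $\tau\in\mathbb{N}_0$, the expression \eqref{eq:mtaudenBH} is $g(\tau)D$ divided by an affine function of $x$ with strictly positive slope $((1+D)^{\tau+1}-1)Cg(\tau)$ and positive intercept $D(1+D)^{\tau+1}$; hence $h_2$ is continuous and strictly decreasing on $[0,\infty)$, with $h_2(0)=\tfrac{g(\tau)}{(1+D)^{\tau+1}}$ and $h_2(x)\to 0$ as $x\to\infty$. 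Consequently $\phi=h_1-h_2$ is continuous and strictly increasing, so it has at most one zero — which recovers the ``at most one intersection'' remark preceding the lemma.

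Finally I would split on the sign of $\phi(0)=\tfrac{\alpha}{1+\alpha}-\tfrac{g(\tau)}{(1+D)^{\tau+1}}$. If \eqref{eq:condN0only} holds then $\phi(0)\geq 0$, so by strict monotonicity $\phi(x)>0$ for all $x>0$, meaning there is no positive equilibrium and $N^*_0=0$ is the only non‑negative one. If instead \eqref{eq:condNstar} holds then $\phi(0)<0$, while $\phi(x)\to 1>0$ as $x\to\infty$; by the intermediate value theorem $\phi$ has a zero $x^*\in(0,\infty)$, unique by strict monotonicity, which is the asserted unique positive equilibrium $N^*_+$. One also notes $m(\tau,x^*)=1-p(x^*)\in(0,1)$, so this equilibrium is consistent with model assumption (A).

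There is no serious obstacle here: the argument is a clean monotone‑comparison followed by an IVT application. The only points requiring a little care are verifying that $m(\tau,\cdot)$ is genuinely \emph{strictly} decreasing — this rests on $g(\tau)>0$ and $(1+D)^{\tau+1}>1$, both guaranteed by the standing hypotheses $a>be^{-K}$ and $\tau\in\mathbb{N}_0$ — and handling the boundary case $\phi(0)=0$ correctly; this is precisely why the dichotomy is stated with ``$\geq$'' in \eqref{eq:condN0only} and strict ``$<$'' in \eqref{eq:condNstar}, the equality case contributing no positive equilibrium.
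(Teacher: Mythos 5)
Your proposal is correct and follows essentially the same route as the paper: the text preceding the lemma reduces a positive equilibrium to the intersection of the increasing function $1-p(x)$ with the decreasing function $m(\tau,x)$ and then reads off existence/uniqueness from the values at $x=0$, which is exactly your monotone-comparison plus intermediate-value argument. Your write-up merely makes explicit the endpoint values and limits that the paper leaves implicit.
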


A positive equilibrium, $N^*_+$, satisfies 
$$\frac{\alpha+\beta N^*_+}{1+\alpha+\beta N^*_+}=\frac{Dg(\tau)}{D(1+D)^{\tau+1}+((1+D)^{\tau+1}-1)Cg(\tau)N^*_+},$$
which after rearranging is equivalent to $a_0+a_1N^*_++a_2(N^*_+)^2=0$, where 
\begin{align*}
    a_0:&=D (\alpha ((1+D)^{\tau+1}- g(\tau)) - g(\tau))\\
    a_1:&=-((\alpha C + \beta D) g(\tau)) + (1+D)^{\tau+1} (\beta D + \alpha C g(\tau))\\
    a_2:&=((1+D)^{\tau+1}-1)\beta C g(\tau)>0.
\end{align*}
Thus, if existent, the unique positive equilibrium $N^*_+$ is given by 
\begin{equation}\label{eq:NstarDBH}
    N^*_+:=\frac{-a_1+\sqrt{a_1^2-4a_0a_2}}{2a_2}.
\end{equation}

We conjecture that when a positive equilibrium exists, this equilibrium also exhibits a bell-shape pattern similar to the BH-Constant model (Fig.~\ref{fig:Nstartau}) as indicated by Fig.~\ref{fig:BHBH}. A graphical illustration of the behavior of $N^*_+(\tau)$ with respect to the maturation delay $\tau$, immature density-dependence ($C$), mature density-dependence ($\beta$), immature mortality ($D$), and mature mortality ($\alpha$) is depicted in Fig.~\ref{fig:BHBH}. Similar to the BH-Constant model, Fig.~\ref{fig:BHBH} shows that there exist two critical values of delay. Only if the delay is between the two critical delay thresholds is the trade-off between reproductive strength and survival sufficiently balanced to secure the survival of the species. Again, we see that the positive equilibrium increases and then decreases, indicating an optimal delay for the trade-off problem. In terms of other parameters affecting the existence and magnitude of the positive interior equilibrium, we examine $C$, $\beta$, $D$, and $\alpha$ because there may be differential impacts of density-dependence and mortality for the immature individuals versus the mature individuals. As per Lemma \ref{lem:BHBHinequlity} we see that only $D$ and $\alpha$ affects what $\tau$ value allows the existence of the positive interior equilibrium. Decreasing $C$, $\beta$, $D$, or $\alpha$ lead to higher equilibrium N values. Decreasing $C$ and $D$ has a larger effect on the equilibrium compared to decreasing $\alpha$ and $\beta$ (Fig. \ref{fig:BHBH}).

\begin{figure}[H]
    \centering
    \includegraphics[width=0.8\linewidth]{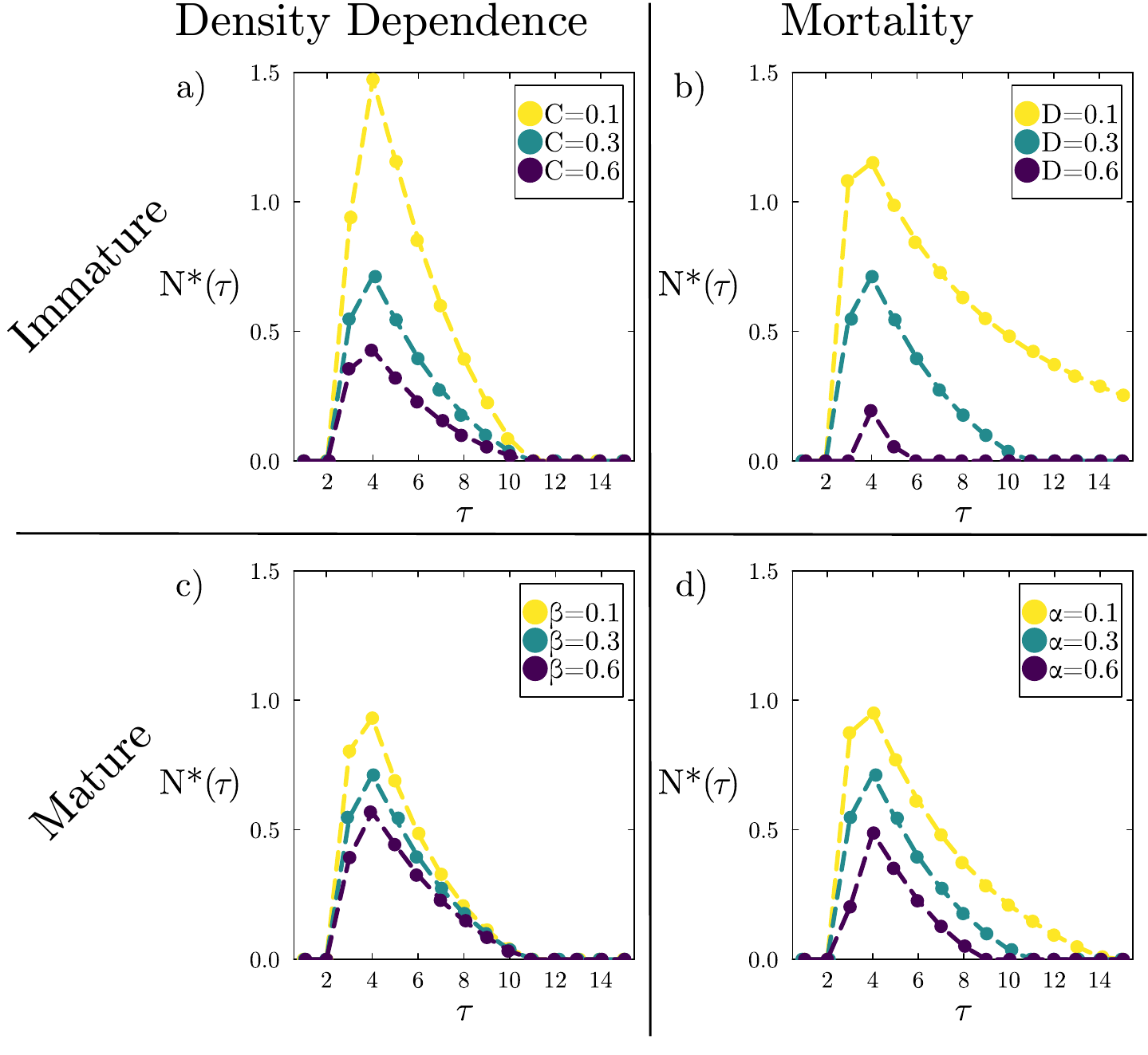}
    \caption{Dependence of $N^*$ on $\tau$ for different values of C (subplot a), D (subplot b), $\alpha$ (subplot c), and $\beta$ (subplot d). For all plots, base parameters are $C=0.3$, $D=0.3$, $\alpha=0.3$, $\beta=0.3$, $a=5.0$ $b=200.0$, and $K=1.0$}
    \label{fig:BHBH}
\end{figure}

We continue our investigation of \eqref{EqT2BH} by analyzing its local and global dynamics where Lemmas and Theorems~\ref{T2N0}-\ref{thm:GASNstarBH} complete the global analysis of the BH--BH model for fixed  $\tau\in \mathbb{N}$. Based on \cite{Clark1976}, an equilibrium $N^*$ of \eqref{EqT2BH} is locally asymptotically stable if 
$$\left|\frac{\partial F_2(u,v)}{\partial u}\right|_{u=v=N^*}+\left|\frac{\partial F_2(u,v)}{\partial v}\right |_{u=v=N^*}<1,$$
which implies, in our case that 
\begin{equation}\label{T2Clark}
    \frac{1+\alpha}{(1+\alpha+\beta N^*)^2} + 
\frac{D^2 g(\tau)(1+D)^{\tau+1} }{(D(1+D)^{\tau+1}+((1+D)^{\tau+1}-1)Cg(\tau)N_{t-\tau})^2} <1.
\end{equation}

This immediately allows the formulation of the following result. 
\begin{lemma}\label{T2N0}
    Consider \eqref{EqT2BH} with \eqref{eq:mtaudenBH} and assume \eqref{IniC2}. If strict inequality in \eqref{eq:condN0only} holds, then $N^*_0$ is locally asymptotically stable. In the case of equality, $N^*_0$ is stable.   
\end{lemma}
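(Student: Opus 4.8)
The plan is to read off the local asymptotic stability of $N^*_0=0$ directly from the Clark criterion \eqref{T2Clark}, and to dispatch the borderline equality case with a monotone a priori bound. First, evaluate the left-hand side of \eqref{T2Clark} at $N^*=0$: the first summand becomes $\frac{1+\alpha}{(1+\alpha)^2}=\frac{1}{1+\alpha}$, while substituting $N_{t-\tau}=0$ into the second summand and cancelling gives $\frac{D^2 g(\tau)(1+D)^{\tau+1}}{\big(D(1+D)^{\tau+1}\big)^2}=\frac{g(\tau)}{(1+D)^{\tau+1}}$. Hence \eqref{T2Clark} at $N^*_0$ reads $\frac{1}{1+\alpha}+\frac{g(\tau)}{(1+D)^{\tau+1}}<1$, which is exactly $\frac{g(\tau)}{(1+D)^{\tau+1}}<1-\frac{1}{1+\alpha}$, i.e.\ strict inequality in \eqref{eq:condN0only}. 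So when \eqref{eq:condN0only} is strict, $N^*_0$ is locally asymptotically stable by \cite{Clark1976}, proving the first assertion.

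For the equality case Clark's criterion only yields $\big|\partial F_2/\partial u\big|+\big|\partial F_2/\partial v\big|=1$ at $N^*_0$ — indeed $\lambda=1$ then solves the characteristic equation $\lambda^{\tau+1}-\tfrac{1}{1+\alpha}\lambda^{\tau}-\tfrac{g(\tau)}{(1+D)^{\tau+1}}=0$ — so the linearization is non-hyperbolic and stability must be obtained by a direct estimate. I would use that the mature survival satisfies $\tfrac{1}{1+\alpha+\beta N_t}\le\tfrac{1}{1+\alpha}$ for $N_t\ge0$ and that, as recorded just before Lemma~\ref{lem:BHBHinequlity}, $m(\tau,\cdot)$ is decreasing, so $m(\tau,N_{t-\tau})\le m(\tau,0)=\tfrac{g(\tau)}{(1+D)^{\tau+1}}$ for $N_{t-\tau}\ge0$. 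Combining, along any nonnegative solution
$$N_{t+1}=\tfrac{1}{1+\alpha+\beta N_t}N_t+m(\tau,N_{t-\tau})N_{t-\tau}\le\Big(\tfrac{1}{1+\alpha}+\tfrac{g(\tau)}{(1+D)^{\tau+1}}\Big)\max\{N_t,N_{t-\tau}\}=\max\{N_t,N_{t-\tau}\},$$
where the last step uses equality in \eqref{eq:condN0only}. Therefore $t\mapsto\max\{N_{t-\tau},\dots,N_t\}$ is non-increasing, hence $N_t\le\max\{N_0,\dots,N_{-\tau}\}$ for all $t\ge0$; for any $\varepsilon>0$, starting with initial window bounded by $\varepsilon$ keeps the whole orbit in $[0,\varepsilon)$, which is Lyapunov stability of $N^*_0$.

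The only genuine difficulty is this equality case, where the equilibrium sits exactly at the transcritical bifurcation (the solutions of $\tfrac{g(\tau)}{(1+D)^{\tau+1}}=1-\tfrac{1}{1+\alpha}$) and linear theory is silent; the monotone bound above sidesteps a center-manifold computation and reuses the comparison device already employed in the proof of Theorem~\ref{thm:N0GAS} (where, in the strict case, the same estimate with constant $<1$ even gives global asymptotic stability via \cite{LIZ2002}). Note the inequality above is in fact strict whenever $\max\{N_t,N_{t-\tau}\}>0$, which suggests one could upgrade ``stable'' to ``asymptotically stable'' in the equality case too, but only Lyapunov stability is claimed.
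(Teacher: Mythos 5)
Your proposal is correct and follows essentially the same route as the paper: the strict case is read off from the Clark criterion \eqref{T2Clark} evaluated at $N^*_0=0$, and the equality case is handled by bounding both summands by their values at zero (using monotonicity of the mature survival and of $m(\tau,\cdot)$) to get $N_{t+1}\le\max\{N_t,\dots,N_{t-\tau}\}$, whence Lyapunov stability. The only cosmetic difference is that the paper first substitutes $g(\tau)=\tfrac{\alpha(1+D)^{\tau+1}}{1+\alpha}$ into the recurrence before making the same estimate, whereas you keep the bound abstract; the argument is identical in substance.
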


\begin{proof}
    If the inequality in \eqref{eq:condN0only} is strict, the result follows immediately. If instead equality holds, then, $g(\tau)=\frac{\alpha (1+D)^{\tau+1}}{1+\alpha}$, in which case the recurrence \eqref{EqT2BH} is of the form 
    $$N_{t+1}=\frac{N_t}{1+\alpha+\beta N_t}+\frac{D\alpha N_{t-\tau}}{D(1+\alpha)+((1+D)^{\tau+1}-1)C\alpha N_{t-\tau}}.$$
Let $N_t$ be a solution to the initial conditions with $N_i\leq \delta$ for $i\in \{-\tau, \ldots, 0\}$. 
Note that  $N_{t+1}=F_2(N_t,N_{t-\tau})=F_{21}(N_t)+F_{22}(N_{t-\tau})$ for $F_{21}(N_t)$ and $F_{22}(N_{t-\tau})$ both increasing  in their respective variables. 
To show that the solution converges to $N^*_0=0$, we let $\overline{N}_t=\max\{N_t, N_{t-1},\ldots, N_{t-\tau}\}$. Then, 
\begin{align*}
N_{t+1}&\leq \frac{\overline{N}_t}{1+\alpha+\beta \overline{N}_t}+\frac{D\alpha \overline{N}_{t}}{D(1+\alpha)+((1+D)^{\tau+1}-1)C\alpha \overline{N}_{t}}\\
&\leq  \frac{\overline{N}_t}{1+\alpha}+\frac{D\alpha \overline{N}_{t}}{D(1+\alpha)}=\frac{\overline{N}_t}{1+\alpha}+\frac{\alpha \overline{N}_{t}}{(1+\alpha)}=\overline{N}_t,
\end{align*}
where the second inequality is only an equality if $\overline{N}_t=0$. Thus, $N^*_0=0$ is stable. 
\end{proof}

Understanding $N_{t+1}=F_2(N_t,N_{t-\tau})=F_{21}(N_t)+F_{22}(N_{t-\tau})$ and noticing that $F_{21}(N_t)$ and $F_{22}(N_{t-\tau})$ are both increasing functions in their respective variables, we utilize \cite[Theorem~1.15]{Ladas_2004} to prove the following result. 

\begin{theorem}
    Consider \eqref{EqT2BH} with \eqref{eq:mtaudenBH} and assume \eqref{IniC2}. If \eqref{eq:condN0only} holds, then $N^*_0$ is globally asymptotically stable. 
\end{theorem}

\begin{proof}
    Since $F_{2}$ is increasing in both of its variables, the set $[0, \max_{r\in \{\tau, \ldots, 0\}}\{N_{-r}\}]$ is positively invariant,  and the only solutions to 
    $$r=F_2(r,r)\qquad \mbox{and}\qquad R=F_2(R,R),$$
    is $r=R=0$, solutions to non-negative initial conditions converge to $N^*_0=0$. Since solutions are stable by Lemma~\ref{T2N0}, $N^*_0$ is globally asymptotically stable for non-negative initial conditions.     
\end{proof}

\begin{lemma}
    Consider \eqref{EqT2BH} with \eqref{eq:mtaudenBH} and assume \eqref{IniC2}. If \eqref{eq:condNstar} holds, then the unique positive equilibrium $N^*_+$, given in \eqref{eq:NstarDBH}, is locally asymptotically stable. 
\end{lemma}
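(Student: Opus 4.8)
The plan is to verify the Clark stability condition \eqref{T2Clark} at the positive equilibrium $N^*_+$ given in \eqref{eq:NstarDBH}, exploiting the monotonicity structure of the two summands $F_{21}$ and $F_{22}$. Write $p(x)=\tfrac{1}{1+\alpha+\beta x}$ and $q(\tau,x)=m(\tau,x)=\tfrac{Dg(\tau)}{D(1+D)^{\tau+1}+((1+D)^{\tau+1}-1)Cg(\tau)x}$, so that $F_2(u,v)=p(u)u+q(\tau,v)v$. The equilibrium equation is $1-p(N^*_+)=q(\tau,N^*_+)$, equivalently $p(N^*_+)+q(\tau,N^*_+)=1$. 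The key observation is that both $p(x)x$ and $q(\tau,x)x$ are of the generic form $\tfrac{x}{c+dx}$ with $c,d>0$: indeed $p(x)x=\tfrac{x}{(1+\alpha)+\beta x}$, and $q(\tau,x)x=\tfrac{x}{\frac{D(1+D)^{\tau+1}}{Dg(\tau)}+\frac{((1+D)^{\tau+1}-1)C}{D}x}$ after dividing numerator and denominator by $Dg(\tau)$. For any function of the form $h(x)=\tfrac{x}{c+dx}$ one has $h'(x)=\tfrac{c}{(c+dx)^2}=\tfrac{h(x)}{x}\cdot\tfrac{c}{c+dx}<\tfrac{h(x)}{x}$ for $x>0$, and moreover $0<h'(x)<1/c = h'(0)$. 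Applying this with $h_1=p(\cdot)\,\cdot$ (so $h_1'(x)<p(x)$) and $h_2=q(\tau,\cdot)\,\cdot$ (so $h_2'(x)<q(\tau,x)$), we get at $x=N^*_+$ that the quantity on the left of \eqref{T2Clark} is exactly $h_1'(N^*_+)+h_2'(N^*_+)<p(N^*_+)+q(\tau,N^*_+)=1$. Since both derivatives are nonnegative, the absolute values in Clark's condition are harmless, and local asymptotic stability follows.

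Concretely, the steps I would carry out are: (1) rewrite $F_2(u,v)=h_1(u)+h_2(v)$ with $h_1,h_2$ in the canonical Möbius form $x\mapsto x/(c+dx)$, recording the explicit constants $c_1=1+\alpha$, $d_1=\beta$, and $c_2 = (1+D)^{\tau+1}/g(\tau)$, $d_2=((1+D)^{\tau+1}-1)C/D$; (2) compute $h_i'(x)=c_i/(c_i+d_ix)^2$ and observe the identity $h_i'(x)=\bigl(h_i(x)/x\bigr)\cdot\bigl(c_i/(c_i+d_ix)\bigr)$, hence the strict bound $0\le h_i'(x)<h_i(x)/x$ for $x>0$; (3) note $h_1(x)/x = p(x)$ and $h_2(x)/x=q(\tau,x)$, so that the Clark quantity equals $h_1'(N^*_+)+h_2'(N^*_+)$ and is strictly less than $p(N^*_+)+q(\tau,N^*_+)$; (4) invoke the equilibrium identity $p(N^*_+)+q(\tau,N^*_+)=1$ from Lemma~\ref{lem:BHBHinequlity} and its surrounding discussion to conclude \eqref{T2Clark} holds strictly; (5) cite \cite{Clark1976} to translate \eqref{T2Clark} into local asymptotic stability.

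I do not anticipate a serious obstacle here; the argument is essentially the same trick that makes the classical (delayed) Beverton--Holt equilibrium automatically stable whenever it exists, and it is worth contrasting with the Ricker case where no such bound holds. The one point requiring a little care is matching the partial derivative $\partial_v\bigl(q(\tau,v)v\bigr)$ evaluated at $v=N^*_+$ against the second term written in \eqref{T2Clark}: one should check that $\tfrac{D^2 g(\tau)(1+D)^{\tau+1}}{(D(1+D)^{\tau+1}+((1+D)^{\tau+1}-1)Cg(\tau)N^*_+)^2}$ is indeed $h_2'(N^*_+)=c_2/(c_2+d_2N^*_+)^2$ after clearing the common factor $(Dg(\tau))^2$ from numerator and denominator, which is a routine algebraic check. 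A secondary, optional remark is that the same computation shows $h_1'(N^*_+)+h_2'(N^*_+)>-1$ trivially since both terms are positive, so the full Clark inequality $|h_1'|+|h_2'|<1$ — not merely $h_1'+h_2'<1$ — is genuinely established. Once these bookkeeping points are settled the lemma is immediate.
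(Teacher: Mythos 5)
Your proposal is correct and is essentially the paper's own argument: both bound each partial derivative at $N^*_+$ strictly by the corresponding per-capita survival fraction ($p(N^*_+)$ and $m(\tau,N^*_+)$ respectively) and then invoke the equilibrium identity $p(N^*_+)+m(\tau,N^*_+)=1$ to verify Clark's condition. Your packaging of both summands as M\"obius maps $x\mapsto x/(c+dx)$ with the uniform bound $h'(x)<h(x)/x$ is just a tidier way of writing the same two estimates the paper carries out term by term.
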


\begin{proof}
Since \eqref{eq:condNstar} holds, there exists a unique positive equilibrium $N^*_+$ that satisfies 
$$1-\frac{1}{1+\alpha+\beta N^*_+}=\frac{Dg(\tau)N^*_+}{D(1+D)^{\tau+1}+((1+D)^{\tau+1}-1)Cg(\tau)N^*_+}.$$
For $N^*_+>0$, we have 
\begin{align*}
&\left|\frac{\partial}{\partial N_t}F_2(N_t,N_{t-\tau})\right|+\left|\frac{\partial}{\partial N_{t-\tau}}F_2(N_t,N_{t-\tau})\right|=\frac{1+\alpha}{(1+\alpha +\beta N^*_+)^2}+\frac{D^2g(\tau)(1+D)^{\tau+1}}{(D(1+D)^{\tau+1}+((1+D)^{\tau+1}-1)Cg(\tau)N_+^*)^2}\\
&=\frac{1+\alpha}{(1+\alpha +\beta N^*_+)^2}+\frac{D(1+D)^{\tau+1}}{(D(1+D)^{\tau+1}+((1+D)^{\tau+1}-1)Cg(\tau)N_+^*)}\left(1-\frac{1}{1+\alpha+\beta N^*_+}\right)\\
&< \frac{1+\alpha}{(1+\alpha +\beta N^*_+)^2}+\left(1-\frac{1}{1+\alpha+\beta N^*_+}\right)=1-\frac{\beta N^*_+}{(1+\alpha+\beta N^*_+)^2}<1,
\end{align*}
satisfying \eqref{T2Clark}, implying the local asymptotic stability of $N^*_+$, whenever it exists. 
\end{proof}

Again, utilizing the monotonicity property of $F_{2}$ in both of its variables, we can employ again \cite[Theorem~1.15]{Ladas_2004} to complete the global dynamics (see Appendix~\ref{A:thmGASNstarBH}).

\begin{theorem}\label{thm:GASNstarBH}
      Consider \eqref{EqT2BH} with \eqref{eq:mtaudenBH} and assume \eqref{IniC2}. If \eqref{eq:condNstar} holds, then the unique positive equilibrium $N^*_+$ is globally asymptotically stable.
\end{theorem}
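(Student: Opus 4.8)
The preceding lemma already gives local asymptotic stability of $N^*_+$ under \eqref{eq:condNstar}, so I would only need to prove global attractivity: that every solution of \eqref{EqT2BH} subject to \eqref{IniC2} converges to $N^*_+$. The plan is to first reduce this to a uniform-persistence statement by a $\limsup/\liminf$ argument, and then close the gap with the monotone-map theorem \cite[Theorem~1.15]{Ladas_2004} applied on an invariant interval that is bounded away from $0$. Throughout I would write $F_2(u,v)=F_{21}(u)+F_{22}(v)$ with $F_{21}(u)=u/(1+\alpha+\beta u)$ and $F_{22}(v)=m(\tau,v)v$ --- both continuous, strictly increasing, vanishing at $0$ --- and set $h(x):=p(x)+m(\tau,x)$, so that $F_2(x,x)=h(x)x$; here $h$ is strictly decreasing on $[0,\infty)$, \eqref{eq:condNstar} is precisely $h(0)>1$, and $h(N^*_+)=1$. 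Since $F_{21}(u)<1/\beta$ and $F_{22}(v)<C_0:=D/\big(((1+D)^{\tau+1}-1)C\big)$ for all $u,v\ge 0$, every solution satisfies $N_t<U:=1/\beta+C_0$ for $t\ge 1$, so $[0,U]$ is forward invariant and all solutions are bounded.

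Next I would run the squeeze. With $S:=\limsup_t N_t$ and $s:=\liminf_t N_t$, passing to the $\limsup$ and $\liminf$ in $N_{t+1}=F_{21}(N_t)+F_{22}(N_{t-\tau})$ --- legitimate because $F_{21},F_{22}$ are non-decreasing --- yields $S\le F_2(S,S)=h(S)S$ and $s\ge F_2(s,s)=h(s)s$. As $h$ decreases through the value $1$ at $N^*_+$, the first inequality forces $S\le N^*_+$ and the second forces $s=0$ or $s\ge N^*_+$; combined with $s\le S\le N^*_+$ this leaves exactly two options, namely $N_t\to N^*_+$ or $\liminf_t N_t=0$. So the whole problem reduces to excluding $\liminf_t N_t=0$, i.e.\ to establishing uniform persistence.

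The hard part will be this persistence step. Using $h(0)>1$ and continuity, I would fix $\varepsilon_0\in(0,N^*_+)$ and $\lambda>1$ with $h(x)\ge\lambda$ on $[0,\varepsilon_0]$. If at some time $t$ all of $N_{t-\tau},\dots,N_t$ lie in $[0,\varepsilon_0]$, then monotonicity of $p$ and $m(\tau,\cdot)$ gives
$$N_{t+1}=p(N_t)N_t+m(\tau,N_{t-\tau})N_{t-\tau}\ \ge\ \big(p(\varepsilon_0)+m(\tau,\varepsilon_0)\big)\min\{N_{t-\tau},\dots,N_t\}\ =\ h(\varepsilon_0)\,\mu_t\ \ge\ \lambda\,\mu_t,$$
where $\mu_t:=\min\{N_{t-\tau},\dots,N_t\}$; iterating over windows of length $\tau+1$ shows that while the orbit stays in $[0,\varepsilon_0]$ the window minimum grows by a factor $\lambda$ every $\tau+1$ steps, so it cannot stay there forever, giving weak persistence $S\ge\varepsilon_0>0$. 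I would then upgrade this to $\liminf_t N_t\ge L$ for a fixed $L\in(0,\varepsilon_0)$ by combining that growth estimate with the crude bound $N_{t+1}\ge m(\tau,U)N_{t-\tau}$ (valid once $N_{t-\tau}\le U$), which controls how far $N$ can fall between successive times it exceeds $\varepsilon_0$; equivalently, the face $\{N\equiv 0\}$ of the state space is an isolated, acyclic repeller --- its linearization carries a characteristic root outside the unit disk precisely because $h(0)>1$ --- so a standard acyclicity/Butler--McGehee argument yields uniform persistence. I expect this to be the main obstacle, since the maturation delay precludes a one-step rebound and forces all estimates to be organized over windows of length $\tau+1$.

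Finally, with such an $L$ in hand, I would apply \cite[Theorem~1.15]{Ladas_2004} on $[L,U]$ (note $L<N^*_+<U$, the latter because $N^*_+=F_2(N^*_+,N^*_+)<U$): this interval is forward invariant under $F_2$ (for $x,y\in[L,U]$, $F_2(x,y)\ge F_2(L,L)=h(L)L\ge L$ since $h(L)\ge h(N^*_+)=1$, while $F_2(x,y)<U$ by the earlier bound), $F_2$ is continuous and non-decreasing in each argument there, and the only solution of $x=F_2(x,x)=h(x)x$ in $[L,U]$ is $x=N^*_+$ (the root $x=0$ being excluded since $L>0$). The theorem then gives convergence to $N^*_+$ for every solution in $[L,U]$; by boundedness and the uniform persistence just shown, every solution of \eqref{EqT2BH} with \eqref{IniC2} eventually enters $[L,U]$, hence converges to $N^*_+$. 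Together with the local asymptotic stability from the preceding lemma, this establishes that $N^*_+$ is globally asymptotically stable.
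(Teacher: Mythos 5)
Your proposal is correct, and it reaches the conclusion by a genuinely different route from the paper's. Both arguments ultimately rest on the componentwise monotonicity of $F_2$ and on \cite[Theorem~1.15]{Ladas_2004} together with the fact that $x\mapsto F_2(x,x)$ has no fixed point other than $N^*_+$ on any interval excluding $0$; the difference is in how the invariant interval bounded away from zero is produced. The paper works directly with the sign of the quadratic $q(a)$ equivalent to $F_2(a,a)\gtrless a$, showing $F_2(a,a)\ge a$ below the positive root and $F_2(b,b)\le b$ above it, and then simply takes $a_r\le\min_i N_i$ and $b_r\ge\max_i N_i$ so that the initial window already lies in a positively invariant interval $[a_r,b_r]\not\ni 0$ --- short, but it tacitly needs $\min_i N_i>0$, whereas \eqref{IniC2} only requires the maximum of the initial data to be positive (one would first have to run the recurrence for $\tau+1$ steps to make all entries positive). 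You instead establish an a priori upper bound $U$, run the $\limsup/\liminf$ squeeze through $h(x)=p(x)+m(\tau,x)$, and reduce everything to uniform persistence, which you obtain from the window-minimum growth estimate near $0$ (where $h>1$) combined with the one-window backward control $N_{t+1}\ge \mathrm{const}\cdot N_{t-\tau}$. This is longer and the persistence upgrade from $\limsup\ge\varepsilon_0$ to $\liminf\ge L$ is the one place where details are still owed (the window bookkeeping over stretches of length $\tau+1$ is standard but fiddly), but it buys robustness to zero initial entries and a quantitative lower bound on the eventual population. Note also that your squeeze already yields $S\le N^*_+$ and ($s=0$ or $s\ge N^*_+$), so once uniform persistence rules out $s=0$ you get $s=S=N^*_+$ immediately; the final application of \cite[Theorem~1.15]{Ladas_2004} on $[L,U]$ is then redundant, though harmless.
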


\subsection{Special Case: Ricker survival of immature individuals }

We now assume that the density-dependent survival of immature individuals follows a Ricker survival so that an immature cohort $w_t$ has the survival of the form \eqref{eqn:Rmatsurvival} for one time period $(t,t+1)$. To obtain the entire survival fraction of immature individuals from birth at time $t-\tau$ to maturity at time $t+1$, the system \eqref{eqn:RickerRicker} is used. Apart from Theorem \ref{thm:RR_PD} proving there are no period doubling bifurcations for even $\tau$ (see Appendix \ref{A:thmRR_PD}), we rely on computational simulations because analytical results require an analytical expression of $\widetilde{p}(\tau,N_{t-\tau})$ that relies on an explicit solution for the Ricker recurrence. Unlike for the Beverton--Holt model, we do not have an explicit solution for the Ricker recurrence. 

From scenario i), particularly the Ricker-Constant case, we know that the parameter $a$ (the measurement of egg production potential) has an important effect on the dynamics of our model (see Figs.~\ref{fig:RCtaueven_bifurcation}--\ref{fig:RCtau5_bifurcation}). We also know from the BH-BH model, that there are some differential impacts of the immature ($C$) and mature ($\beta$) density-dependence as illustrated in Fig.~\ref{fig:BHBH}. This motivated our two parameter bifurcation plots (created with XPPAUT) of $a$ \& $C$ and $a$ \& $\beta$ (Fig.~\ref{fig:RR_aCbeta}). 

Fig.~\ref{fig:RR_aCbeta}a) \& b) shows that for our parameter values, when manipulating $a$ (egg production potential) with either $C$ (immature density-dependence) or $\beta$ (mature density-dependence), the Ricker-Ricker model exhibits a transcritical and a Neimark--Sacker bifurcation. For the transcritical line, from left to right, the stable equilibrium goes from $0$ to $N^*_+$. For the Neimark--Sacker curve, from left to right, a stable positive interior equilibrium becomes unstable. For both $C$ and $\beta$, increasing $\tau$ reduces the value of $a$ at which the transcritical bifurcation occurs indicated by the movement of the vertical solid line (threshold line for the transcritical bifurcation) to the left. For a specific $\tau$ value, the value of $a$ at which the bifurcation occurs (solid line) is constant for all values of $C$ and $\beta$. This implies that the critical value at which the transcritical bifurcation occurs is not dependent on the density-dependent competition factors $C$ and $\beta$ but is dependent on the egg production potential parameter $a$. In other words, with a longer maturation delay, lower egg production potential is required for a positive population because the longer maturation delay increases the total egg production.

In contrast, the value at which a Neimark--Sacker bifurcation (see dashed curve) occurs is a function of the parameters $\tau$ (maturation delay), $a$ (egg production potential), $C$ (immature density-dependence) and $\beta$ (mature density-dependence). Generally, but with some nuances for both $C$ and $\beta$, increasing $\tau$ increases the range of $C$ (or $\beta$) and decreases the minimum $a$ where oscillations occur, with further increases in $\tau$ having the opposite effect. Because of this interaction, depending on the parameter values, increasing $\tau$ can increase $N^*_+$ away from zero, then induce oscillations, then again stop the oscillations, and finally reduce $N^*_+$ to zero (see Fig.~\ref{fig:RR_aCbeta}c--g). Biologically, increasing the maturation delay through the balance of total egg production and total immature losses can increase the population size, make the population dynamics oscillatory or back to stable, with extinction occurring at a long enough maturation delay.  Similarly, immature and mature density-dependence can also increase population sizes and affect population stability.

For a deeper comparison of the differential impacts of immature and mature density-dependence, we investigate the strength of immature density-dependence ($C$) relative to the mature density-dependence ($\beta$) by plotting orbit diagrams for different relations of $C$ and $\beta$. More precisely, we consider $C=\beta$,  $C>\beta$, and $C<\beta$ (see Fig.~\ref{fig:RR_aCbeta}c-g). Generally, maintaining $\tau$ and $a$, while weakening $C$ relative to $\beta$, reduces the oscillations highlighted by the reduced span of $\tau$ values that exhibit oscillations. Strengthening $C$ relative to $\beta$ has the opposite effect causing an increase in oscillations. Note, the effect of reducing oscillations by making immature density-dependence weaker relative to mature density-dependence is not due to reducing $C$ because keeping $C=\beta$ while decreasing both does not remove the oscillations (Appendix \ref{A:cequalbeta} Fig. \ref{fig:RR_Cequalbeta}). Biologically, when immature density-dependence is stronger than mature density-dependence the amplitude of population oscillations are larger compared to when mature density-dependence is stronger than immature density-dependence.

\begin{figure}[h!]
    \centering
    \includegraphics[width=\textwidth]{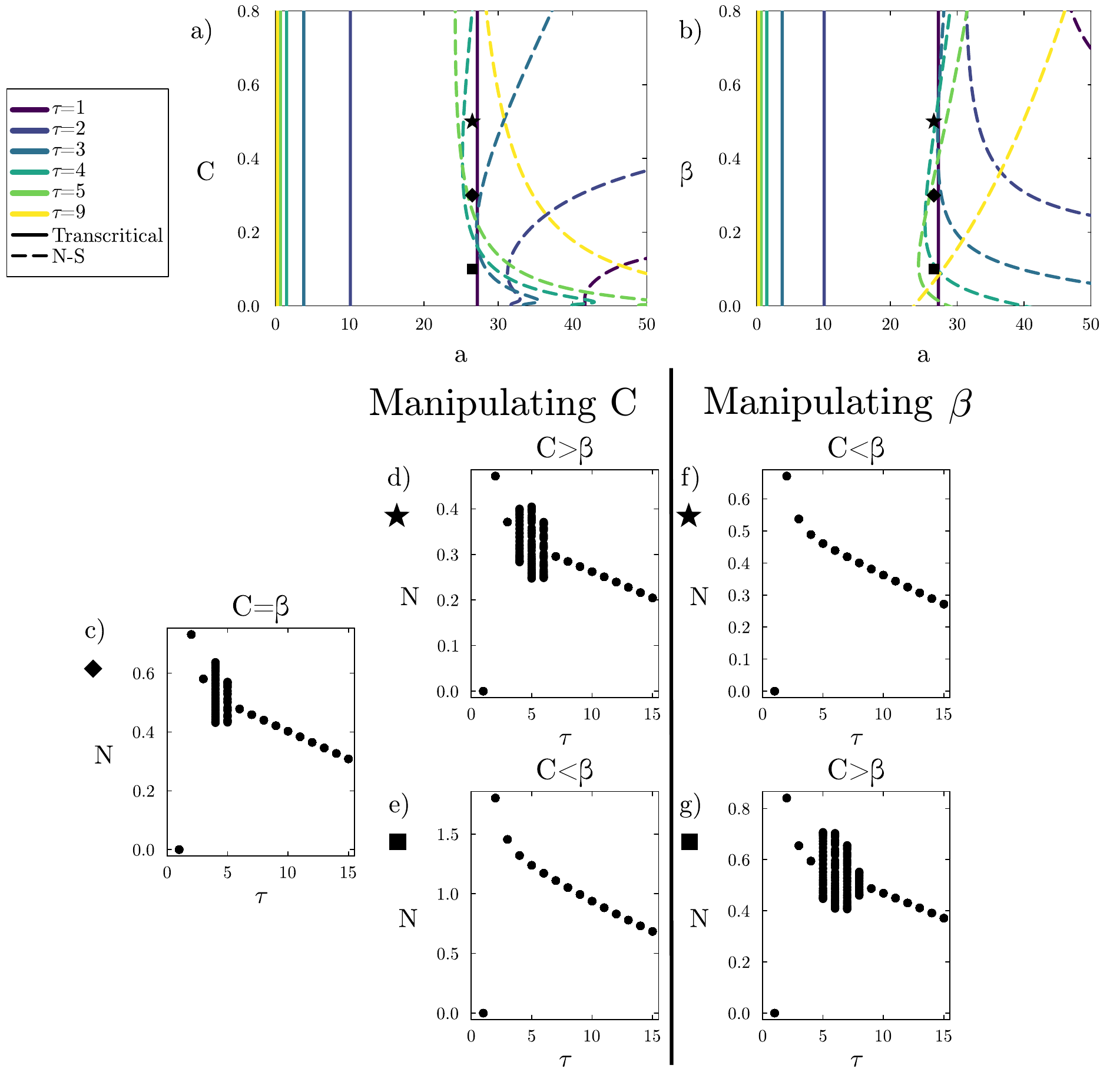}
    \caption{
    a) Two parameter bifurcation ($a$ and $C$) for the Ricker-Ricker model and different $\tau$ values. $C$ is the within immature cohorts density-dependence parameter. For the transcritical line, from left to right stable equilibrium goes from $0$ to $N^*>0$. For the Neimark-Sacker curve, from left to right a stable positive interior equilibrium becomes unstable. The star, diamond, and square refer to $a,C=26.5,0.6$, $a,C=26.5,0.3$, and  $a,C=26.5,0.1$ respectively (used in our c), d), \& e)). b) Two parameter bifurcation ($a$ and $\beta$) for the Ricker-Ricker model and different $\tau$ values. $\beta$ is the within mature individuals density-dependence parameter. The star, diamond, and square refer to $a,\beta=26.5,0.6$, $a,\beta=26.5,0.3$, and $a,\beta=26.5,0.1$ respectively (used in c), f), g). c) Ricker-Ricker orbit diagram for $a,C,\beta=26.5,0.3,0.3$. d) Ricker-Ricker orbit diagram for $a,C=26.5,0.6$. e) Ricker-Ricker orbit diagram for $a,C=26.5,0.1$. f) Ricker-Ricker orbit diagram for $a,\beta=26.5,0.6$. g) Ricker-Ricker orbit diagram for $a,\beta=26.5,0.1$. All orbit diagrams were the last 50 time units of 1,000,000 time units long simulations. For all plots, the base parameters are $C=0.3$, $\beta=0.3$,  $D=0.1$, $\alpha=0.1$, $a=26.5$, $b=200.0$, $K=1.0$.}
    \label{fig:RR_aCbeta}
\end{figure}

The diagrams in Fig.~\ref{fig:RR_aCbeta} indicate the complexity of the Ricker-Ricker delay recurrence model, particularly in comparison to the BH-BH case discussed in Section 4.1. This is not surprising, as the single species Ricker model already exhibits complex dynamics. Hence, it is reasonable to expect similarly complex dynamics when considering a repeated composition of Ricker functions to determine the survival of immature individuals. Potentially surprisingly however is that our simulations did not reveal chaotic behavior but rather a Neimark--Sacker bifurcation, before returning to the stability of the equilibrium $N^*_+$. Nevertheless, only a thorough parameter sweep would be able to reveal such chaotic behavior.  

\section{Conclusion}
Through refining the delay discrete population model in \cite{StWodelay2} to include  a maturation 
delay-dependent fecundity, we tested the effect of a trade-off caused by a longer maturation delay that, on the one hand increased fecundity but, on the other hand, increased the total losses of immature individuals. The benefit of higher egg production in the case of a longer delay was motivated by the fact that length and egg production are proportional, where the von Bertalanffy curve was utilized to describe the relation between age and length.  In addition to general results for such discrete delay population models with trade-off, such as the existence of a critical delay threshold that predicts the extinction of the species, 
we specifically focused on two classical survival functions \textemdash the Beverton--Holt and the Ricker survival. We distinguished between the mature survival function and the immature survival function so that the two classical survival forms resulted in 4 different models: i) the BH-Constant model, where the mature individuals follow a Beverton--Holt survival function and the immature individuals have a constant survival chance for one time period, ii) the Ricker-Constant model, where the mature individuals have now a  survival probability determined by the Ricker equation. The density-dependent survival cases for immature individuals led to the last two cases: iii) the BH-BH model, where the survival of the immature and mature individuals over one time period $(t,t+1)$ is described by a Beverton--Holt function, and iv) the Ricker-Ricker model, where the survival of the immature and mature individuals over one time period $(t,t+1)$ is described by a Ricker function. 

Across all four models, we observed an optimal maturation delay that was positive, where the population size at equilibrium was maximized. This pattern differs to the results obtained in the previous discrete delay population models \cite{StWodelay2, StWodelay3} that show the equilibrium population value decreasing with the delay. However, consistent with the previous predictions in \cite{StWodelay2, StWodelay3} that did not consider a trade-off, a further increase in the maturation delay leads to extinction. Furthermore, our finding of an optimal maturation delay reflects similar findings in the ecological-evolutionary literature. In our work, we considered the benefit of a longer delay to be higher fecundity, that is also found in nature. For example, \textit{Drosophila} exhibit lower fecundity with a shortened maturation delay  \cite{chippindaleExperimentalEvolutionAccelerated1997}. In \cite{bellCostsReproductionTheir1980}, the authors argue that delayed maturation would be selected for in nature, if fecundity increases with age. In that case, the optimal maturation age is achieved, where present reproduction causes a decrease in potential future fecundity. Other benefits to maturation delays that we did not consider in this work, but may be of interest in future work, are larger clutch size \cite{fordExperimentalStudyTradeOffs1994a}, higher mature or immature survival \cite{zeraPhysiologyLifeHistory2001a}, and escape from predation or harvesting of mature individuals \cite{poosHarvestinducedMaturationEvolution2011}. Generally across these modeling or empirical studies, we see an optimal maturation delay, where at first the benefits of the maturation delay outweigh the negative consequences until a maturation delay where the negative consequences start to outweigh the benefits.

In the models that consider the Ricker survival form in the mature population, with either constant or Ricker density-dependent survival of immature individuals, the maturation delay trade-off had some interesting effects on population stability. For some parameters, we observed that if a small maturation delay prevented the existence of an interior equilibrium, increasing the maturation delay was able to promote the existence of a stable interior equilibrium. A further increase in the maturation delay could however destabilize this interior equilibrium through either a period doubling or Neimark--Sacker bifurcation. This oscillatory behavior could be removed with a further increase in the maturation delay with extinction being the final stage. Unexpected was the dependence of the bifurcation type on the evenness of the maturation delay parameter $\tau$. An even value of $\tau$ promoted the destabilization of the positive equilibrium, as the egg productivity value $a$ increased, a Neimark--Sacker bifurcation, while an odd value of $\tau$ resulted in a period-doubling bifurcation. 

Our maturation delay augments the underlying oscillatory dynamics inherent to the Ricker framework. Classic work examining the Ricker model shows how chaos occurs through increasing the intrinsic growth rate, $r$ \cite{mayBifurcationsDynamicComplexity1976}. Our model splits $r$ into birth of new immature individuals and deaths of mature and immature individuals. Our maturation delay influences the births and deaths of immature individuals and so equivalently changes $r$ of the original Ricker model. Initially, increasing the maturation delay increases the fecundity and effectively increases $r$ of the original Ricker model, creating oscillations. Of course, when the maturation delay is too long and the immature losses outweigh the fecundity benefits, then effectively, we have returned to a small $r$ from the original Ricker model where oscillations do not happen (or extinction occurs). Two key concepts in theoretical ecology to interpret the outcome of increasing maturation delay are mean-driven instability and variance-driven instability (equivalently persistence and bounded from zero in the mathematics literature) \cite{gellner_duality_2016}. Mean-driven instability is when a population's mean is close to zero and then by stochasticity can easily go extinct. Variance-driven instability is when even though the mean is much larger than zero, the oscillations drive the population again close to zero, leading to increased chances of extinction by stochasticity. Our maturation delay initially reduces mean-driven instability (pushing the mean further from zero), then increases variance-driven instability (inducing oscillations), before reducing variance-driven instability (removing oscillations), and increases mean-driven instability (reducing the mean). 

The intraspecific density-dependence of the mature and immature individuals had a significant effect on the presence or absence of oscillatory dynamics in the Ricker-Ricker model. Depending on the $\tau$ and $a$, weakening intraspecific density-dependence regularly removes oscillations for $C$ and induces oscillations for $\beta$. There are of course nuances to this where for especially small values of $C$, oscillations can be induced again. Following the logic of the previous paragraph, we could expect that the weakening of intraspecific density-dependence to act similarly to increasing r in the original Ricker model (effectively reducing the probability of losing individuals). Although, this could be the case for mature density-dependence, weakening immature density-dependence largely has the opposite effect. Therefore, these patterns point to potential important differences in how mature versus immature density-dependence act to affect population dynamics, and the importance of the interaction between the mature and immature population dynamics. Indeed, \cite{reesGrowthReproductionPopulation1989} outline  a major difference between animals and plants in terms of their population dynamics. Animals generally have a size or age threshold that must be reached to produce new offspring whereas this threshold rarely exists in plants. \cite{reesGrowthReproductionPopulation1989} then postulate that this threshold difference is a major cause of oscillatory population dynamics in animals and relatively stable population dynamics in plants. Interestingly, many modeling studies that examine maturation delays assume stable size or age distributions and then examine the population growth rate, $r$ (e.g \cite{stearnsEvolutionPhenotypicPlasticity1986} \& \cite{takadaOptimalSizeMaturity1997}). Although useful from a population ecological-evolutionary perspective, this form of modeling misses out on the dynamic interactions of mature and immature populations. Consequently, when examining the population dynamics of species due to maturation delays, incorporating the dynamics of mature and immature individuals is critical.

Our delay model sets up a useful phenomenological framework to test the interactions of trade-offs in parent survival, offspring survival, and reproductive investment. Fundamentally, life history trade-offs occur through energy allocation between the biological needs of growth, somatic maintenance, defense, reserves, and reproduction \cite{zeraPhysiologyLifeHistory2001a}. Because a maturation delay is effectively an individual allocating energy to growth, defense, or reserves over reproduction, parent and immature survival and future reproduction can be affected by a maturation delay \cite{zeraPhysiologyLifeHistory2001a}. Furthermore, different life history traits often co-vary together with maturation delay even within a species \cite{purchaseSexspecificCovariationLifehistory2005}. Our model's separate functions for parent survival, offspring survival, and reproduction while being able to manipulate the maturation delay allows for comparative modeling across multiple axes of life history trade-offs. All of the potential combinations tested using our modeling framework could be empirically matched with data to identify how parent survival, offspring survival, and reproductive investment co-vary with maturation delay either between species or within species. From this, the dynamical predictions of this co-variation could be tested against different species' population data.

Adding a maturation delay that accounted for immature losses was an initial important update to classic difference equations. We further extended this logic by adding a fecundity benefit to the maturation delay because maturation delays are ubiquitous in nature. Both the Beverton--Holt and Ricker modeling frameworks exhibited an optimal maturation delay where the fecundity benefits matched the immature losses from the maturation delay. Similar to \cite{StWodelay2, StWodelay3} that added a maturation delay (but without the fecundity benefit), long maturation delays still caused population extinction. The maturation delay also interacted with the inherent oscillatory nature of the Ricker model, where increasing the maturation delay initiated oscillations, but with a further increase in maturation delay, oscillations were removed. Although this initial model is limited solely to changing the fecundity with maturation delay, the model set-up of separating the mature and immature cohorts is primed to explore many combinations of life history trade-offs. Taken together, our extended model incorporates an important life history trade-off critical to modeling species populations.

Extensions of our model may include other maturation delay benefits beyond fecundity. For example, the per time unit survival rates ($\alpha$ \& $D$) that we chose as constants could be delay dependent or even be density-dependent. A consequence of these constant model parameters is that a population with a shorter maturation delay would grow to a smaller size but experience the same per time unit survival rates as a population with longer maturation delay that grows to a larger size. Using the energy trade-off framework, a justification for this assumption could be that larger animals need more energy for maintenance and growth relative to survival, in turn maintaining the per time unit survival rates at all sizes. However, total energy intake increases with size and generally total energy usage per unit of body mass decreases \cite{marshallGlobalSynthesisOffspring2018}. Furthermore, size often positively impacts survival rates either through reducing predation and other natural mortality  \cite{gislasonSizeGrowthTemperature2010,trexlerEffectsHabitatBody1992} or by increasing lifespan \cite{speakmanBodySizeEnergy2005}. Consequently, if we think about our per time unit survival rates as average per capita survival across the population, the average per time unit survival rate should increase with a longer maturation delay and larger sizes. Future work could include these size-dependent survival rates alongside size-dependent fecundity.   


\appendix
\renewcommand{\thesection}{\Alph{section}}
\renewcommand{\thesubsection}{\Alph{section}.\arabic{subsection}}

\section{Proof of Theorem~\ref{thm:NstarLAS1}}\label{A:NstarLAS1}.

\begin{proof}
Assume $p(0)>1-m(\tau)$, then, by assumption (H) given in \eqref{eq:H}, there exists $N^*_+>0$ such that $p(N^*_+)=1-m(\tau)$. By \eqref{Clargen}, $N^*_+$ is LAS if 
\begin{equation}\label{LASXstar}
    \left|p(N^*_+)+N^*_+p'(N_+^*)\right|<1-m(\tau)=p(N^*_+),
\end{equation}
i.e., 
\begin{equation*}
-p(N^*_+)<p(N^*_+)+N^*_+p'(N_+^*)<p(N^*_+).
\end{equation*}
By assumption (H) given in \eqref{eq:H}, $p$ decreases so that the second inequality holds automatically. Thus, it suffices to show that 
$$-p(N^*_+)<p(N^*_+)+N^*_+p'(N^*_+).$$
Note that since $p(N^*_+)=1-m(\tau)$ and, by assumption (H) given in \eqref{eq:H}, $p$ is injective, $N^*_+=p^{-1}(1-m(\tau))$. Since $m(\tau)\in (0,1)$, $N^*_+$ is bounded. That is, there exists $M>0$ such that $N^*_+\leq M$. 

Furthermore, by assumption, $p'(z)$ is continuous. Define  $\underline{dp}:=\inf_{z\in [0, M]} p'(z)$, then $\underline{dp}\in (-\infty, 0)$. Thus, it suffices to show that
\begin{equation}\label{eq:35star}
-p(N^*_+)<p(N^*_+)-N^*_+|\underline{dp}|.
\end{equation}
Since $N^*_+$ can only exist if $p(0)>1-m(\tau)$ because for $p(0)\leq 1-m(\tau)$, $N^*_0$ is the only nonnegative equilibrium,  the following holds by the continuity of the equilibrium $N^*_+$:\\
For all $\epsilon>0$, there exists $\delta>0$ such that if $ 1-m(\tau) \leq p(0)\leq 1-m(\tau)+\delta$, $0\leq N^*_+\leq \epsilon$. 
Thus, choosing $\epsilon=\frac{p(1)}{|\underline{dp}|}>0$, then there exists $\delta>0$ such that $N^*_+\in (0,\min\{1,\epsilon\})$ is an equilibrium and satisfies \eqref{eq:35star} and therefore also the stability condition \eqref{LASXstar}. 
\end{proof}

\section{Proof of Theorem~\ref{Nstarex}}\label{Thm:Nstarex}

\begin{proof}
Let $P:=\ln(1/\overline{p})>0$ and $x:=e^{-(\tau+1)}<1$. Then, we can rewrite \eqref{condNstar} as
\begin{equation*}
S(x):=ax^P
- b x^{P+K}-\frac{\alpha}{1+\alpha}>0.
\end{equation*}
Since $\lim_{x\to 0^+}S(x)<0$,  the desired inequality does not hold as $\tau\to \infty$, implying that there exists $\tau_c$ such that $N^*(\tau)$ does not exist for all $\tau>\tau_c$. Furthermore,
$$S'(x)=aPx^{P-1}-b(P+K)x^{P+K-1}=x^{P-1}(aP-b(P+K)x^K)$$
which has exactly one positive (real) critical point at  
$$x_c=\left(\frac{aP}{b(P+K)}\right)^{\frac{1}{K}}>0.$$
Since 
$$S''(x)=aP(P-1)x^{P-2}-b(P+K)(P+K-1)x^{P+K-2}=x^{P-2}(aP(P-1)-b(P+K)(P+K-1)x^K)$$
and 
$$S''(x_c)=x_c^{P-2}\left(aP(P-1)-b(P+K)(P+K-1)\left(\frac{aP}{b(P+K)}\right)\right)
=x_c^{P-2}(-aPK)<0,
$$ 
$x=x_c$ determines the location of the global maximum of $S(x)$ with the value
$$S(x_c)= \left(\frac{aP}{b(P+K)}\right)^{\frac{P}{K}}\left(a-b\left(\frac{aP}{b(P+K)}\right)\right)-\frac{\alpha}{1+\alpha}=
 \left(\frac{aP}{b(P+K)}\right)^{\frac{P}{K}}\frac{aK}{P+K}-\frac{\alpha}{1+\alpha}.
$$
Thus, if $S(x_c)>0$, then there exists $\tau_1<\tau_2$ such that $N^*_+(\tau)$  exists (biologically relevant) for $\tau\in (\tau_1,\tau_2)$. If $S(e^{-1})>0$, then $\tau_1<0$, else, if $S(e^{-1})< 0$, then $\tau_1> 0$.

\end{proof}

\section{Proof of Theorem ~\ref{ThmexNstar}}\label{proofThmexNstar}. 

\begin{proof}
From \eqref{T1C1Nstar}, we calculate
\begin{equation*}
    \frac{\partial N^*_+}{\partial \tau}
    =\frac{\overline{p}^{\tau+1} (b K+ (ae^{K(\tau+1)} -b)\ln(\overline{p}))}{\beta (b p^{
    1 + \tau} e^{-K(\tau+1)} +  (1 - a p^{1 + \tau}))^2}.
    \end{equation*}
Since the sign depends on the sign of $b K+ (ae^{K(\tau+1)} -b)\ln(\overline{p})$ and 
$$b K+ (ae^{K(\tau+1)} -b)\ln(\overline{p})=0\qquad \mbox{for} \quad \tau=\hat{\tau}:=\frac{1}{K}\ln\left( \frac{b\ln(1/\overline{p})+bK}{ae^{K}\ln(1/\overline{p})}  \right),$$
we immediately have that 
\begin{equation*}
  \frac{\partial N^*_+}{\partial \tau}\qquad    
\begin{cases} 
<0 & \tau>\widehat{\tau}\\
=0 & \tau=\widehat{\tau}\\
>0& \tau<\widehat{\tau}
    \end{cases}
\end{equation*}
completing the proof.
\end{proof}

\section{Proof of Theorem~\ref{thm:NstarLAS}}\label{A:NstarLAS}

\begin{proof}
By Theorem~\ref{ThmexNstar}, if $m(\tau)>\frac{\alpha}{1+\alpha}$,  $N^*_+>0$. In that case, by \eqref{LASClark}, $N^*_+$ is LAS if 
$$(1+\alpha +\beta N^*_+)^2>\frac{1+\alpha}{1-m(\tau)}\quad \iff \quad N^*_+>\frac{\sqrt{1+\alpha}}{\beta\sqrt{1-m(\tau)}}-\frac{1+\alpha}{\beta}=\frac{\sqrt{1+\alpha}}{\sqrt{1-m(\tau)}}\frac{(1-\sqrt{(1+\alpha)(1-m(\tau))})}{\beta}.$$
By \eqref{T1C1Nstar}, this implies that  
$$N^*_+=\frac{(1+\alpha) m(\tau)-\alpha}{\beta(1 -m(\tau))}>\frac{\sqrt{1+\alpha}}{\sqrt{1-m(\tau)}}\frac{(1-\sqrt{(1+\alpha)(1-m(\tau))})}{\beta},$$
i.e., 
$$(1+\alpha) m(\tau)-\alpha>\sqrt{(1+\alpha)(1 -m(\tau))}(1-\sqrt{(1+\alpha)(1-m(\tau))})=\sqrt{(1+\alpha)(1 -m(\tau))}-(1+\alpha)(1-m(\tau)),$$
i.e., 
$$1>\sqrt{(1+\alpha)(1 -m(\tau))}\qquad \iff \qquad m(\tau)>\frac{\alpha}{1+\alpha}$$
Note that, by \eqref{condN0}, the condition $m(\tau)>\frac{\alpha}{1+\alpha}$ also implies that $N^*_0$ is unstable. 

\end{proof}

\section{Theorem~1.15 in  \cite{Ladas_2004}}\label{A:ThmLadas}

\begin{theorem}[See \protect{\cite[Theorem 1.15]{Ladas_2004}}]\label{ThmLadas}
Let $g:[a,b]^{k+1} \to [a,b]$ be a continuous function, where $k$ is a positive integer and $[a,b]$ is an interval of real numbers. Consider 
$$x_{n+1} = g(x_n,x_{n-1},\ldots, x_{n-k}),\quad \quad n=0,1,\ldots$$
Assume that $g$ satisfies the following conditions: 
\begin{enumerate}
    \item For each integer $i$ with $1\leq i\leq k+1$, the function $g(z_1,z_2,\ldots, z_{k+1})$ is weakly monotonic in $z_i$ for fixed $z_j$, $j\neq i$.
    \item If $(r,R)$ $\in \mathbb{R}^2$ is a solution of the system
$$    r=g(r_1,r_2,\ldots, r_{k+1}), \qquad \qquad 
    R=g(R_1,R_2,\ldots, R_{k+1}),$$
   where for each $i=1,2,\ldots, k+1$, we set
    $$R_i = \begin{cases} R & \mbox{ if g is non-decreasing in }z_i\\
    r & \mbox{ if g is non-increasing in }z_i\end{cases}$$
    and 
    $$r_i = \begin{cases} r & \mbox{ if g is non-decreasing in }z_i\\
    R & \mbox{ if g is non-increasing in }z_i\end{cases}$$
\end{enumerate}
and $r=R$, then there exists exactly one equilibrium $\bar{x}$ and every solution converges to $\bar{x}$.
\end{theorem}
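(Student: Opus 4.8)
The plan is to sandwich every orbit between a non-increasing ``upper envelope'' and a non-decreasing ``lower envelope'' obtained by iterating $g$ inward from the full box $[a,b]^{k+1}$, using only the weak monotonicity of $g$ in each coordinate. Partition $\{1,\dots,k+1\}$ as a disjoint union $A\cup B$, where $g$ is non-decreasing in $z_i$ for $i\in A$ and non-increasing in $z_i$ for $i\in B$ (a coordinate of constancy may be assigned to either set). Set $M_0:=b$ and $m_0:=a$, and define recursively
\begin{equation*}
M_{n+1}:=g\bigl(U^{(n)}_1,\dots,U^{(n)}_{k+1}\bigr),\qquad m_{n+1}:=g\bigl(L^{(n)}_1,\dots,L^{(n)}_{k+1}\bigr),
\end{equation*}
with $U^{(n)}_i=M_n,\ L^{(n)}_i=m_n$ for $i\in A$ and $U^{(n)}_i=m_n,\ L^{(n)}_i=M_n$ for $i\in B$; this is precisely the $R_i,r_i$ substitution pattern of hypothesis~(2).

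First I would prove, by a single induction on $n$, the package of claims: (i) $m_n\le M_n$; (ii) $M_{n+1}\le M_n$; (iii) $m_{n+1}\ge m_n$. The base step uses only $g\bigl([a,b]^{k+1}\bigr)\subseteq[a,b]$, so that $M_1=g(\cdots)\le b=M_0$ and $m_1\ge a=m_0$. The inductive step is a coordinate-by-coordinate comparison: in an $A$-slot one feeds $g$ the pair $M_n\le M_{n-1}$ (resp.\ $m_n$ vs.\ $m_{n-1}$) and uses increasing dependence, while in a $B$-slot one feeds $g$ the pair $m_n\ge m_{n-1}$ and uses decreasing dependence, so the induced inequality points the correct way; claim~(i) follows in the same manner comparing $M_{n+1}$ with $m_{n+1}$ directly. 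Consequently the bounded monotone sequences converge, $M_n\downarrow M^*$ and $m_n\uparrow m^*$ with $a\le m^*\le M^*\le b$, and passing the defining recursions to the limit (continuity of $g$) shows that $(R,r)=(M^*,m^*)$ solves the system in hypothesis~(2). The assumption $r=R$ then forces $M^*=m^*=:\bar x$, and the constant sequence $x_n\equiv\bar x$ is a solution of the recurrence, so $\bar x$ is an equilibrium.

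Next I would establish the trapping: every solution (whose $k+1$ initial values lie in $[a,b]=[m_0,M_0]$) satisfies $x_n\in[m_j,M_j]$ for all $n\ge(j-1)(k+1)+1$, proved by induction on $j$. The point is that, by monotonicity, the value of $g$ over any sub-box $[m_j,M_j]^{k+1}$ is confined to $[m_{j+1},M_{j+1}]$ (its max and min over that box are attained exactly at the corners defining $M_{j+1}$ and $m_{j+1}$); but because of the delay the recurrence lands there only once all $k+1$ terms it consumes have already entered $[m_j,M_j]$, which costs one further window of $k+1$ indices. Hence $x_n$ eventually lies in $[m_j,M_j]$ for arbitrarily large $j$, and since $[m_j,M_j]\downarrow\{\bar x\}$ we conclude $x_n\to\bar x$. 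Uniqueness of the equilibrium is then immediate: any equilibrium is a constant solution, hence equals $\bar x$.

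The only real obstacle is bookkeeping rather than analysis: the $A/B$ assignment must be fixed once and used consistently everywhere, and the delay in the trapping step must be tracked with care — one cannot tighten the envelope at every time step, only after a full block of $k+1$ consecutive trapped values, which is exactly why the threshold grows like $j(k+1)$. An alternative route replaces $M_n,m_n$ by $\limsup_n x_n$ and $\liminf_n x_n$, extracting convergent subsequences of the recurrence and using monotonicity to obtain $\limsup_n x_n\le g(\cdots)$ and $\liminf_n x_n\ge g(\cdots)$ in the substituted form before invoking hypothesis~(2); but the explicit envelopes make the delay accounting the most transparent.
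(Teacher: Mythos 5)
The paper does not prove this result; it is quoted verbatim from \cite[Theorem~1.15]{Ladas_2004} purely for the reader's convenience. Your envelope argument is correct and is essentially the standard proof from that source: the monotone sequences $M_n\downarrow M^*$, $m_n\uparrow m^*$ built from the $R_i/r_i$ corner substitutions, the limit pair $(M^*,m^*)$ solving the system in hypothesis~(2) so that $M^*=m^*=\bar x$, and the delayed trapping of every orbit in $[m_j,M_j]$ after roughly $j(k+1)$ steps are exactly the ingredients of the classical proof, and your bookkeeping of the $A/B$ partition and of the $(j-1)(k+1)+1$ threshold is sound.
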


Note that we call {\it weakly monotonic}, in the statement of the Theorem 1.15 in \cite{Ladas_2004}, {\it component-wise monotone}.

\section{Proof of Theorem~\ref{thm:NstarGAS}}\label{A:NstarGAS}

\begin{proof}
If $(1+\alpha)m(\tau)>\alpha$, $N^*_+>0$ exists (biologically relevant). 
Since $\frac{\partial F(u,v)}{\partial u}=Q(u)>0$ and $\frac{\partial F(u,v)}{\partial v}=m(\tau)>0$, the map $F$ is component-wise monotone (increasing) in each variable. 
To show that there exists $0<a_r<b_r$ such that the set $[a_r,b_r]^{\tau+1}$ is positively invariant, we first show that there exists $a_r>0$ such that 
$F(a_r,a_r)\geq a_r$. We have
\begin{align*}
F(a,a)&=\left(\frac{1}{1+\alpha+\beta a} + m(\tau)\right)a\geq a \qquad \iff \qquad 
\frac{1}{1+\alpha+\beta a} + m(\tau)\geq 1,
\end{align*}
i.e., 
$$1+m(\tau)(1+\alpha)+m(\tau)\beta a\geq 1+\alpha+\beta a,$$
i.e., 
$$a<\frac{m(\tau)(1+\alpha)-\alpha}{\beta (1-m(\tau))}=:a_\ell.$$
Since $m(\tau)(1+\alpha)>\alpha$, the numerator is positive and since, by assumption (A) given in \eqref{eq:A}, $m(\tau)<1$, $a_r$ can be chosen arbitrarily small such that $a_r<a_\ell$. 

To show that there exists $b_r>a_r$ such that $F(b_r,b_r)\leq b_r$, note that for $b>0$,  
$$F(b,b)=\left(\frac{1}{1+\alpha+\beta b} + m(\tau)\right)b\leq b,$$
if and only if 
$$m(\tau)(1+\alpha)+\beta (\tau)  b m(\tau)\leq \alpha+\beta b,$$
i.e, 
$$b\geq \frac{m(\tau)(1+\alpha)-\alpha}{\beta (1-m(\tau))}=:b_\ell,$$
which is positive because $0<m(\tau)<1$ and, by assumption, $(1+\alpha)m(\tau)>\alpha$ so that also the numerator is positive. Thus, we choose $b_r$ arbitrarily large but such that $b_r\geq \max\{b_\ell, a_r\}$. 
Then, $[a_r, b_r]^{\tau+1}$ is a positively invariant set. Since $a_r$ can be arbitrarily small and $b_r$ can be arbitrarily big, we can guarantee that $N^*_+\in [a_r,b_r]$ and $N^*_0\notin [a_r, b_r]$. 

To apply Theorem~\ref{ThmLadas}, it remains to show that
$$r=F(r,r), \qquad R=F(R,R)$$
has only the trivial solution in $[a_r,b_r]$ given by  $N^*_+$. Since $x=F(x,x)$ is equivalent to the equilibrium condition and we chose $a_r, b_r$ such that the only equilibrium in $[a_r,b_r]$ is $N^*_+$, we have $r=R=N^*_+$. By Theorem~\ref{ThmLadas}, every solution that enters $[a_r,b_r]$ converges to $N^*_+$ and since it is locally asymptotically stable, as established in Theorem~\ref{thm:NstarLAS},  $N^*_+$ is indeed globally asymptotically stable for all solutions that enter $[a_r,b_r]$. Since, $a_r$ can be  chosen arbitrarily small and $b_r$ arbitrarily big, we can ensure that solutions to positive initial conditions are automatically in $[a_r, b_r]$, completing the proof. 

\end{proof}

\section{Proof of Theorem~\ref{thm:N0LAS}}\label{A:thmN0LAS}

\begin{proof}

By \eqref{LASClark}, the trivial equilibrium $N^*_0=0$ is locally asymptotically stable if 
$$1>
\frac{1}{1+\alpha}+\left(a - b e^{-K(\tau+1)}\right)
\overline{p}^{\tau+1}$$
i.e., 
\begin{equation}\label{condN0}
\left(a - b e^{-K(\tau+1)}\right)
\overline{p}^{\tau+1}<\frac{\alpha}{1+\alpha}.
\end{equation}
Note that the left-hand side is the expression of $m(\tau)$ so that the sufficient condition for the local asymptotic stability of $N^*_0$ reads as $m(\tau)<\frac{\alpha}{1+\alpha}$. 
Since 
$$m'(\tau)=\underbrace{e^{-K(\tau+1)}\overline{p}^{ 1 + \tau}}_{>0} (\underbrace{b K+b\ln(1/\overline{p})}_{>0} -\underbrace{a\ln(1/\overline{p}) e^{K (1 + \tau)}}_{>0}),$$
we note that for $P:=\ln(1/\overline{p})$, 
$$m'(\tau)<0 \qquad \iff \qquad  \tau> \frac{1}{K}\ln\left(\frac{b(P+K)}{aP}\right) -1.$$
Since $\lim_{\tau\to \infty}m(\tau)=0$, this implies the existence of $\tau_c> \frac{1}{K}\ln\left(\frac{b(P+K)}{aP}\right) -1$ such that $N^*_0$ is locally asymptotically stable for all $\tau>\tau_c$, completing the proof.
\end{proof}

\section{Proof of Theorem~\ref{thm:GASNstar}}\label{A:GASNstarR}


\begin{proof}
    Assume $1-e^{-\alpha}< m(\tau)$, then there exists a positive equilibrium. 
By the equilibrium expression \eqref{eq:q}, 
$$\beta N^*_+<2\qquad \iff \qquad
m(\tau)<1-e^{-(\alpha+2)}.$$ 
By  \eqref{LASClark}, it suffices to show that $$\left| e^{-\alpha-\beta N^*_+}(1-\beta N^*_+) \right|+\left|m(\tau)\right|<1.$$
    At an equilibrium, $m(\tau)=1-e^{-\alpha -\beta N^*_+}$. Thus, if $\beta N^*_+\leq 1$, we have 
\begin{align*}
    \left| e^{-\alpha-\beta N^*_+}(1-\beta N^*_+) \right|+\left|m(\tau)\right|=
    e^{-\alpha-\beta N^*_+}(1-\beta N^*_+) +1-e^{-\alpha-\beta N^*_+}=1-e^{-\alpha-\beta N^*_+} \beta N^*_+<1.
\end{align*}
Instead if $\beta N^*_+>1$, then since $\beta N^*_+< 2$,  
\begin{align*}
    \left| e^{-\alpha-\beta N^*_+}(1-\beta N^*_+) \right|+\left|m(\tau)\right|=
    e^{-\alpha-\beta N^*_+}(\beta N^*_+-1) +1-e^{-\alpha-\beta N^*_+}=
    (\beta N^*_+-2)e^{-\alpha-\beta N^*_+}+1<1,
\end{align*}
This confirms that $N^*_+$ is locally asymptotically stable.  

Instead, if $m(\tau)>1-e^{-\alpha-2}$, then $\beta N^*_+>2$ and the stability condition is violated and the equilibrium is no longer stable, see also the proof in \cite[Theorem 3.4]{streipertAlternativeDelayedPopulation2021}.
\end{proof}

\section{Conjecture of global asymptotic stability for Ricker-Constant model}\label{A:GASNstarR_figs}

\begin{figure}[H]
    \centering
    \includegraphics[width=0.8\linewidth]{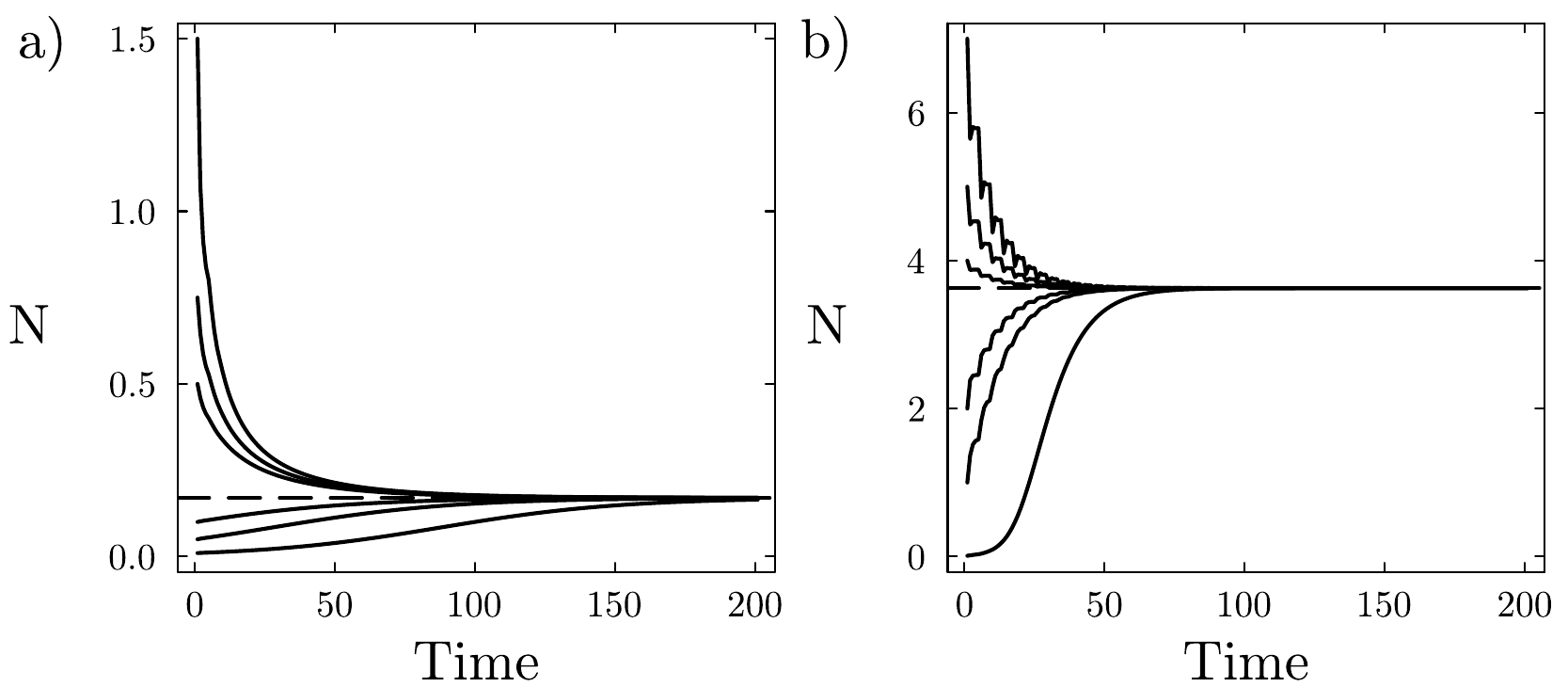}
    \caption{Ricker-Constant time series where $1-e^{-\alpha}< m(\tau)<1-e^{-\alpha-2}$. Dashed lines are $N^*_+$. Solid lines are model trajectories with different initial conditions. a) $a=13.0$, $p=0.35$, $\tau=3.0$, $\alpha=0.1$, $\beta=0.3$, $b=200$, $K=1.0$. Initial conditions for each line are $(N_{-3}, N_{-2}, N_{-1}, N_0)=(0.01,0.01,0.01,0.01)$, $(0.05,0.05,0.05,0.05)$, $(0.1,0.1,0.1,0.1)$, $(0.5,0.5,0.5,0.5)$, $(0.75,0.75,0.75,0.75)$, $(1.5,1.5,1.5,1.5)$ respectively. b) $a=50.0$, $p=0.35$, $\tau=3.0$, $\alpha=0.1$, $\beta=0.3$, $b=200$, $K=1.0$. Initial conditions for each line are $(N_{-3}, N_{-2}, N_{-1}, N_0)=(0.01,0.01,0.01,0.01)$, $(1.0,1.0,1.0,1.0)$, $(2.0,2.0,2.0,2.0)$, $(4.0, 4.0,4.0,4.0)$, $(5.0,5.0,5.0,5.0)$, $(7.0,7.0,7.0,7.0)$ respectively.}
    \label{fig:RC_stable}
\end{figure}

\begin{figure}[H]
    \centering
    \includegraphics[width=0.8\linewidth]{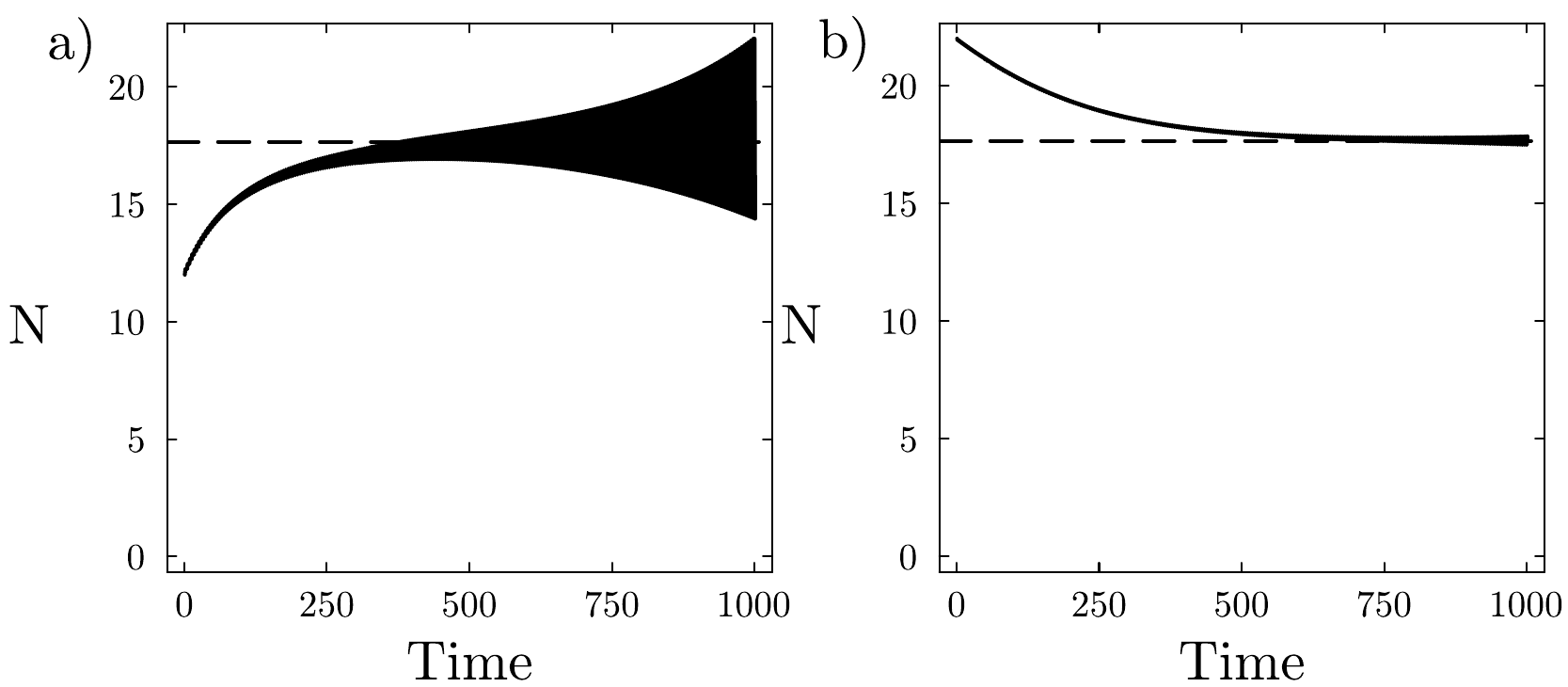}
    \caption{Ricker-Constant time series where $m(\tau)>1-e^{-\alpha-2}$. Dashed lines are $N^*_+$. Solid lines are model trajectories. For both a) and b) $a=70.0$, $p=0.35$, $\tau=3.0$, $\alpha=0.1$, $\beta=0.3$, $b=200$, $K=1.0$. a) The initial conditions for the trajectory is $(N_{-3}, N_{-2}, N_{-1}, N_0)=(12.0, 12.0, 12.0, 12.0)$. b) The initial conditions for the trajectory is $(N_{-3}, N_{-2}, N_{-1}, N_0)=(22.0, 22.0, 22.0, 22.0)$} 
    \label{fig:RC_unstable}
\end{figure}

\section{Proof of Theorem~\ref{thm:PeriodD}}\label{A:thmPeriodD}

%


\begin{proof}
If $\tau\in \mathbb{N}$ for some $n\in \mathbb{N}$, then for $\Vec{v}=(N_t, N_{t-1}, N_{t-2}, \ldots, N_{t-\tau})$, the Jacobian of \eqref{EqT1R} at $N^*>0$ is 
\begin{align*}
J&=
\begin{bmatrix}
   A & 0 & 0 & \ldots & 0 &0 & B \\
1 & 0 & 0 &  \ldots & 0 & 0 & 0\\
0 & 1 & 0 &  \ldots & 0 & 0 & 0\\
\vdots & \vdots & \vdots & \vdots \vdots \vdots & \vdots & \vdots & \vdots \\
0 & 0 & 0 &  \ldots & 0 & 1 & 0\\
\end{bmatrix}\in \mathbb{R}^{(\tau+1)\times (\tau+1)},
\end{align*}
where 
$$A= e^{-\alpha-\beta N^*}(1-\beta  N^*) =(1-\left(a - b e^{-(\tau+1)K}\right)
\overline{p}^{\tau+1})(1-\beta  N^*),$$ 
$B=\left(a - b e^{-(\tau+1)K}\right)
\overline{p}^{\tau+1}$ and \begin{equation}\label{eq:Nstarforproof}
N^*=\frac{1}{\beta}\left\{-\ln\left(1-\left(a - b e^{-(\tau+1)K}\right)
\overline{p}^{\tau+1}\right)-\alpha\right\}.
\end{equation}

The characteristic equation of the Jacobian at $N^*$ is then 
\begin{equation}\label{charact}
\lambda^{\tau+1}-A\lambda^\tau-B=
\lambda^{\tau+1} - (1-\left(a - b e^{-(\tau+1)K}\right)
\overline{p}^{\tau+1})(1-\beta  N^*) \lambda^{\tau} -\left(a - b e^{-(\tau+1)K}\right)
\overline{p}^{\tau+1}=0.
\end{equation}

Assume first that $\tau=2n+1$ for some $n\in \mathbb{N}$. Suppose $\lambda=-1$ is an eigenvalue, then
$$1 + (1-\left(a - b e^{-2(n+1)K}\right)
\overline{p}^{2(n+1)})(1-\beta  N^*) -\left(a - b e^{-2(n+1)K}\right)
\overline{p}^{2(n+1)}=0,$$
i.e., by \eqref{eq:Nstarforproof}, 
$$1 + (1-\left(a - b e^{-2(n+1)K}\right)
\overline{p}^{2(n+1)})(1+\ln\left(1-\left(a - b e^{-2(n+1)K}\right)
\overline{p}^{2(n+1)}\right)+\alpha) -\left(a - b e^{-2(n+1)K}\right)
\overline{p}^{2(n+1)}=0,$$
i.e., 
$$ (1-\left(a - b e^{-2(n+1)K}\right)
\overline{p}^{2(n+1)})(2+\ln\left(1-\left(a - b e^{-2(n+1)K}\right)
\overline{p}^{2(n+1)}\right)+\alpha) =0,$$
i.e., 
$$ \ln\left(1-\left(a - b e^{-2(n+1)K}\right)
\overline{p}^{2(n+1)}\right) =-2-\alpha,$$
which, after solving for $a$ implies that 
$$a=a^*:=\frac{1- e^{-2-\alpha}}{\overline{p}^{2(n+1)}}+ b e^{-2(n+1)K}>0.$$
Furthermore, since we require $m(\tau)=a-be^{-(\tau+1)K}\overline{p}^{\tau+1}<1$, we require 
$$(a^*-be^{-2(n+1)K})\overline{p}^{2(n+1)}<1\qquad \iff \qquad a^*<\frac{1}{\overline{p}^{2(n+1)}} +be^{-2(n+1)K},$$
which is certainly satisfied. Hence, there exists $a=a^*$ such that $\lambda=-1$ is a root of \eqref{charact}.

Assume now that $\tau=2n$ for some $n\in \mathbb{N}$. Suppose $\lambda=-1$ is an eigenvalue, then \eqref{charact} implies
$$-1 - (1-\left(a - b e^{-(2n+1)K}\right)
\overline{p}^{2n+1})(1-\beta  N^*)  -\left(a - b e^{-(2n+1)K}\right)
\overline{p}^{2n+1}=0,$$
which, by \eqref{eq:Nstarforproof},  can be written as
$$-1 - (1-\left(a - b e^{-(2n+1)K}\right)
\overline{p}^{2n+1})(1+ \ln\left(1-\left(a - b e^{-(2n+1)K}\right)
\overline{p}^{2n+1}\right)+\alpha)  -\left(a - b e^{-(2n+1)K}\right)
\overline{p}^{2n+1}=0,$$
i.e., $$ (-1+\left(a - b e^{-(2n+1)K}\right)
\overline{p}^{2n+1})\left\{1+ \ln\left(1-\left(a - b e^{-(2n+1)K}\right)
\overline{p}^{2n+1}\right)+\alpha\right\} =1+\left(a - b e^{-(2n+1)K}\right)
\overline{p}^{2n+1}.$$

Remember that $m=m(\tau)=(a-be^{-(2n+1)K})\overline{p}^{2n+1}$ and $0<m(\tau)<1$. Then, the above equality can be written as
\begin{equation}\label{eq:psineed}
 \Psi(m):=(-1+m)(1+ \ln\left(1-m\right)+\alpha) =1+m.
 \end{equation}
By assumption, we require $1<1+m<2$ and since $\Psi(0)=-(1+\alpha)<0<1+m$, we have, by l'Hopital, 
$$\lim_{m\to 1^-}\Psi(m)=\lim_{m\to 1^-} \frac{1+ \ln\left(1-m\right)+\alpha}{\frac{1}{m-1}}
= \lim_{m\to 1^-} \frac{\frac{-1}{1-m}}{\frac{-1}{(m-1)^2}}=\lim_{m\to 1^-} (1-m) =0<2=\lim_{m\to 1^-}1+m.
$$
Furthermore,  $\frac{\partial \Psi(m)}{\partial m}=2+\alpha+\ln(1-m)=0$ if and only if $m=m_{max}=1-e^{-2-\alpha} \in (0,1)$ and since 
$$\Psi''(m_{max})=\frac{-1}{(1-m_{max})}=\frac{-1}{e^{-2-\alpha}}<0,$$
the maximum occurs at $m=m_{max}$. 
Thus, 
\begin{equation*}
\begin{split}
    \Psi(m)&\leq \Psi(m_{max})=(-1+(1-e^{-2-\alpha}))(1+\ln(1-(1-e^{-2-\alpha}))+\alpha)=    -e^{-2-\alpha}(1+\ln(e^{-2-\alpha})+\alpha)\\
    &=    -e^{-2-\alpha}(-1)=
    e^{-2-\alpha}<1<1+m,
\end{split}
\end{equation*}
disproving the possibility for the existence of $m$ such that  \eqref{eq:psineed} holds. Hence, if $\tau$ is even, $\lambda=-1$ can't be a root of \eqref{charact}.
\end{proof}

\section{Proof of Theorem~\ref{thm:PeriodNS}}\label{A:thmPeriodNS}


\begin{proof}
Let $\tau=2n$ for some $n\in \mathbb{N}$. Note that by the definition of $m(\tau)$ in \eqref{eq:mtau}, if we can find $m(\tau)\in (0,1)$ such that a Neimark--Sacker bifurcation occurs, then we can also find a corresponding value of $a$. 
    By \eqref{charact}, the characteristic equation of the Jacobian evaluated at $N^*$ is given by 
    $$\lambda^{2n+1} - (1-\left(a - b e^{-(2n+1)K}\right)
\overline{p}^{2n+1})(1-\beta  N^*) \lambda^{2n} -\left(a - b e^{-(2n+1)K}\right)
\overline{p}^{2n+1}=0,$$
i.e., with $m(\tau)$ defined in \eqref{eq:mtau}, 
$$\lambda^{2n+1} - (1-m(\tau))(1-\beta  N^*) \lambda^{2n} -m(\tau)=0.$$
By \eqref{eq:Nstarforproof}, this can be written as 
$$\lambda^{2n+1} - (1-m(\tau))(1+\alpha + \ln(1-m(\tau)) \lambda^{2n} -m(\tau)=0.$$
By Theorem~\ref{thm:PeriodD}, there does not exist a feasible $a$ such that there is a period doubling bifurcation. If there exists a transcritical bifurcation, then $\lambda=1$ is a root, that is, 
\begin{align*}
0&=1- 
(1+\alpha + \ln(1-m(\tau))  
+m(\tau)(1+\alpha + \ln(1-m(\tau))  
-m(\tau)\\&=- 
(\alpha + \ln(1-m(\tau))  
+m(\tau)(\alpha + \ln(1-m(\tau)) = (m(\tau)-1)(\alpha + \ln(1-m(\tau)).
\end{align*}
Since $m(\tau)<1$, this can only be satisfied if $\alpha=-\ln(1-m(\tau))$, or, equivalently, $m(\tau)=1-e^{-\alpha}\in (0,1)$. Note that this is exactly the threshold for the existence of the positive equilibrium $N^*$. By Lemma~\ref{cor:N0GASR}, we may therefore consider $m(\tau)>1-e^{-\alpha}$. Thus, no transcritical bifurcation can occur for feasible values of $a$. 
Hence, if one of the Jury conditions is violated for $m(\tau)\in (0,1)$ with $m(\tau)>1-e^{-\alpha}$, then it must be due to a Neimark--Sacker bifurcation.

By the above, the characteristic polynomial at $N^*>0$ is of the form 
$$P(\lambda):=\lambda^{\tau+1}+\sum_{i=0}^{\tau+1}a_i\lambda^{\tau+1-i}\qquad \qquad a_i=\begin{cases}
-m(\tau) & i=\tau+1\\
0 & 2\leq i \leq \tau\\
-(1-m(\tau))(1+\alpha+\ln(1-m(\tau)))& i=1
\end{cases}
$$
Based on \cite[p.~59]{Keshet}, one of the Jury conditions is violated if $|b_n|\leq |b_1|$, where $n=\tau+1$. Since  
$$b_n=1-a_n^2=1-m^2(\tau)>0$$
and 
$$b_1=a_{n-1}-a_na_1=-m(\tau)(1-m(\tau))(1+\alpha+\ln(1-m(\tau)))).$$
Thus, the condition is violated if 
$$1-m^2(\tau)\leq m(\tau)(1-m(\tau))|1+\alpha+\ln(1-m(\tau))|$$
Note that we already assume that $1-e^{-\alpha}<m(\tau)<1$ to guarantee the existence of $N^*$ so that $1+\alpha+\ln(1-m(\tau))<0$. If there exists $m(\tau)$ such that i) $1+\alpha+\ln(1-m(\tau))<0$, ii) $1-e^{-1-\alpha}<m(\tau)<1$,
and iii)
$1-m^2(\tau)\leq -m(\tau)(1-m(\tau))(1+\alpha+\ln(1-m(\tau)))$, then the Jury condition is violated. Note that ii) implies i). 

Condition iii) can be equivalently expressed as
$$1+m(\tau)\leq -m(\tau)(1+\alpha+\ln(1-m(\tau))),$$
i.e., 
$$\frac{-1}{m(\tau)}-2-\alpha\geq \ln(1-m(\tau)).$$
If $m(\tau)>1-e^{-\alpha}$, then it suffices to show that we can find $m(\tau)$ such that 
$$\frac{-1}{1-e^{-\alpha}}-2-\alpha\geq \ln(1-m(\tau)),$$
i.e., 
$$m(\tau)\geq 1-e^{\frac{-1}{1-e^{-\alpha}}-2-\alpha}>1-e^{-\alpha}$$
Hence, choosing $a$ such that $m(\tau)= 1-e^{\frac{-1}{1-e^{-\alpha}}-2-\alpha}$ results in the violation of one of the Jury condition, and therefore is indicative of a  Neimark--Sacker bifurcation. 
\end{proof}

\section{Proof of Theorem~\ref{thm:GASNstarBH}}\label{A:thmGASNstarBH}


\begin{proof}
    Since $F_2(N_t,N_{t-\tau})$ is increasing in both of its variables and the only solutions to 
    $$r=F_2(r,r)\qquad \mbox{and}\qquad R=F_2(R,R)$$
    is $r,R\in \{0,N^*_+\}$ it suffices to show that there exists $a_r, b_r$ such that  $0<a_r<N^*_+< b_r$ and $\mathcal{I}=[a_r, b_r]^{\tau+1}$ is positively invariant and that solutions eventually enter $\mathcal{I}$. 
To show that there exists $a_r$, we note that 
$F_2(a,a)\geq a$ if and only if 
$$\frac{a}{1+\alpha+\beta a}+\frac{Dg(\tau)a}{D(1+D)^{\tau+1}+((1+D)^{\tau+1}-1)Cg(\tau)a}\geq a, $$
i.e., for $a>0$, 
$$\frac{1}{1+\alpha+\beta a}+\frac{Dg(\tau)}{D(1+D)^{\tau+1}+((1+D)^{\tau+1}-1)Cg(\tau)a}\geq 1, $$
which can be equivalently expressed as 
$q(a):=\gamma_0+\gamma_1a + \gamma_2a^2\leq 0,$
for 
\begin{align*}
    \gamma_0&=-Dg(\tau)(1+\alpha)+\alpha D(1+D)^{\tau+1},\\
    \gamma_1&=((1+D)^{\tau+1}-1)Cg(\tau) \alpha+
\beta  D\left((1+D)^{\tau+1}
- g(\tau)\right),\\
\gamma_2&=\beta ((1+D)^{\tau+1}-1)Cg(\tau)>0.
\end{align*}
By assumption \eqref{eq:condNstar}, 
$$\frac{\alpha}{1+\alpha} <\frac{g(\tau)}{(1+D)^{\tau+1}},$$
so that $$\alpha (1+D)^{\tau+1}<g(\tau)(1+\alpha),$$
and, therefore, 
$$\gamma_0=-Dg(\tau)(1+\alpha)+\alpha D(1+D)^{\tau+1}<0.$$

 Thus, there exists a positive root $a_\ell$, such that $q(a_\ell)=0$. Then,  $q(a)\leq 0$ for all $a\in (0, a_\ell)$ and, therefore, $F_2(a,a)\geq a_\ell$ for all $a\leq a_\ell$.  

To show that there exists $b_r>a_r$ such that $F_2(b_r, b_r)\leq b_r$, we note that $F_2(b,b)\leq b$ if and only if 
$$\frac{b}{1+\alpha+\beta b}+\frac{Dg(\tau)b}{D(1+D)^{\tau+1}+((1+D)^{\tau+1)-1)Cg(\tau)b}}\leq b. $$
Proceeding similarly as above, we can show that this is equivalent to 
$0\leq q(b)$. 
Again, since $\lim_{b\to \infty}q(b)=\infty$ and $q(0)<0$, 
$q(b)\geq 0$ for all $b\geq a_\ell$. Thus, we can choose $a_r\in (0, a_\ell)$ arbitrarily small and $b_r>a_\ell$ arbitrarily big.  This allows to choose $a_r<N^*_+<b_r$ such that $[a_r, b_r]$ is positively invariant. For a given solution, we choose $a_r=\min\{a_\ell, \frac{1}{2}N^*_+, \min_{i\in \{-\tau, \ldots, 0\}}N_i\}$ and $b_r=\max\{a_\ell, 2N^*_+, \max_{i\in \{-\tau, \ldots, 0\}}N_i\}$ to ensure that the solution remains in $[a_r, b_r]$ and, by Theorem~\ref{A:ThmLadas}, converges to $N^*_+$. This completes the proof. 
\end{proof}

\section{Proof of no period doubling bifurcation for "Ricker-Ricker'' model with even $\tau$}\label{A:thmRR_PD}
\begin{theorem}\label{thm:RR_PD}
     Consider \eqref{eqn:RickerRicker} with assumptions $a,b,K,\alpha,D>0$, and $\tau\geq0$, so $g(\tau)>0$. When $\tau$ is even, there is no $a>0$ where a period doubling bifurcation occurs for the trivial equilibrium.
\end{theorem}
\begin{proof}
   The Jacobian evaluated at the trivial equilibrium has the general format
\begin{equation*} 
 J=\begin{bmatrix}
e^{-\alpha} & 0 & 0 & \ldots & 0 & e^{-D}\\
 g(\tau)e^{-D} & 0 & 0 & \ldots & 0 & 0\\
0 & e^{-D} & 0 & \ldots & 0 & 0\\
\ldots & \ldots & \ldots & \vdots\vdots\vdots & \ldots & \ldots\\
0 & 0 & 0 & \ldots & e^{-D} & 0\\
\end{bmatrix}.
\end{equation*}

Let $\tau$ be even. For the sake of contradiction, suppose $\lambda=-1$ is an eigenvalue. Then, the first component of the corresponding eigenvector satisfies
$$v_1=-e^{-\alpha}v_1-(-1)^{\tau}e^{-D(\tau+1)}g(\tau)v_1,$$
i.e., $$\left\{1+e^{-\alpha}+e^{-D(\tau+1)}g(\tau)\right\}v_1=0,$$
which is only satisfied if $v_1=0$. In that case however, $v_i=0$ for all $i\in \{2,\ldots, \tau+1\}$, violating the property of an eigenvector. Hence, $\lambda=-1$ cannot be an eigenvalue. 
\end{proof}

\newpage

\section{Ricker-Ricker $C=\beta$} \label{A:cequalbeta}
\begin{figure}[H]
    \centering
    \includegraphics[width=0.5\linewidth]{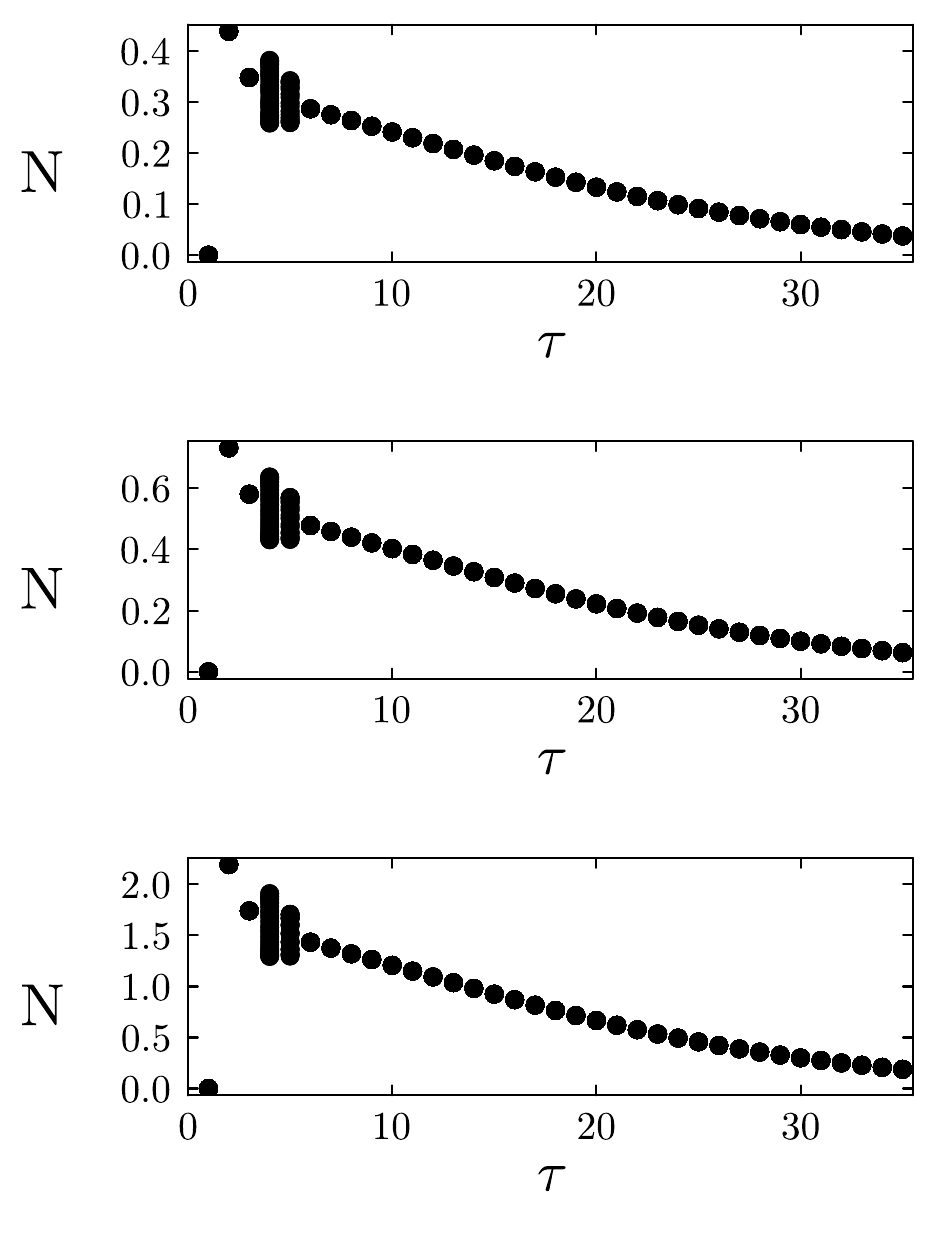}
    \caption{Top) Ricker-Ricker orbit diagram with $C=\beta=0.6$. Middle) Ricker-Ricker orbit diagram with $C=\beta=0.3$ Bottom) Ricker-Ricker orbit diagram with $C=\beta=0.1$. For all plots, the base parameters are $C=0.3$, $\beta=0.3$,  $D=0.1$, $\alpha=0.1$, $a=26.5$, $b=200.0$, $K=1.0$} 
    \label{fig:RR_Cequalbeta}
\end{figure}
\newpage

\bibliographystyle{unsrt}
\bibliography{Lib}

\end{document}